\numberwithin{equation}{section}
\newtheorem{theorem}{Theorem}[section]
\newtheorem{algorithm}{Algorithm}[section]
\newtheorem{assumption}{Assumption}[section]
\newtheorem{example}{Example}
\newtheorem{proposition}{Proposition}
\newtheorem{remark}{Remark}[section]
\newtheorem{lemma}{Lemma}[section]
\theoremstyle{definition}
\renewcommand{\textheight}{22cm}
\DeclareMathOperator{\E}{\text{E}}
\renewcommand{\textheight}{22cm}
\renewcommand{\hat}{\widehat}
\renewcommand{\tilde}{\widetilde}
\newcommand{\En}{\mathbb{E}_n}
\newcommand{\Enk}{\mathbb{E}_{n,k}}
\newcommand{\Ep}{{\mathrm{E}_{P}}}
\begin{document}
\interfootnotelinepenalty=10000
\title{Identification-robust inference for the LATE with\\ high-dimensional covariates\thanks{\footnotesize\setlength{\baselineskip}{4.4mm} First arXiv date: February 20, 2023\smallskip}}

\date{\today}
\author{Yukun Ma\thanks{\footnotesize\setlength{\baselineskip}{4.4mm} Yukun Ma: yma69@ur.rochester.edu. Department of Economics, University of Rochester, 238 Harkness Hall, Rochester, NY 14627\smallskip}
\thanks{\footnotesize\setlength{\baselineskip}{4.4mm} We thank  Bertille Antoine, Harold Chiang, Jean-Marie Dufour, Atsushi Inoue, Edward Kennedy, Vadim Marmer, Anna Mikusheva, Whitney Newey, Yuya Sasaki, Takuya Ura, and seminar participants at Pennsylvania State University, Georgia Institute of Technology, Florida State University, NY Camp Econometrics XVII, the 2023 Asia Meeting of the Econometric Society, the 38th meeting of European Economic Association, the 33rd Midwest Econometrics Group, the 38th Canadian Econometrics Study Group, the 93rd Annual Meeting of the Southern Economic Association, and the IAAE 2024 Annual Conference. All remaining errors are our own.\bigskip}
}
\maketitle

\begin{abstract}\setlength{\baselineskip}{7.1mm} 
This paper presents an inference method for the local average treatment effect (LATE) in the presence of high-dimensional covariates, regardless of the strength of identification. We propose an orthogonalized Anderson-Rubin test statistic that maintains uniformly valid asymptotic size. We provide an easy-to-implement algorithm for inferring the high-dimensional LATE by inverting our test statistic and employing the double/debiased machine learning method. Simulation results show that our test achieves better size control under both weak identification and high dimensionality, outperforming conventional alternatives. Applying the proposed method to railroad and population data to study the effect of railroad access on urban population growth, we observe wider confidence intervals than those obtained using conventional methods.
\smallskip\\
{\bf Keywords:} Weak identification, local average treatment effect, double/debiased machine learning
\\
{\bf JEL Codes:} C12, C21, C26, C55
\end{abstract}
\newpage
\section{Introduction}
We propose an orthogonalized Anderson-Rubin (AR) test statistic with uniformly correct asymptotic size. The proposed method exhibits robustness against weak identification and high dimensionality in the LATE framework.
Furthermore, we provide a practical guideline, including a step-by-step algorithm, for drawing inferences for the LATE with high-dimensional controls. This algorithm entails (1) inverting the proposed statistic to derive confidence intervals and (2) applying machine learning approaches to overcome the regularization bias and overfitting within the high-dimensional model. The objective is motivated by the persistent challenges of the weak-instrument problem in empirical research and the prevalence of rich data in the contemporary big data era.

In models where certain explanatory variables correlate with the error term, least squares estimators yield inconsistent coefficient estimates. To address this, instrumental variables are often employed, as they are uncorrelated with the error term but correlated with the endogenous explanatory variables. Nonetheless, if the correlation between the instruments and endogenous variables is weak, IV estimation becomes imprecise, resulting in unreliable tests and confidence intervals. This presents the weak-instrument problem, a notable concern in empirical practice.

Empirical researchers often seek to estimate the causal effects of endogenous regressors using instrumental variables regression. A prominent example is the influential study by \cite{angrist1991does}, which uses quarter of birth as an instrument to estimate the returns to schooling. However, \cite{bound1995problems} argue that Angrist and Krueger's results may be unreliable due to the weak correlation between one's quarter of birth and their education attainment.
Moreover, the common practice of pretesting, with a rule-of-thumb F-statistic threshold of 10 proposed by \cite{staiger1997instrumental}, is challenged by \cite{lee2022valid}. In their paper, they introduce a novel critical value function and reveal that achieving a true 5 percent test with a critical value of 1.96 instead requires an F exceeding 104.7. Applying this criterion to their sample of 61 \emph{American Economic Review} papers published between 2013 and 2019, they find that a quarter of the specifications initially presumed to be statistically significant are, in fact, insignificant.

\cite{imbens1994identification} introduced a widely used framework for estimating the LATE. This parameter captures the treatment effect for the subgroup of compliers who take the treatment if and only if they are assigned to the treatment group. The use of instrumental variables to estimate LATE has received considerable attention in the literature.
Within the LATE framework, weak identification emerges either when instruments correlate weakly with endogenous regressors or when the proportion of compliers is relatively small. We are particularly interested in exploring and addressing this weak identification issue within the LATE framework for several reasons. On the one hand, while compliers might be a minority, they often represent the population of critical interest to policymakers. Take the Vietnam-era draft lottery in \cite{angrist1990lifetime} as an example. Even though compliers constituted a minority, approximately 0.10 to 0.16, their experiences provide valuable insights into the draft's direct consequences for individuals at the decision-making margin, thereby shedding light on the immediate effects of veteran status on civilian earnings. On the other hand, natural experiments often present challenges of weak identification, especially when researchers have no control over the size of the complier group, as highlighted by the instrument strength concerns in \cite{angrist1991does}. Our objective is to yield reliable outcomes regardless of the proportion of individuals affected by the intervention, both theoretically and practically. Understanding this impact is crucial for expanding or scaling such interventions.

The weak-instrument literature has produced a range of econometric techniques for estimating and conducting inference on a structural parameter $\theta$ defined by moment conditions. In these models, one specifies a score $\psi(W;\theta)$ such that $\Ep[\psi(W;\theta_0)]=0$ at the true value $\theta_0$. We introduce an identification-robust test of the null hypothesis $H_0(\theta_0): \Ep[\psi(W;\theta_0)]=0$, which is equivalent to $H_0:\theta=\theta_0$. This problem has been addressed by, among others, \cite{anderson1949estimation}, \cite{stock2000gmm}, \cite{kleibergen2002pivotal}, and \cite{andrews2016conditional}. While these studies present methods tailored for inference about target parameters in the presence of weak identification, they overlook models equipped with high-dimensional covariates. Identification-robust testing procedures are particularly important in high-dimensional settings, as it is unclear how to pretest for weak identification in such cases.
On the one hand, accounting for a large number of covariates can bolster the validity of the IV within the LATE framework. On the other hand, existing identification-robust methods face challenges in the presence of many covariates, especially due to size distortions when many covariates are included.

Our proposed method addresses these challenges by accommodating any arbitrary $N/p$ ratio, where $N$ is the sample size and $p$ is the number of covariates. This feature ensures that our method maintains correct size control regardless of the covariate dimensionality. Our technical innovation diverges from the typical approach of proposing a consistent LATE estimator. Instead, we introduce a stochastic process and its uniformly consistent estimator over the probability law under the null hypothesis. Building on this, we introduce a test statistic that exhibits uniformly correct asymptotic size.
Our contribution generalizes prior research by developing an identification-robust statistic that employs machine learning techniques, enabling us to explore a broader set of controls than previously possible.

Our simulation results demonstrate that the proposed orthogonalized AR method performs reliably across identification strengths and covariate dimensionality, outperforming both the conventional AR test and the double/debiased machine learning (DML) method of \citet{CCDDHNR18}. The conventional AR test exhibits severe size inflation as dimensionality increases and becomes infeasible when the number of covariates approaches or exceeds the sample size. The DML method, while valid for high-dimensional nuisance estimation, shows size deflation and over-coverage when instruments are weak or moderately weak, resulting in under-rejection. In contrast, the proposed method maintains empirical size close to the nominal level across sample sizes, identification strengths, and dimensionality. Under strong identification, its power is comparable to that of the DML method, while under weak identification it remains stable and correctly sized. These findings indicate that the proposed procedure combines the identification-robustness of the AR framework with regularized nuisance estimation, providing reliable inference across a wide range of data environments.

We further evaluate the proposed method in two empirical applications. The first, following \citet{hornung2015railroads}, examines the effect of railroad access on city population growth in nineteenth-century Prussia. The second, following \citet{ambrus2020loss}, investigates the long-term effect of cholera-related deaths on property values in nineteenth-century London. In both applications, the proposed method yields confidence intervals that are wider than those from the conventional AR test and the DML method, which is consistent with identification-robust inference when many covariates are included and instrument strength may be limited. Within our implementation, the resulting confidence intervals and inference outcomes are stable across regularization choices, including Lasso, Ridge, and Elastic Net, and across alternative control sets. Several effects that appear statistically significant under the AR test or the DML method are not significant under the proposed method. Taken together, these findings indicate that identification-robust inference with regularized nuisance estimation may alter statistical significance assessments in settings with many covariates and uncertain instrument strength.

This paper advances the well-established literature on weak identification by providing procedures for inference and the construction of confidence intervals for the LATE parameters in high-dimensional models. We develop a test statistic with uniformly correct asymptotic size. Furthermore, we provide a practical guideline, complete with a step-by-step algorithm, for drawing inferences and determining confidence intervals for the high-dimensional LATE using machine learning methods.

\subsection{Relations to the Literature}
This paper contributes to the literature on weak identification and high-dimensional models by providing a test tailored for making inferences regarding the LATE in the presence of high-dimensional covariates.

Since \cite{staiger1997instrumental} introduced the ``local-to-zero'' framework for weak instruments, a sequence of identification-robust tests has emerged.\footnote{See works by \cite{bound1995problems},
 \cite{kleibergen2002pivotal}, 
\cite{andrews2006optimal},
\cite{moreira2009tests},
 \cite{andrews2019identification},
 \cite{moreira2019optimal},
\cite{mikusheva2022inference} for various identification-robust inference methods developed over the past three decades.
For detailed surveys on the weak identification literature, see
 \cite{stock2002survey},
 \cite{dufour2003identification},
\cite{andrews2005inference}, and \cite{andrews2019weak}.} 
To test if the mean function equals zero at the true parameter value $\theta_0$, 
\cite{stock2000gmm} pioneered the concept of weakly identified Generalized Method of Moments (GMM). They introduced the $S$ statistic as the quadratic form of the objective function, which is a generalized form of the AR test statistic as in \cite{anderson1949estimation} and follows a $\chi^2$ asymptotic distribution under the null hypothesis. Later, \cite{kleibergen2005testing} proposed the $K$ statistic, capitalizing on the asymptotic independence between the Jacobian estimator of the objective function and the sample average of the moment. \cite{moreira2003conditional} sharpened power properties by constructing a conditional likelihood-ratio test that delivers exact size and locally most-powerful unbiased inference in linear IV models. 
\cite{andrews2016conditional} further generalized the conditional approach by conditioning on the entire observed path of the sample moment process, obtaining procedures that remain valid regardless of identification strength and that achieve near-efficient power in both point-identified and set-identified designs. In the just-identified LATE setting with a single endogenous regressor and a single instrument, these statistics reduce algebraically to the AR test. We build on this equivalence and develop an orthogonalized AR procedure that retains identification-robust guarantees while accommodating high-dimensional nuisance learning through Neyman orthogonal scores and cross-fitting. This framework preserves the simplicity of AR, delivers size-correct inference under weak or strong first-stage relationships, and integrates naturally with modern control-selection methods introduced below.

Over the past decade, there has been a surge in the literature on machine learning-based econometric methods for high-dimensional models. 
\cite{belloni2015uniform} advanced a Neyman orthogonal score for a Z-estimation framework in the presence of high-dimensional nuisance parameters. Subsequently, \cite{belloni2018uniformly} constructed a confidence interval rooted in the Neyman orthogonality condition in the high-dimensional setting. In a series of contributions, Chernozhukov et al. (\citeyear{chernozhukov2013gaussian}, \citeyear{chernozhukov2016empirical}, \citeyear{chernozhukov2017central}) established the Central Limit Theorem (CLT) for high-dimensional models using the Gaussian approximation approach. \cite{belloni2014high} presented an overview of techniques for estimating and inferring in high-dimensional datasets.
\cite{CCDDHNR18} introduced the DML methodology in the i.i.d. setting. They combined the Neyman orthogonality condition\footnote{We refer readers to \cite{pfanzagl1985contributions}, \cite{bickel1993efficient}, and \cite{newey1994asymptotic} for the development of the Neyman orthogonal score.} and cross-fitting methods. Most recently, \cite{chernozhukov2022locally} outlined a general construction for the doubly robust moment function, ensuring robustness against nonparametric or high-dimensional first steps. However, none of these papers on high-dimensional models consider weak identification issues.  

This paper also relates to the literature on instrumental variables estimation of the LATE. \cite{imbens1994identification} pioneered the introduction of a simple instrumental variables estimand for the average treatment effect among compliers. Motivated by \cite{angrist1991does}, \cite{angrist1995two} broadened the LATE framework to accommodate ordered treatments, such as years of schooling. A subsequent wave of research explores the incorporation of covariates into LATE estimation, including \cite{imbens1997bayesian}, \cite{angrist2000interpretation}, \cite{hirano2000assessing}, \cite{yau2001inference}, and \cite{abadie2003semiparametric}, using either parametric or semiparametric estimation approaches.
\cite{tan2006regression} proposed a LATE estimator with robustness against the misspecification of either the propensity score model or the outcome regression model. \cite{hong2010semiparametric} derived the semiparametric efficiency bounds for conditional and unconditional LATE.
\cite{frolich2007nonparametric} and \cite{ogburn2015doubly} provided a fully nonparametric $\sqrt{N}$-consistent and efficient estimator for the LATE with confounding covariates. 
More recently, \cite{belloni2017program} presented an efficient estimator alongside reliable confidence bands for the LATE with nonparametric/high-dimensional components, using the orthogonal moment condition and machine learning method. \cite{CCDDHNR18} incorporated their proposed DML method into the LATE framework, achieving an $\sqrt{N}$-consistent estimator for the LATE in the presence of high-dimensional covariates.
 \cite{angrist2022empirical} underscored the importance of the LATE framework for causal inferences through empirical demonstrations.

Previous work on treatment effect estimation with many covariates included \cite{oprescu2019orthogonal}, who proposed the orthogonal random forest to estimate heterogeneous treatment effects under unconfoundedness while allowing for nonparametric flexibility. \citet{qiu2021inference} examined covariate-specific LATEs in high dimensional settings by imposing a parametric generalized linear model and applying variable selection and debiasing techniques to recover $\sqrt{N}$-consistency. \citet{sun2022high} developed a doubly robust LATE estimator based on regularized, calibrated, and weighted $M$-estimators. \citet{boot2024inference} studied the case of many and possibly weak instruments by introducing the saturated instrumental variable estimator, which debiased two-stage least squares in a fully interacted design. While their approach targeted a weighted, stratum-specific LATE, our method focused on the unconditional LATE under the assumption of strong monotonicity. Robustness was achieved through orthogonalization and the use of machine learning for estimating nuisance components, rather than relying on saturation. Although these methods enabled flexible, high-dimensional nuisance adjustment, their inferential guarantees depended on a well-identified first stage or correctly specified functional forms.

Empirical instrumental variables studies typically estimate the LATE conditional on a vector of pre-treatment covariates \(X\) and then average over the empirical distribution of \(X\) to obtain the unconditional LATE.  Covariate adjustment relaxes the exogeneity condition to \(Z \perp\!\!\!\perp (Y(d), D(z)) \mid X\), which makes the independence and exclusion assumptions more plausible when the instrument is not fully randomized. It also absorbs heterogeneity, reducing residual variance and improving efficiency, and can enhance instrument sharpness. The findings of \cite{kennedy2020sharp} demonstrated that a rich set of covariates predicting compliance can bolster causal identification even when instruments are weak by increasing instrument sharpness. These considerations explain the widespread practice of including many control variables in LATE analyses and motivate our high-dimensional framework, which delivers identification-robust inference uniformly in the number of controls \(p\).  Accordingly, we target the unconditional LATE while allowing $p$ to grow with, or even exceed the sample size $N$. To the best of our knowledge, this paper is the first to provide identification-robust inference for the LATE under high-dimensional covariates, without imposing assumptions on instrument strength.

\subsection{Outline}
The rest of the paper is structured as follows. Section \ref{section_overview} provides a practical guideline for implementing the proposed algorithm. Section \ref{section_theory} presents the theoretical results. Section \ref{section_simulation} reports the Monte Carlo simulation findings. Section \ref{section_application} offers two empirical illustrations. Section \ref{section_conclusion} concludes. The appendix contains all proofs of the theorems and lemmas, as well as an extension beyond the LATE framework to a general instrumental variables model.
\section{Overview}
\label{section_overview}
In this section, we provide a brief overview of our proposed method without theories. This overview serves as a concise guideline in practice. 
\subsection{Notation and Target Parameter}
\label{subsection_notation}
 Consider the standard instrumental variable setup where the researcher has access to a dataset of N i.i.d. observations, represented as 
 $\{W_i=(Y_i,D_i,Z_i,X_i')\}_{i=1}^N$. The outcome of interest for unit $i$ is denoted by $Y_i$. Let $D_i\in\{0,1\}$ be a binary indicator of the receipt of treatment for unit $i$. The instrumental variable $Z_i$ is also binary and can be interpreted, for example, as the offer of treatment. This instrument is randomly assigned conditional on the covariates.  Let $X_i=(X_{i1},\cdots,X_{ip})'$ denote the $p$-dimensional vector of observed controls for unit $i$, where $X_{ij}$ represents the value of the $j$th covariate for unit $i$. Notably, the dimensionality $p$ can be substantially greater than the available sample size, $N$.  For each unit \(i\), let \(D_i(z)\) denote the potential treatment received if the instrument takes the value \(Z_i=z\), and let \(Y_i(z,d)\) denote the potential outcome if \((Z_i,D_i)=(z,d)\). The causal parameter of interest is the LATE, $\theta_{\text{LATE}} = \Ep\!\left[Y_i(1,1)-Y_i(0,0)\mid D_i(1)>D_i(0)\right]$. Under the exclusion restriction, \(Y_i(z,d)=Y_i(d)\), the estimand simplifies to
$\theta_{\text{LATE}} = \Ep\!\left[Y_i(1)-Y_i(0)\mid D_i(1)>D_i(0)\right].$
To streamline exposition, we adopt a binary representation for scalar $D$ and scalar $Z$. However, it is important to highlight that our framework can be extended to encompass broader contexts, including scenarios with ordered treatments like years of schooling, or when dealing with vector-valued $D$ and $Z$ as encountered in general instrumental variables models.  Appendix \ref{appendix_general} develops the extension to general moment-restriction models.

Let $\{\mathcal{P}_N\}_N$ denote a sequence of probability laws associated with $\{W_i\}_i$. As the sample size $N$ grows,
our analysis allows for an increasing dimensionality of $W_i$. Here, $P=P_N\in \mathcal{P}_N$ is defined with respect to a specific sample size $N$, and $\Ep$ stands for the expected value under the law $P$. For any set $B$, its complementary set is given by $B^c=\{1,\cdots,N\}\setminus B$, and $|B|$ represents the size or cardinality of $B$. We introduce the subsample expectation operator defined as $\mathbb{E}_B[\cdot]:=\frac{1}{|B|}\sum_{i\in B}[\cdot]$.
\subsection{Anderson-Rubin-Type Neyman Orthogonal Score}
\label{subsection_score}
We model the random vector $W=(Y,D,Z,X')'$ as:
\begin{alignat}{3}
& D=m_0(Z,X)+v, \quad &&\Ep[v|Z,X]=0, \quad &&(\text{First stage}) \label{model_first_stage}
\\
& Y=g_0(Z,X)+u, \quad &&\Ep[u|Z,X]=0, \quad &&(\text{Reduced form}) \label{model_reduce_form}
\\
& Z=p_0(X)+e, \quad &&\Ep[e|X]=0, \quad &&(\text{Propensity score})
\label{model_propensity_score}
\end{alignat}
where $m_0$ is a function that maps the support of $(Z,X)$ to $(\varepsilon,1-\varepsilon)$, $g_0$ is a function that maps the support of $(Z,X)$ to $\mathbb{R}$, $p_0$ is a function that maps the support of $X$ to $(\varepsilon,1-\varepsilon)$ for some $\varepsilon\in (0,1/2)$, and $v, u, e$ are error terms. We do not impose any parametric assumptions\footnote{\cite{blandhol2022tsls} demonstrated that 2SLS specifications can only have a LATE interpretation when controlling for ``rich'' covariates in a nonparametric manner.} on the form of the $m_0$, $g_0$, and $p_0$ functions here.

The population LATE proposed by \cite{tan2006regression} is given by 
\begin{align}
\label{equation_LATE_formula}
\theta_\text{LATE}=\frac{\Ep[g_0(1,X)-g_0(0,X)]+\Ep\left[\frac{Z(Y-g_0(1,X))}{p_0(X)}\right]-\Ep\left[\frac{(1-Z)(Y-g_0(0,X))}{1-p_0(X)}\right]}{\Ep[m_0(1,X)-m_0(0,X)]+\Ep\left[\frac{Z(D-m_0(1,X))}{p_0(X)}\right]-\Ep\left[\frac{(1-Z)(D-m_0(0,X))}{1-p_0(X)}\right]}.
\end{align} 
The numerator includes the intent-to-treat (ITT) component along with augmentation terms that ensure robustness and orthogonality, while the denominator contains the compliance probability with analogous augmentation. The estimand in equation \eqref{equation_LATE_formula} coincides with the causal LATE defined in Section \ref{subsection_notation}. Under the identification conditions for LATE, the augmentation terms have mean zero, implying that the numerator simplifies to the ITT effect and the denominator to the compliance probability. Let us define the compliance probability as $B_N=\Ep[m_0(1,X)-m_0(0,X)]$. We retain the augmentation to construct a Neyman orthogonal score, which preserves the estimand but facilitates robust inference.

 The standard normal distribution of the LATE estimator can be derived using the delta method, which linearizes the LATE estimator with respect to the estimators of the numerator and denominator in equation \eqref{equation_LATE_formula}. In line with the weak-instrument literature, we model weak identification by allowing the denominator to shrink so that the concentration parameter remains bounded as stated in Remark \ref{remark_source_weak_ID}. This corresponds to a small compliance probability.
In such instances, the standard normal approximation fails because the LATE estimator is highly nonlinear with respect to the denominator estimator as the denominator approaches zero. 
To establish valid hypothesis tests and confidence sets for LATE without considering identification strength, we consider the function $\psi$ defined by 
\begin{align}
\label{equation_score_LATE}
\psi(W;\theta,\eta)
&=g(1,X)-g(0,X)+\frac{Z(Y-g(1,X))}{p(X)}-\frac{(1-Z)(Y-g(0,X))}{1-p(X)}
\\
&
\nonumber
-\theta\times\left(
m(1,X)-m(0,X)+\frac{Z(D-m(1,X))}{p(X)}-\frac{(1-Z)(D-m(0,X))}{1-p(X)}
\right),
\end{align}
where $W=(Y,D,Z,X')'$, $\theta\in\Theta$ is our target parameter LATE, with $\Theta$ being a compact set on $\mathbb{R}$, and $\eta=(g,m,p)\in\mathcal{T}$\footnote{$\mathcal{T}$ is assumed to be a convex set because we want to ensure that $\psi(W;\theta_0,\eta_0+r(\eta-\eta_0))$ is well defined. Given that $\mathcal{T}$ is convex, $\eta_0+r(\eta-\eta_0)=(1-r)\eta_0+r\eta\in \mathcal{T}$ for all $r\in[0,1)$ and $\eta\in\mathcal{T}$.} are the nuisance parameters.  
In Section \ref{section_theory}, we demonstrate that the function $\psi$ adheres to the Neyman orthogonality condition. It is important to note that the score $\psi$ satisfies the moment condition $\Ep[\psi(W;\theta_0,\eta_0)]=0$, where $\theta_0$ and $\eta_0$ are the true values of $\theta$ and $\eta$, respectively.
The orthogonal score is the same as that in Section 5.2 in \cite{CCDDHNR18}, specialized to the LATE setting. With these properties, we can describe the function $\psi$ as an Anderson-Rubin-type (AR-type) Neyman orthogonal score function for the model \eqref{model_first_stage}-\eqref{model_propensity_score}.

\begin{remark}[Source of weak identification]\label{remark_source_weak_ID}
Let \(S(\theta):=\Ep[\psi(W;\theta,\eta_0)]\). The identifying signal is the slope of this population moment with respect to \(\theta\). Evaluated at \(\theta_0\),
\[
\partial_\theta S(\theta)\big|_{\theta=\theta_0}
= -\,\Ep\!\Big[m_0(1,X)-m_0(0,X)
+ \frac{Z(D-m_0(1,X))}{p_0(X)}
- \frac{(1-Z)(D-m_0(0,X))}{1-p_0(X)}\Big].
\]
Under the models in \eqref{model_first_stage}--\eqref{model_propensity_score} and the overlap condition, the augmentation terms have mean zero by iterated expectations. Consequently,
$
-\partial_\theta S(\theta_0)=\Ep\!\big[m_0(1,X)-m_0(0,X)\big]= B_N,
$
which equals the compliance probability. Weak identification arises when \(B_N=O(N^{-1/2})\), so that \(S(\theta)\) is nearly flat in a neighborhood of \(\theta_0\).
An analogue of the linear IV concentration parameter that summarizes identification strength is
\[
\kappa_N^2 \;:=\; \frac{N\,B_N^{2}}{\Ep\!\left[\operatorname{Var}(D\mid Z,X)\right]}.
\]
Under overlap, \(\Ep[\operatorname{Var}(D\mid Z,X)]\) is bounded above and below by positive constants. Hence \(\kappa_N^2=O(1)\) if and only if \(|\partial_\theta S(\theta_0)|=O(N^{-1/2})\), which corresponds to the weak identification regime in \citet{stock2000gmm}. The AR-based procedure below maintains correct size uniformly over \(\kappa_N^2\), including this regime.
\end{remark}

\subsection{Inference Procedure}
We next introduce how to make inferences about the target parameter $\theta\in\Theta$. For each candidate value $\theta_0$, we test the moment condition $H_0(\theta_0): \Ep[\psi(W;\theta_0,\eta_0)]=0$ against $H_1(\theta_0): \Ep[\psi(W;\theta_0,\eta_0)]\neq 0$. Because the hypothesis is expressed directly in terms of the population moment, it imposes no restriction on the strength of identification. The test is valid whether the compliance probability is large, or small.

Initially, we estimate the first-stage nuisance parameters $\eta$, using some machine learning methods.
With a fixed positive integer $K>1$, we randomly partition $\{1,\cdots,N\}$ into $K$ parts, denoted as $\{I_k\}_{k=1}^K$. For each $k\in\{1,\cdots,K\}$, the nuisance parameter estimate $\hat{\eta}_k$ is computed using the subsample of those observations with index $i\in I_k^c$. Subsequently, we employ the cross-fitting/data-splitting method, as suggested by \cite{CCDDHNR18}, to compute the covariance estimator of the process $\sqrt{N}\psi(W_i;\cdot,\eta_0)$, which is expressed as 
 \begin{small}
 \begin{align}
 \label{equation_covariance_estimator}
\hat{\Omega}(\theta_1,\theta_2)=\frac{1}{N}\sum_{k=1}^K\sum_{i\in I_k}\psi(W_i;\theta_1,\hat{\eta}_k)\psi(W_i;\theta_2,\hat{\eta}_k)-\frac{1}{N^2}\sum_{k=1}^K\sum_{k'=1}^K\sum_{i\in I_k, i'\in I_{k'}}\psi(W_i;\theta_1,\hat{\eta}_k)\psi(W_{i'};\theta_2,\hat{\eta}_{k'}),
 \end{align}
  \end{small}
for $\theta_1,\theta_2\in\Theta$.
Observe that $\hat{\Omega}(\theta_1,\theta_2)$ is computed using the sample of observations with index $i\in I_k$ and this computation is repeated $K$ times. For a candidate value $\theta_0$, define the cross-fitted sample moment $\hat{q}_N(\theta_0)=\frac{1}{N}\sum_{k=1}^K\sum_{i\in I_k}\psi(W_i;\theta_0,\hat\eta_k)$. 
Since $\hat{\Omega}(\theta_0,\theta_0)$ is a scalar in the just-identified LATE setting, we form the statistic
 \begin{align}
 \label{equation_test_statistic}
 AR(\theta_0) = \frac{N\hat{q}_N(\theta_0)^2}{\hat{\Omega}(\theta_0,\theta_0)},
 \end{align}
and refer to it as the orthogonalized AR statistic.
Under the null hypothesis, $AR(\theta_0)$ is asymptotically $\chi^2_1$ uniformly over the strength of identification. We reject $H_0(\theta_0)$ at level $\alpha$ whenever $AR(\theta_0)>\chi^2_{1,1-\alpha}$, where $\chi^2_{1,1-\alpha}$ denotes the $(1-\alpha)$-quantile of the $\chi^2_1$ distribution. Inverting the point-wise tests over $\theta\in\Theta$ yields a $(1-\alpha)$ confidence set that remains valid whether the compliance probability is large or small.

We specifically examine a logit model class where a binary outcome $D_i$, denoting individual $i$'s receipt of treatment, is determined by the treatment offer, $Z_i$, and a set of $p$-dimensional covariates, $X_i$. Moreover, we employ the logit model to estimate the propensity score and conduct linear regression analysis to estimate the outcome regression. The models can be expressed as:
\begin{align*}
&\Ep[D_i|Z_i,X_i]=\Lambda(Z_i\beta_{11}^0+X_i'\beta_{12}^0),
\\
&\Ep[Z_i|X_i]=\Lambda(X_i'\gamma^0),
\\
&
\Ep[Y_i|Z_i,X_i]=Z_i\beta_{21}^0+X_i'\beta_{22}^0,
\end{align*}
where $\Lambda$ denotes the logistic CDF defined by $\Lambda(t)=\exp(t)/(1+\exp(t))$ for all $t\in\mathbb{R}$, and the true nuisance parameter vector $\eta_0=(\beta_{11}^0,\beta_{12}^0,\beta_{21}^0,\beta_{22}^0,\gamma^0)$. The log-likelihood functions for the logit model are $L_1(\beta_{11},\beta_{12})=\mathbb{E}_N[L_1(W_i;\beta_{11},\beta_{12})]$ and $L_2(\gamma)=\mathbb{E}_N[L_2(W_i;\gamma)]$, where $L_1(W_i;\beta_{11},\beta_{12})=\log(1+\exp(Z_i\beta_{11}+X_i'\beta_{12}))-D_i(Z_i\beta_{11}+X_i'\beta_{12})$ and $L_2(W_i;\gamma)=\log(1+\exp(X_i'\gamma))-Z_iX_i'\gamma$. We estimate the conditional mean functions $m_0(Z,X)$, $g_0(Z,X)$, and $p_0(X)$ introduced in \eqref{model_first_stage}--\eqref{model_propensity_score} using OLS or a logistic link function, and denote the corresponding estimators by $\hat{m}(Z,X), \hat{g}(Z,X)$, and $\hat{p}(X)$\footnote{Throughout, we use $\hat{m}(Z,X), \hat{g}(Z,X)$, and $\hat{p}(X)$ to denote the estimated nuisance functions obtained via cross-fitting. Specifically, for each fold $k$, $\hat{m}_k(Z, X)$, $\hat{g}_k(Z, X)$, and $\hat{p}_k(X)$ are estimated using the data excluding fold $k$, and applied to observations in fold $k$. For notational simplicity, we write $\hat{m}$, $\hat{g}$, and $\hat{p}$ without the fold subscript except when additional clarity is required.}.
The AR-type Neyman orthogonal score is then specified as 
\begin{small}
\begin{align}
\label{equation_LATE_score}
\psi(W_i;\theta,\eta)
&=\beta_{21}+\frac{Z_i(Y_i-\beta_{21}-X_i'\beta_{22})}{\Lambda(X_i'\gamma)}-\frac{(1-Z_i)(Y_i-X_i'\beta_{22})}{1-\Lambda(X_i'\gamma)}
\\ \nonumber
&-\theta\times\left[\Lambda(\beta_{11}+X_i'\beta_{12})-\Lambda(X_i'\beta_{12})+\frac{Z_i(D_i-\Lambda(\beta_{11}+X_i'\beta_{12}))}{\Lambda(X_i'\gamma)}-\frac{(1-Z_i)(D_i-\Lambda(X_i'\beta_{12}))}{1-\Lambda(X_i'\gamma)}\right].
\end{align}
\end{small}

Note that within the score function, the logit model can be easily replaced by other models, such as the probit model or linear probability model.
To illustrate the inference procedure, we outline a specific inference procedure in the subsequent algorithm. Although our algorithm primarily employs the lasso for illustration, other machine learning methods may be used in its place. We set penalty level $\lambda^k_1=\lambda^k_2=\lambda^k_3=1.1\sqrt{|I_k^c|}\Phi^{-1}(1-0.025/p)$ and construct penalty loading $\hat\Psi_1^k, \hat\Psi_2^k,$ and $\hat\Psi_3^k$ based on Algorithm 6.1 from \cite{belloni2017program}. Formally and theoretically justified choices of penalty loadings and penalty levels are elaborated in Algorithm \ref{algorithm_penalty_loading} and Lemma \ref{lemma_convergence_rate_logistic} in Appendix \ref{section_useful_lemmas}.
\begin{algorithm} (K-fold DML for high-dimensional LATE with Lasso) \label{algorithm_lasso}
\\
Step 1. Randomly split the sample with size $N$ into $K$ folds $\{I_k\}_{k=1}^K$.
\\
Step 2. 
 For each $k\in\{1,\cdots,K\}$, obtain the nuisance parameter estimates by using only the subsample of observations with indices $i\in\{1,\cdots,N\}\setminus I_k$,
\begin{enumerate}[(a)]
\item fit a partially penalized logistic regression: estimate $\hat{\beta}_{11,k}$ (unpenalized, for $Z$) and $\hat{\beta}_{12,k}$ (lasso, for $X$) of the nuisance parameters in the first-stage regression, 
\begin{align*}
&(\hat{\beta}_{11,k},\hat{\beta}_{12,k})\in\arg\min_{\beta_{11},\beta_{12}}\mathbb{E}_{I_k^c}[L_1(W_i;\beta_{11},\beta_{12})]+\frac{\lambda_1^k}{|I_k^c|}\|\hat{\Psi}^k_1\beta_{12}\|_1.
\end{align*}
\item obtain a lasso logistic estimate $\hat{\gamma}_k$ of the nuisance parameters in the propensity score model,
\begin{align*}
&\hat{\gamma}_k\in\arg\min_{\gamma}\mathbb{E}_{I_k^c}[L_2(W_i;\gamma)]+\frac{\lambda_2^k}{|I_k^c|}\|\hat{\Psi}^k_2\gamma\|_1.
\end{align*}
\item fit a partially penalized OLS regression: estimate $\hat{\beta}_{21,k}$ (unpenalized, for $Z$) and $\hat{\beta}_{22,k}$ (lasso, for $X$) of the nuisance parameters in the reduced form regression,
\begin{align*}
&(\hat{\beta}_{21,k},\hat{\beta}_{22,k})\in\arg\min_{\beta_{21},\beta_{22}}\mathbb{E}_{I_k^c}[(Y_{i}-Z_i\beta_{21}-X_i'\beta_{22})^2]+\frac{\lambda_3^k}{|I_k^c|}\|\hat{\Psi}^k_3\beta_{22}\|_1.
\end{align*}
\end{enumerate} 
Step 3. For each candidate value $\theta_0$, compute $\hat{q}_N(\theta_0)$ and $\hat{\Omega}(\theta_0,\theta_0)$ where $\hat{\Omega}$ is defined in equation (\ref{equation_covariance_estimator}) with $\hat{\eta}_k=(\hat{\beta}_{11,k},\hat{\beta}_{12,k},\hat{\beta}_{21,k},\hat{\beta}_{22,k},\hat{\gamma}_k)$ and $\psi(W;\theta,\eta)$ is defined in equation (\ref{equation_LATE_score}).
\\
Step 4. For each $\theta_0$, compute the orthogonalized AR statistic $AR(\theta_0)=N\hat{q}_N(\theta_0)^2/\hat{\Omega}(\theta_0,\theta_0)$. Reject the null hypothesis $H_0(\theta_0): \Ep[\psi(W;\theta_0,\eta_0)]=0$ whenever $AR(\theta_0)>\chi^2_{1,1-\alpha}$. The $(1-\alpha)$ confidence interval is $CI_\alpha=\{\theta\in\Theta:AR(\theta)\leq \chi^2_{1,1-\alpha}\}$.
\\
\end{algorithm}
\begin{remark}
In the partially penalized regressions in Steps 2(a) and 2(c) of Algorithm \ref{algorithm_lasso}, the coefficients on the instrument $Z$ are always included in the model and are never subject to lasso penalization. Only the coefficients on the high-dimensional controls $X$ are penalized.
\end{remark}

 \section{Theory}
 \label{section_theory}
Section \ref{section_overview} describes the algorithmic implementation of our orthogonalized AR test. We now establish its theoretical guarantees. First, we restate the score function and define the empirical process. Next, we present the  regularity conditions required for the analysis. Finally, we prove two main results: a uniform functional central limit theorem and a uniform size result for the test statistic.

The test is built on the score $\psi(W;\theta,\eta)$ in equation \eqref{equation_score_LATE}. At the true parameter value $(\theta_0,\eta_0)$, the score has zero mean and is Neyman orthogonal, which means the  G\^ateaux derivative with respect to $\eta$ vanishes.  Under these properties, estimation error in the high-dimensional nuisance functions contributes only an $o_p(N^{-1/2})$ remainder to the statistic. Appendix \ref{section_proofs} provides complete proofs.
 
For $\theta\in\Theta$, define $q_N(\theta)=N^{-1}\sum_{i=1}^N\psi(W_i;\theta,\eta_0)$ and let $S_N(\cdot)$ denote its expected value, given as $S_N(\cdot)=\Ep[q_N(\cdot)]$. With this notation, we now define an empirical process $\mathbb{G}_N(\cdot)$ as 
\begin{align*}
\mathbb{G}_N(\cdot)=\sqrt{N}( q_N(\cdot)-S_N(\cdot))=
\frac{1}{\sqrt{N}}\sum_{i=1}^N\left\{ \psi(W_i;\cdot,\eta_0)-\Ep[\psi(W;\cdot,\eta_0)]\right\}.
\end{align*}
Later we show that under mild conditions, the process $\mathbb{G}_N(\cdot)$ weakly converges to a mean-zero Gaussian process $\mathbb{G}(\cdot)$ with a covariance function $\Omega(\theta_1,\theta_2)=\Ep[\mathbb{G}(\theta_1)\mathbb{G}(\theta_2)]$. In practice $\eta_0$ is replaced by the cross-fitted estimator $\hat{\eta}_k$ from Algorithm \ref{algorithm_lasso}.
We propose an estimator for $\mathbb{G}_N(\theta)$ as
\begin{align*}
\hat{\mathbb{G}}_N(\theta)=\sqrt{N}\left\{\frac{1}{N}\sum_{k=1}^K\sum_{i\in I_k}\psi(W_i;\theta,\hat\eta_k)-\Ep\left[\psi(W_i;\theta,\hat\eta_k)\right]\right\},
\end{align*}
where the second term is a population expectation used solely for theoretical centring and is not evaluated in practice.
 An estimator of $q_N(\theta)$ is proposed as $\hat{q}_N(\theta)=N^{-1}\sum_{k=1}^K\sum_{i\in I_k}\psi(W_i;\theta,\hat\eta_k)$.  An estimator of the covariance function is
 \begin{small}
 \begin{align}
\hat{\Omega}(\theta_1,\theta_2)=\frac{1}{N}\sum_{k=1}^K\sum_{i\in I_k}\psi(W_i;\theta_1,\hat{\eta}_k)\psi(W_i;\theta_2,\hat{\eta}_k)-\frac{1}{N^2}\sum_{k=1}^K\sum_{k'=1}^K\sum_{i\in I_k, i'\in I_{k'}}\psi(W_i;\theta_1,\hat{\eta}_k)\psi(W_{i'};\theta_2,\hat{\eta}_{k'}).
 \end{align}
  \end{small}

Next we present the theoretical foundation for the orthogonalized AR test. The argument follows the empirical process framework of \cite{CCDDHNR18} but extends it to cover weak identification in the LATE setting. Whereas \cite{CCDDHNR18} establish the asymptotic normality of a $\sqrt{N}$-consistent estimator of $\theta$, we show the weak convergence result of our proposed empirical process under the null. We demonstrate that our test controls size uniformly over both weak- and strong-identification regimes.

To fix notation, for any finite-dimensional vector $\delta$, we write 
$\|\delta\|_1$ for the $\ell_1$ norm, 
$\|\delta\|_\infty$ for the sup norm, 
and $\|\delta\|_0$ for the number of nonzero components of $\delta$. 
For a function $f$, define the $L_q(P)$ norm by $\|f\|_{P,q} = (\Ep[f(W)^q])^{1/q}$.
We define the sample expectation operator as $\mathbb{E}_N[\cdot]=\frac{1}{N}\sum_{i=1}^N[\cdot]$. The prediction norm of $\delta$ is given by $\|x_{ij}'\delta\|_{2,N}=\sqrt{\mathbb{E}_N[(x_{ij}'\delta)^2]}$.  
Let us define $\mathcal{T}_{N(i)}$ as the parameter space of the $i$-th parameter in $\eta=(\eta_1,\eta_2,\eta_3)$ with $i\in\{1,2,3\}$. The sequence $\{s_N\}_{N\geq 1}$ is a set of positive integers greater than 1. Let $c_0, c_1, C_1$ be some finite and positive constants. Let $a_N=p\vee N$.
Let the sequence $\{M_N\}_{N\geq 1}$ be a set of positive constants such that $M_N\geq (\Ep[(Z_i\vee\|X_i\|_\infty)^{2\omega}])^{1/2\omega}$, where $\omega$ is a positive constant with $\omega>4$.
Let $\{\Delta_N\}_{N\geq 1}$ be a sequence of positive constants that converges to zero at a speed at most polynomial in $N$.
 For any $T\subset[p+1]$, $\delta=(\delta_1,\cdots,\delta_{p+1})'\in\mathbb{R}^{p+1}$ with $\delta_{T,j}=\delta_j$ if $j\in T$ and $\delta_{T,j}=0$ if $j\notin T$. Define the minimum and maximum sparse eigenvalue by 
\begin{align*}
\phi_{\min}(m)=\inf_{\|\delta\|_0\leq m}\frac{\|(Z_i,X_i')\delta\|_{2,N}}{\|\delta_T\|_1},\quad \phi_{\max}(m)=\sup_{\|\delta\|_0\leq m}\frac{\|(Z_i,X_i')\delta\|_{2,N}}{\|\delta_T\|_1}.
\end{align*}
Throughout, $B_N=\Ep[m_0(1,X)-m_0(0,X)]$ is the compliance probability and $\kappa_N^2 = NB_N^2/\Ep[\text{Var}(D|Z,X)]$ is the concentration parameter analogue introduced in Section \ref{subsection_score}. Recall that $\mathcal{P}_0$ denotes the collection of probability laws under the null.

\begin{assumption}{(Identification assumption for LATE)}
\label{a:identification_LATE}
\begin{enumerate}[(i)]
\item Independence:   $(Y(z,d), D(z))\perp\!\!\!\perp Z|X$ for $d,z\in\{0,1\}$.
\item Exclusion: $Y(1,d)=Y(0,d)$ for $d\in\{0,1\}$.
\item Monotonicity: $D(1)\geq D(0)$ almost surely.
\item Strong overlap: For some $\varepsilon>0$, $\varepsilon\leq P(Z=1|X)\leq 1-\varepsilon$ almost surely.
\item  Relevance/identification strength: $B_N>0$ for all $N$ and either $\sqrt{N}B_N=O(1)$ or $\sqrt{N}B_N\rightarrow\infty$.
\end{enumerate}
\end{assumption}

\begin{remark}
\cite{CCDDHNR18} also adapted their results for the LATE framework and provided regularity conditions for LATE estimation. However, their Assumption 5.2 (d) requires the denominator in \eqref{equation_LATE_formula} to be bounded away from zero, which rules out the weakly identified situation.
In contrast, Assumption~\ref{a:identification_LATE}(v) allows for a ``local-to-zero'' first stage, meaning that the average difference in treatment propensity, $\Ep[m_0(1,X) - m_0(0,X)]$, may diminish with the sample size $N$. In our notation, weak identification arises when $\sqrt{N} \lvert B_N \rvert$ remains bounded, which corresponds to a bounded concentration parameter, $\kappa_N^2 = O(1)$. Strong identification refers to the case where $\sqrt{N} \lvert B_N \rvert$ diverges, implying that $\kappa_N^2$ tends to infinity. The proposed AR-based procedure maintains correct size uniformly across both regimes.
\end{remark}
Assumption \ref{a:identification_LATE} establishes the identification assumption for the LATE framework. Assumption \ref{a:identification_LATE}(i) requires that the instrument $Z$ is independent of potential outcome and potential treatment conditional on $X$, ensuring quasi-random assignment. Assumption \ref{a:identification_LATE}(ii) stipulates that the instrument $Z$ affects the outcome solely through its impact on the treatment, with no direct effect on the outcome itself. Assumption \ref{a:identification_LATE}(iii) rules out the presence of defiers by assuming the instrument cannot decrease the probability of treatment for any unit. Assumption \ref{a:identification_LATE}(iv) is a standard overlap condition, indicating that for every value of the covariates $X$, there is a non-zero probability that a unit will either be treated or remain untreated. 

Next assumption is to guarantee good performance of the target estimator under the linear and logistic link functions.
\begin{assumption} 
\label{a:regularity_LATE}
For $P\in\mathcal{P}_N$, the following conditions hold.
\begin{enumerate}[(i)]
\item The model \eqref{model_first_stage}-\eqref{model_propensity_score} is sparse with sparsity index $\|\beta_{12}^0\|_0+\|\beta_{22}^0\|_0+\|\gamma^0\|_0\leq s_N$ and the growth restriction $N^{-1/3}\log(a_N)\leq \Delta_N$.
\item The sparse eigenvalue conditions hold with probability $1-o(1)$, namely, for some $l_N\rightarrow\infty$ slow enough, we have 
$
1\lesssim \phi_{\min}(l_Ns_N)\leq \phi_{\max}(l_Ns_N)\lesssim 1.
$
\item The moments of the models are boundedly heteroscedastic, namely $c_0\leq \Ep[u^2|X]\leq c_1$ almost surely and $\max_{j\leq p}\big\{\Ep[|X_{ij}u_i|^3]+\Ep[|X_{ij}Y_i|^3]+\Ep[|X_{ij}|^3]\big\}\leq C_1$.
\item The approximation error and empirical error obey the following boundedness and empirical regularity condition: (a) $c_0\leq \Ep[X_{ij}^2]\leq c_1$ and $N^{-1}M_n^2s_N^2\log(a_N)\leq \Delta_N$. (b) With probability $1-o(1)$, $\|\hat{g}(Z,X)-g_0(Z,X)\|_{P,2}\vee \|\hat{p}(X)-p_0(X)\|_{P,2}\vee \|\hat{m}(Z,X)-m_0(Z,X)\|_{P,2}
\leq c_1 \sqrt{s_N\log(a_N)/N}$; $\max_{j\leq p}\big\{|(\En-\Ep)[X_{ij}^2u_i^2]|\vee|(\En-\Ep)[X_{ij}^2Y_i^2]|\vee|(\En-\Ep)[X_{ij}^2v_i^2]|\vee|(\En-\Ep)[X_{ij}^2e_i^2]|\big\}\leq \Delta_N$. 
\item $\Theta$ is compact.
\item $\|Y\|_{P,q}\leq C_1$. 
\end{enumerate} 
\end{assumption}

Assumption \ref{a:regularity_LATE}(i) specifies that the number of nonzero components in the high-dimensional nuisance parameter vector is controlled by the sparsity index $s_N$, which is further bounded in part (iv)(a).
Assumption \ref{a:regularity_LATE}(ii) imposes the sparse eigenvalue condition, analogous to the RE condition in \cite{bickel2009simultaneous}.
Assumption \ref{a:regularity_LATE}(iii) imposes constraints on the error term $u$ from the reduced form equation \eqref{model_reduce_form}. Specifically, it establishes lower and upper bounds on the conditional second moment of $u$, as well as higher moment conditions on the covariates and outcomes.
Assumption \ref{a:regularity_LATE}(iv) imposes regularity conditions on both the approximation errors and the empirical errors, in a manner analogous to Assumptions 6.1 and 6.2 of \cite{belloni2017program}.

These conditions are sufficient for the high-level conditions invoked in Appendix \ref{Appendix_subsection_general_theory}.

\begin{theorem}
\label{theorem_weak_convergence}
Suppose Assumptions \ref{a:identification_LATE} and \ref{a:regularity_LATE} hold. With $\psi(W;\theta,\eta)$ defined as in equation (\ref{equation_LATE_score}), we have that the process $\hat{\mathbb{G}}_N(\cdot)$ weakly converges to a centered Gaussian process $\mathbb{G}(\cdot)$ uniformly for all $P\in\mathcal{P}_0$ with covariance function $\Omega(\theta_1,\theta_2)=\Ep[(\psi(W;\theta_1,\eta_0)-\Ep[\psi(W;\theta_1,\eta_0)])(\psi(W;\theta_2,\eta_0)-\Ep[\psi(W;\theta_2,\eta_0)])]$ as $N$ goes to infinity. The covariance function estimator $\hat{\Omega}(\theta_1,\theta_2)$ defined in \eqref{equation_covariance_estimator} concentrates around the covariance function $\Omega(\theta_1,\theta_2)$ uniformly for all $P\in\mathcal{P}_0$, in that  for any $\varepsilon>0$,
\begin{align*}
\lim_{N\rightarrow\infty}\sup_{P\in\mathcal{P}_0}P\big(\sup_{\theta_1,\theta_2}|\hat{\Omega}(\theta_1,\theta_2)-\Omega(\theta_1,\theta_2)|>\varepsilon\big)=0.
\end{align*}
\end{theorem}
\begin{proof}
See Appendix \ref{subsection_proof_thm2}.
\end{proof}
\begin{remark}
In Theorem \ref{theorem_weak_convergence}, we show that our variance estimator is a uniformly consistent estimator of $\Omega(\theta_1,\theta_2)$ across all probability laws $P\in\mathcal{P}_0$. Next, we establish that our proposed orthogonalized AR test exhibits uniformly correct asymptotic size, as formally stated in the following theorem.
\end{remark}
\begin{theorem}[Uniform size]\label{thm_correct_size}
Suppose Assumptions \ref{a:identification_LATE} and \ref{a:regularity_LATE} hold. 
For any fixed $\theta_{0}\in\Theta$, consider the test of $H_0(\theta_0):\Ep[\psi(W;\theta_0,\eta_0)]=0$ that rejects when $AR(\theta_0)>\chi^2_{1,1-\alpha}$. 
Then the test has correct asymptotic size uniformly over all data-generating processes satisfying $H_0(\theta_0)$,
\[
\lim_{N\to\infty}\;\sup_{P\in\mathcal{P}_0}
P\!\big( AR(\theta_0)>\chi^2_{1,1-\alpha} \big)=\alpha.
\]
\end{theorem}
\begin{proof}
See Appendix \ref{subsection_proof_thm3}.
\end{proof}

\citet{CCDDHNR18} establish $\sqrt{N}$-consistency and Wald-type inference for their debiased estimator under strong identification, which requires the smallest singular value of the Jacobian to be uniformly bounded away from zero. Assumption \ref{a:identification_LATE}(v) relaxes this condition by allowing the concentration parameter to remain bounded, equivalently $\sqrt{N}B_N=O(1)$. By combining the Neyman orthogonal score with the AR statistic, we obtain weak convergence of the test under the null uniformly in instrument strength, achieving size-correct inference whether the first stage is strong or arbitrarily weak. 

This shift from identification-dependent estimation to identification-robust inference yields a broader insight: by combining orthogonalization with AR test inversion, inference validity can be decoupled from first-stage strength, ensuring uniformly valid hypothesis testing even when consistent estimation is infeasible.
In high-dimensional settings, conventional diagnostics for instrument relevance can be unreliable, as the inclusion or regularized selection of many covariates effectively partials out the identifying variation. As shown by \citet{belloni2014inference}, such over-partialling can weaken the effective first stage and distort standard IV inference, highlighting the importance of orthogonalization-based procedures that safeguard inference when identification strength is uncertain.

\begin{proposition}[Asymptotic power]\label{prop_power}
Suppose Assumptions \ref{a:identification_LATE} and \ref{a:regularity_LATE} hold.
\begin{enumerate}[(i)]
\item If $\kappa_N^2=O(1)$, equivalently $\sqrt{N}B_N\to\kappa\in(0,\infty)$, and the data are generated under the fixed alternative $\theta=\theta_0+\Delta$ with fixed $\Delta\neq0$, then
\[
AR(\theta_0)\xrightarrow{d}\chi^2_1(\Xi), 
\qquad \text{with}\quad
\Xi=\frac{\nu^2}{\Omega(\theta_0,\theta_0)}, 
\quad 
\nu=-\kappa\Delta.
\]
\item If $\kappa_N^2\to\infty$, equivalently $B_N\to B>0$, and the data are generated under the local alternative $\theta=\theta_0+h/\sqrt{N}$ with fixed $h$, then
\[
AR(\theta_0)\xrightarrow{d}\chi^2_1(\Upsilon),
\qquad \text{with}\quad
\Upsilon=\frac{B^2h^2}{\Omega(\theta_0,\theta_0)}
=:I_{\text{eff}}h^2,
\]
where $I_{\text{eff}}=B^2/\Omega(\theta_0,\theta_0)$ is the semiparametric information bound for $\theta$ based on the efficient orthogonal score.
\end{enumerate}
\end{proposition}

\begin{proof}
See Appendix \ref{proof_proposition_power}.
\end{proof}

\section{Simulation Studies}

This section describes the DGP, the simulation scenarios, and the finite-sample performance of the proposed orthogonalized AR test, in comparison with conventional alternatives. 
\label{section_simulation}
\subsection{Simulation Setup}

For each replication we draw an i.i.d. sample $\{(Y_i,D_i,Z_i,X_i')\}_{i=1}^{N}$ with sample sizes $N\in\{50,100\}$ and covariate dimensions $p\in\{5,10,25,35,50,100\}$. The covariates are generated as $X_i\sim \mathcal{N}(0,\Sigma)$, where $\Sigma_{jk}=0.5^{|j-k|}$. The binary instrument is defined by 
\begin{align*}
   Z_i=\mathbbm 1\!\bigl\{\gamma_0+X_i'\gamma+\rho_{Z} h(X_i)+\nu_i\ge 0\bigr\}, \qquad \nu_i\sim\mathrm{Logistic}(0,1),
\end{align*} 
with
$\gamma_0=-0.08$, $\gamma=(0.5,0.5^{2},\dots,0.5^p)'$, and $h(X_i)=X_{i1}^2-1$.  The parameter $\rho_Z$ controls the degree of nonlinear misspecification in the instrument, and the intercept $\gamma_0$ is set so that the instrument is balanced, with equal probabilities of taking values zero and one. Instrument strength varies smoothly through a latent compliance model, with the complier share defined as
$P_C=\kappa/\sqrt{N}$ with $\kappa\in\{1.5,3,4.5,6\}$. The shares of  always-takers and never-takers are set to $P_{AT}=P_{NT}=(1-P_C)/2$.  The implied concentration parameters range from $\kappa_N^2\in\{1.88,8.87,27.23,102.86\}$. We draw $\eta_i\sim\mathrm{Unif}(0,1)$. Potential treatments are given by $D_i(0)=\mathbbm 1\{\eta_i<P_{AT}\}$ and $D_i(1)=\mathbbm 1\{\eta_i<1-P_{NT}\}$, so the realized treatment is $D_i=(1-Z_i)D_i(0)+Z_iD_i(1)$. The outcome is generated as
\begin{align*}
Y_i=\theta_0 D_i + X_i'\beta + \rho_Yg(X_i)+u_i, \qquad u_i\sim\mathcal N(0,\sigma_i^2),
\end{align*}
with $\theta_0=1$, $\beta=(0.5,0.5^{2},\dots,0.5^p)'$, and $g(X_i)=0.5X_{i1}^2+\sin(X_{i2})$. The disturbance is heteroskedastic with $\sigma_i^2=(1+\rho_\sigma|X_{i1}|)^2$. Thus, $\rho_Y$ controls outcome nonlinearity, $\rho_\sigma$ governs heteroskedasticity, and $\rho_Z$ introduces nonlinear dependence of the instrument on covariates. Setting $(\rho_Y,\rho_\sigma,\rho_Z)=(0,0,0)$ recovers the baseline linear homoskedastic design. Positive values of $(\rho_Y,\rho_\sigma,\rho_Z)$ generate misspecification and heterogeneity to assess robustness.

\subsection{Results}

Monte Carlo simulations are conducted with 3,000 iterations for each set.
We compare three approaches in the simulation study: the conventional AR test, the DML method by \cite{CCDDHNR18}, and the proposed orthogonalized AR (OAR) test. OAR and DML procedures use $K=5$ folds for cross-fitting.
For the AR test, the nuisance functions, namely the propensity score, the first-stage, and the reduced-form, are estimated using logistic regression or OLS with all covariates included without model selection or regularization. This classical low-dimensional implementation is feasible only when the design matrix has full column rank and positive residual degrees of freedom. To satisfy these conditions, in the nominal $p\in\{50,100\}$ designs we set $p=48$ for $N=50$ and $p=98$ for $N=100$, ensuring that $X'X$ is nonsingular.

In high-dimensional designs, conventional regression either fails or produces severe overfitting, so the AR test cannot be applied. In particular, no AR results are reported for $N=50$ with $p=100$. By contrast, the DML method employs Lasso-regularized regression for nuisance estimation, which enables valid inference even when $p \gg N$. Confidence intervals are constructed using a Wald-type approach based on the asymptotic normality of the debiased estimator. Our proposed OAR method combines the identification-robust test inversion of the AR procedure with regularized nuisance estimation, extending AR inference to high-dimensional settings.

Table \ref{table_N50_size} reports empirical size across identification strength, sample size, and dimensionality under the baseline linear and homoskedastic design $(\rho_Y,\rho_{\sigma},\rho_Z)=(0,0,0)$. The AR test shows pronounced size inflation that worsens as $p$ increases. Even at $N=50$ and $p=5$ the empirical size exceeds 5\% and it rises sharply with $p$, which indicates high sensitivity to dimensionality. By contrast, the DML method shows size deflation under weak or moderately weak instruments at $\kappa_N^2 \in \{1.88, 8.87\}$ with rejection rates well below the 5\% nominal level. It approaches nominal size only when instrument strength increases to $\kappa_N^2 \in \{27.23, 102.86\}$. The proposed OAR test stays near the nominal level across identification strengths and values of $p$, with only minor drift in the highest-dimensional designs. Overall, Table \ref{table_N50_size} shows that OAR is stable in both identification strength and dimensionality, while AR over-rejects increasingly with dimensionality and the DML method under-rejects when instruments are weak.

Figures \ref{figure_N50_power} and \ref{figure_N100_power} plot power curves for OAR, AR, and DML under the baseline linear and homoskedastic design $(\rho_Y,\rho_{\sigma},\rho_Z)=(0,0,0)$. The AR test shows size inflation that worsens as $p$ increases. At $\theta=1$ the size lies well above the 5\% nominal level across identification strengths. The DML method attains high power under strong identification, as seen in Figure \ref{figure_N100_power}, but displays size deflation under weak or intermediate identification, as seen in Figure \ref{figure_N50_power}, with size far below 5\%. In contrast, OAR tracks the nominal size closely across $\kappa_N^2$ and $p$. Under strong identification its power is essentially on par with DML, while under weak identification it maintains size and improves on AR.

Table \ref{table_robust} compares performance across instrument strengths $\kappa_N^2 \in \{1.88, 8.87, 27.23, 102.86\}$ for $N=50$, varying the dimensionality $p$ and considering three designs: (i) the baseline linear and homoskedastic specification $(\rho_Y,\rho_{\sigma},\rho_Z)=(0,0,0)$, (ii) outcome misspecification with heteroskedastic errors $(1,0.5,0)$, and (iii) instrument nonlinearity $(0,0,0.5)$. The conventional AR test exhibits pronounced size inflation that becomes more severe as $p$ increases, with the largest distortions under design (iii). The DML method performs well only under strong identification with $\kappa_N^2 \in \{27.23, 102.86\}$, whereas under weak or intermediate identification it displays systematic size deflation across all three designs, which indicates sensitivity to weak instruments despite tolerance to misspecification. In contrast, the proposed OAR test maintains near-nominal size across identification strengths, dimensionalities, and designs. Under strong identification, differences between OAR and DML are negligible. Overall, the simulations indicate that OAR is robust to misspecification, dimensionality, and instrument strength, that the AR test is highly sensitive to dimensionality and misspecification, and that the DML method, while resilient to misspecification, fails to deliver correct size under weak identification.

\begin{table}[htbp]
\centering
\renewcommand{\arraystretch}{0.75}
\setlength{\tabcolsep}{3pt}
\resizebox{\textwidth}{!}{%
\begin{tabular}{cccc|ccc|cccc|ccc}
\hline\hline
\multicolumn{4}{c|}{DGP} & \multicolumn{3}{c|}{Testing Procedure} &
\multicolumn{4}{c|}{DGP} & \multicolumn{3}{c}{Testing Procedure} \\
\hline
$\kappa_N^2$ & $P_C$ & $N$ & $p$ & OAR & AR & DML &
$\kappa_N^2$ & $P_C$ & $N$ & $p$ & OAR & AR & DML \\
\hline
1.88 & 0.21 & 50  & 5   & 0.051 &0.097  & 0.008 &   1.88 &  0.15 &  100 & 5   & 0.068 & 0.071 &0.005  \\
 &   &   & 10  & 0.064 &0.1395  &  0.006&    &   &   & 10  & 0.069 &0.083  &0.005  \\
&  &   & 25  &0.062  &0.584  & 0.008 &    &  & & 25  & 0.073 &  0.125&  0.004\\
 &   &   & 35  & 0.063 & 0.783 & 0.009 &    &   &   & 50  & 0.072 &  0.545&  0.005\\
 &   &   & 50  & 0.064 & 0.982 & 0.006 &    &   &   & 75  &  0.067& 0.812 &  0.005\\
 &   &   & 100 & 0.065 & $\times$ &0.007  &    &   &   & 100 &  0.066&  0.986 &  0.004\\
\hline
8.87 & 0.42  &50 & 5   &0.070  & 0.097 & 0.028 &     8.87 &  0.30 &100 & 5   & 0.072 &0.072  & 0.017 \\
 &   &   & 10  & 0.067 &0.144  & 0.026 &    &   &   & 10  & 0.068 &0.082  &  0.019\\
 &  &   & 25  & 0.066 & 0.571 & 0.025 &    &  & & 25  &  0.079&0.124  &  0.019\\
 &   &   & 35  & 0.067 &0.782  & 0.023 &    &   &   & 50  &  0.078&0.528  &  0.021\\
 &   &   & 50  & 0.075 &0.977  & 0.029 &    &   &   & 75  & 0.071 & 0.813 &  0.023\\
 &   &   & 100 & 0.710 & $\times$ &0.021  &    &   &   & 100 & 0.071 &  0.986&  0.021\\
\hline
27.23 & 0.63 &50 & 5     & 0.058 &0.094  &0.052   & 27.23&   0.45&  100 & 5   & 0.078 & 0.071 &  0.046\\
 &   &   & 10  & 0.062 &  0.142&  0.050&    &   &   & 10  &  0.069&0.084  &  0.049\\
 &  &   & 25  & 0.060 & 0.564 & 0.051 &    &  &  & 25  & 0.071 &  0.125& 0.044 \\
 &   &   & 35  & 0.062 &0.776  &  0.045&    &   &   & 50  & 0.075 &  0.521& 0.048 \\
 &   &   & 50  & 0.065 & 0.983 & 0.054 &    &   &   & 75  & 0.068 & 0.796 &  0.047\\
 &   &   & 100 &0.065  & $\times$ &0.047  &    &   &   & 100 & 0.067 &0.987  &  0.052\\
\hline
102.86 & 0.84   &50& 5   &  0.065&0.091  &0.064  &     102.86&   0.60&100& 5   & 0.075 &0.072  &  0.058\\
 &   &   & 10  & 0.072 & 0.133 &0.061  &    &   &   & 10  & 0.076 &0.083  &  0.061\\
 &  &   & 25  & 0.055 & 0.569 & 0.064 &    &  &  & 25  & 0.078 &0.126  &  0.058\\
 &   &   & 35  & 0.069 &0.748  & 0.064 &    &   &   & 50  & 0.072 & 0.519 &  0.062\\
 &   &   & 50  & 0.074 & 0.978 & 0.067 &    &   &   & 75  &  0.072&0.806  &  0.058\\
 &   &   & 100 & 0.073 & $\times$ &0.068  &    &   &   & 100 & 0.076 &0.988  &  0.063\\
\hline\hline
\end{tabular}
}
\caption{Empirical size by identification strength, sample size, and dimensionality under the baseline linear–homoskedastic design with $(\rho_Y,\rho_{\sigma},\rho_Z)=(0,0,0)$. Reported are rejection rates at the nominal 5\% level for the proposed OAR, the conventional AR, and the DML procedure. Identification strength is indexed by $\kappa_N^2$ with implied complier share $P_C$. $N$ is sample size and $p$ is the number of covariates. Each design is simulated 3,000 times. An ``$\times$'' indicates a statistic is not available.}
\label{table_N50_size}
\end{table}



\begin{figure}
\centering
\includegraphics[scale=0.6]{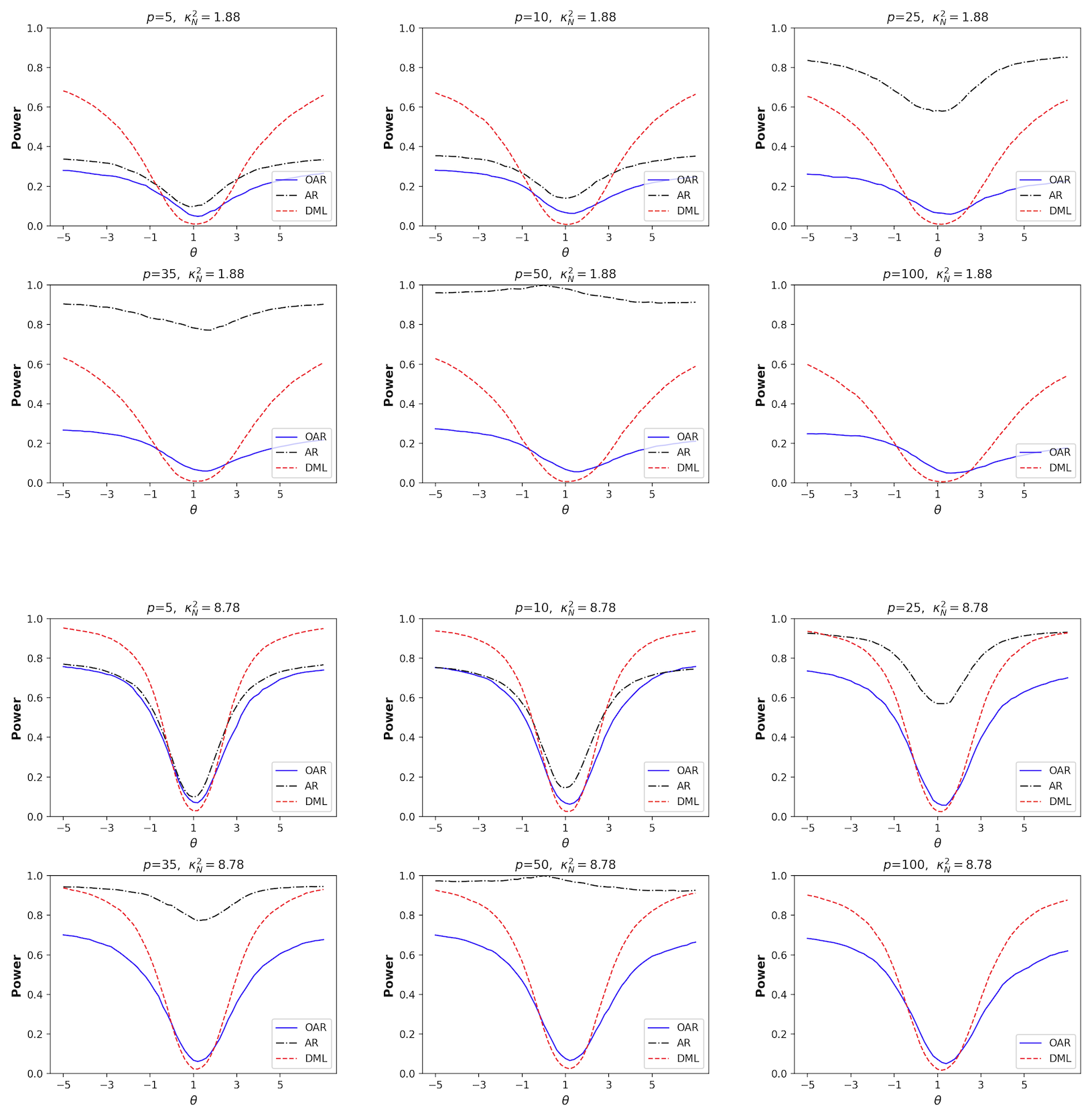}
\caption{Power curves for OAR (solid blue), conventional AR (dash–dot black), and DML (dashed red) for $N=50$ across $p\in\{5,10,25,35,50,100\}$ and instrument strengths $\kappa_N^2\in\{1.88,\,8.87\}$ under the baseline linear–homoskedastic design with $(\rho_Y,\rho_{\sigma},\rho_Z)=(0,0,0)$. The top six panels correspond to $\kappa_N^2=1.88$ and the bottom six to $\kappa_N^2=8.87$. Curves show rejection probabilities for nominal 5\% tests based on 3{,}000 replications.}
\label{figure_N50_power}
\end{figure}

\begin{figure}
\centering
\includegraphics[scale=0.6]{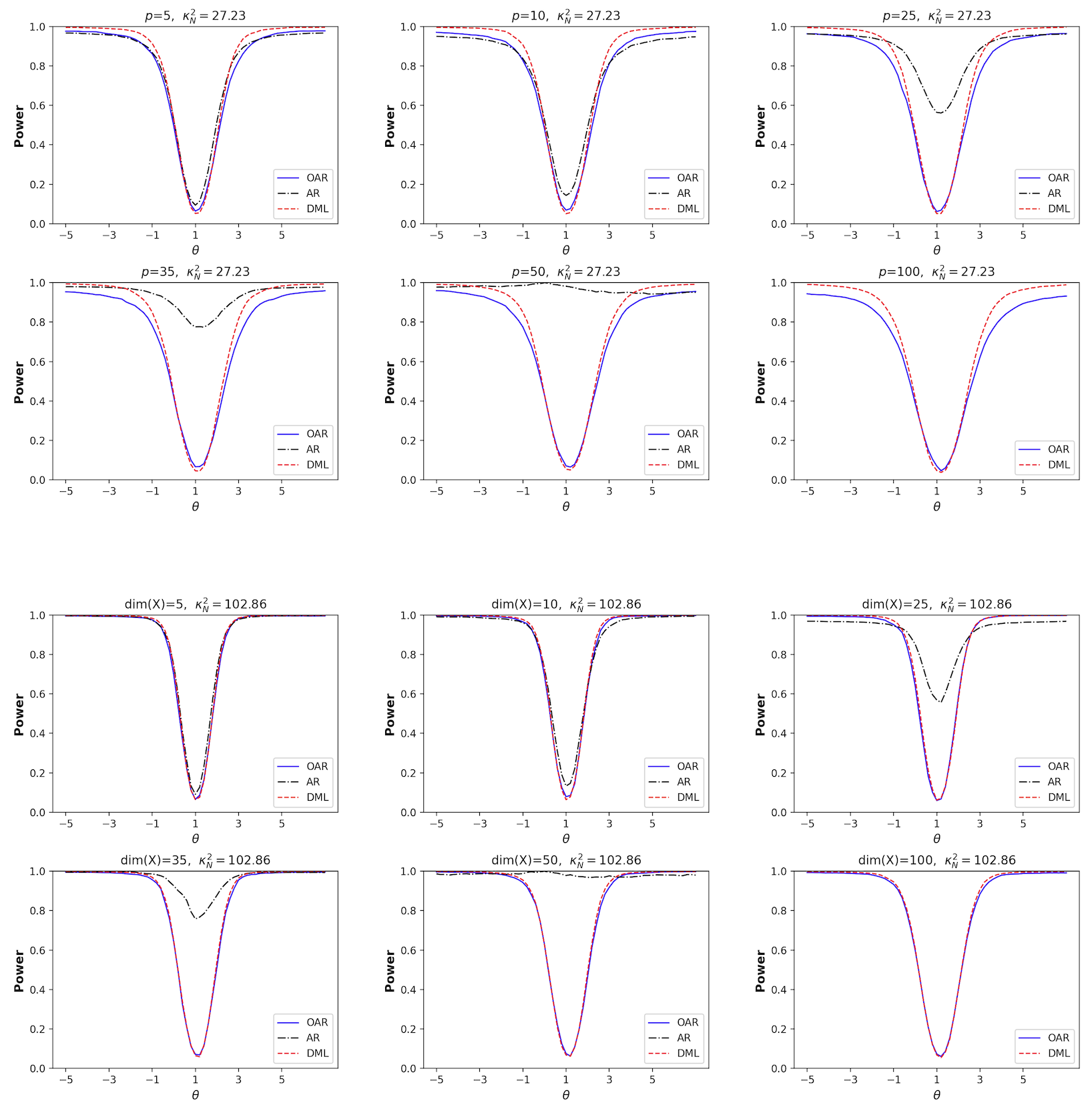}
 \caption{Power curves for OAR (solid blue), conventional AR (dash–dot black), and DML (dashed red) for $N=50$ across $p\in\{5,10,25,35,50,100\}$ and instrument strengths $\kappa_N^2\in\{27.23,\,102.86\}$ under the baseline linear–homoskedastic design with $(\rho_Y,\rho_{\sigma},\rho_Z)=(0,0,0)$. The top six panels correspond to $\kappa_N^2=27.23$ and the bottom six to $\kappa_N^2=102.86$. Curves show rejection probabilities for nominal 5\% tests based on 3{,}000 replications.}
\label{figure_N100_power}
\end{figure}

\begin{table}[htbp]
\centering
\renewcommand{\arraystretch}{1.05}
\setlength{\tabcolsep}{3pt}
\adjustbox{max width=\textwidth}{
\begin{tabular}{l|ccc|ccc|ccc|ccc|ccc|ccc}
\hline\hline 
& \multicolumn{3}{c|}{$p=5$}
& \multicolumn{3}{c|}{$p=10$}
& \multicolumn{3}{c|}{$p=25$}
& \multicolumn{3}{c|}{$p=35$}
& \multicolumn{3}{c|}{$p=50$}
& \multicolumn{3}{c}{$p=100$} \\
\hline
$\rho_Y$      & 0 & 1 & 0 & 0 & 1 & 0 & 0 & 1 & 0 & 0 & 1 & 0 & 0 & 1 & 0 & 0 & 1 & 0 \\
$\rho_\sigma$ & 0 & 0.5 & 0 & 0 & 0.5 & 0 & 0 & 0.5 & 0 & 0 & 0.5 & 0 & 0 & 0.5 & 0 & 0 & 0.5 & 0 \\
$\rho_Z$      & 0 & 0 & 0.5 & 0 & 0 & 0.5 & 0 & 0 & 0.5 & 0 & 0 & 0.5 & 0 & 0 & 0.5 & 0 & 0 & 0.5 \\
\hline
\multicolumn{19}{c}{Panel A: Empirical size ($N=50$, $\kappa_N^2=1.88$)} \\
\hline
OAR & 0.051 & 0.049 & 0.064 & 0.064 & 0.063 & 0.072 & 0.062 & 0.064 & 0.072 & 0.063 & 0.067 & 0.076 & 0.064 & 0.084 & 0.072 & 0.065 & 0.067 & 0.071 \\
AR  & 0.097 & 0.091 & 0.231 & 0.140 & 0.131 & 0.268 & 0.584 & 0.562 & 0.599 & 0.783 & 0.792 & 0.821 & 0.982 & 0.986 & 0.985 & $\times$ & $\times$ & $\times$ \\
DML & 0.008 & 0.005 & 0.005 & 0.006 & 0.006 & 0.007 & 0.008 & 0.008 & 0.008 & 0.009 & 0.006 & 0.009 & 0.006 & 0.007 & 0.009 & 0.007 & 0.007 & 0.005 \\
\hline
\multicolumn{19}{c}{Panel B: Empirical size ($N=50$, $\kappa_N^2=8.87$)} \\
\hline
OAR & 0.070 & 0.057 & 0.074 & 0.067 & 0.063 & 0.069 & 0.066 & 0.065 & 0.075 & 0.067 & 0.064 & 0.074 & 0.075 & 0.075 & 0.081 & 0.071 & 0.079 & 0.087 \\
AR  & 0.097 & 0.093 & 0.231 & 0.144 & 0.129 & 0.267 & 0.571 & 0.558 & 0.599 & 0.782 & 0.787 & 0.818 & 0.977 & 0.987 & 0.986 & $\times$ & $\times$ & $\times$ \\
DML & 0.028 & 0.021 & 0.023 & 0.026 & 0.026 & 0.025 & 0.025 & 0.020 & 0.021 & 0.023 & 0.018 & 0.024 & 0.029 & 0.028 & 0.020 & 0.021 & 0.024 & 0.024 \\
\hline
\multicolumn{19}{c}{Panel C: Empirical size ($N=50$, $\kappa_N^2=27.23$)} \\
\hline
OAR & 0.058 & 0.067 & 0.072 & 0.062 & 0.064 & 0.065 & 0.060 & 0.070 & 0.071 & 0.062 & 0.075 & 0.075 & 0.065 & 0.067 & 0.072 & 0.065 & 0.086 & 0.072 \\
AR  & 0.094 & 0.094 & 0.233 & 0.142 & 0.128 & 0.275 & 0.564 & 0.566 & 0.613 & 0.776 & 0.775 & 0.817 & 0.983 & 0.988 & 0.987 & $\times$ & $\times$ & $\times$ \\
DML & 0.052 & 0.042 & 0.043 & 0.050 & 0.045 & 0.044 & 0.051 & 0.043 & 0.042 & 0.045 & 0.049 & 0.049 & 0.054 & 0.052 & 0.041 & 0.047 & 0.044 & 0.046 \\
\hline
\multicolumn{19}{c}{Panel D: Empirical size ($N=50$, $\kappa_N^2=102.86$)} \\
\hline
OAR & 0.065 & 0.063 & 0.072 & 0.072 & 0.059 & 0.067 & 0.055 & 0.063 & 0.076 & 0.069 & 0.072 & 0.072 & 0.074 & 0.082 & 0.073 & 0.073 & 0.072 & 0.098 \\
AR  & 0.091 & 0.094 & 0.241 & 0.133 & 0.129 & 0.291 & 0.569 & 0.564 & 0.615 & 0.748 & 0.781 & 0.823 & 0.978 & 0.988 & 0.989 & $\times$ & $\times$ & $\times$ \\
DML & 0.064 & 0.057 & 0.057 & 0.061 & 0.052 & 0.061 & 0.064 & 0.055 & 0.062 & 0.064 & 0.052 & 0.059 & 0.067 & 0.057 & 0.064 & 0.068 & 0.060 & 0.057 \\
\hline\hline
\end{tabular}
}
\caption{Empirical size at the 5\% nominal level for $N=50$ under varying identification strength $\kappa_N^2$. Each column corresponds to a DGP defined by $(\rho_Y,\rho_\sigma,\rho_Z)$ at a given $p$. The table reports empirical sizes for the orthogonalized Anderson-Rubin (OAR) test, the conventional AR test, and the DML-based Wald test. Each design uses 3,000 Monte Carlo replications. ``$\times$'' indicates a statistic is not available.}
\label{table_robust}
\end{table}

\section{Empirical Illustrations}
\label{section_application}
\subsection{The Impact of Railroad Access on City Growth}
To demonstrate the methods outlined in the preceding sections, we revisit the instrumental variable analysis by \cite{hornung2015railroads} concerning the impact of railroad access on city growth in 19th-century Prussia. In this study, straight-line corridors between major cities (nodes) are constructed, and whether a city is located on this line is used as an instrument for analysis. We compare the proposed orthogonalized AR test with the conventional AR test and the DML method to assess the effect of railroad access on city population growth.
Our goal is to deepen our understanding of the conclusions presented in the literature. By conducting a new empirical analysis, we keep two econometric considerations in mind: 1. the inclusion of high-dimensional covariates to mitigate unobserved confoundedness, and 2. accounting for the weak identification issue in the data. For the first time, we report confidence intervals that are robust to weak identification and high dimensionality.

Consider the empirical model:
\begin{align*}
&\Ep[D_i|Z_i,X_i]=\frac{\exp (Z_{i} \eta_0 +X_{i}'\beta_2)}{1+\exp(Z_{i} \eta_0 +X_{i}'\beta_2)}, \\
&
\Ep[Y_{it}|D_i,X_i]=D_{i}\theta_0+X_{i}'\beta_1,
\end{align*}
where $Y_{it}$ denotes the urban population growth rate in city $i$ at time period $t$, $D_{i}$ is a dummy variable indicating whether there is railroad access by 1848 in city $i$, and $Z_{i}$ denotes whether city $i$ was located within a straight-line corridor between junction stations (nodes) in 1848. The covariates $X_{i}$ include a lagged dependent variable, distance to the closest node of railroad lines, age composition, primary education of the urban population, county-level concentration of large landholdings, access to main roads, rivers, and ports, pre-railroad city growth from 1831-1837, and the size of the civilian and military population in 1849. 

Within this study, the exclusion restriction condition would be violated if the location of the cities in the straight-line corridor were associated with urban population growth through a channel other than the railroad. The author asserts that the exclusion restriction is satisfied in Hornung (\citeyear{hornung2015railroads}, pg. 714),
\begin{adjustwidth}{1.25cm}{1.25cm}
\textit{When estimating the reduced-form relationship of urban growth on location in the straight-line corridors, we find no correlation with the pre-railroad growth during 1831-1837.}
\end{adjustwidth}
In the context of this study, it is noteworthy that the adoption of railroad technology by cities located on a straight line between two important cities was randomly assigned. This random assignment arises because the positioning of these cities along such lines was not intentionally controlled by any specific entity. In 19th-century Prussia, the government did not dictate railroad construction due to financial limitations. Instead, the decision fell to individual city councils negotiating with private railroad enterprises. Hence, each city had the autonomy to determine whether or not to proceed with railroad construction. Within this study, ``compliers'' refer to (1) cities situated on the straight line between two major cities AND eventually established a railroad station, and (2) cities NOT on such a line AND did NOT get a train station. The second part does not exist in this study as mentioned in Hornung (\citeyear{hornung2015railroads}, pg. 731),
\begin{adjustwidth}{1.25cm}{1.25cm}
\textit{One limitation of using IV estimation approaches lies in the fact that we can only estimate the local average treatment effect of railroad access for cities in the straight-line corridors.}
\end{adjustwidth}
We implement the proposed method on the city-level railroad data from \cite{hornung2015railroads}. As outlined in Table 5 of \cite{hornung2015railroads}, the first-stage F-statistics vary between 26.46 and 38.29. This variation suggested instrument weakness based on the $tF$ critical value function proposed by \cite{lee2022valid}. We conduct a re-analysis by incorporating the polynomial and interaction terms of the original covariates, and present the results in Table \ref{tab:empirical_results}. 

\begin{table}
	\centering
\adjustbox{max width=\textwidth, max height=\textheight}{
		\begin{tabular}{lccccccccccc}
			\hline\hline
			&&(1)&(2)&&(3)&(4)&(5)&(6)&(7)&(8)&(9)\\
			\hline
			$Y_{it}$: population 			&& \multicolumn{2}{c}{Main periods} & \multicolumn{7}{c}{Subperiods}\\
			\cline{3-4}\cline{6-12} growth rate 	
			 &&1831-37 &49-71&& 49-52& 52-55& 55-58&58-61&61-64&64-67&67-71\\
			 
			 \hline \multicolumn{11}{c}{Panel A: AR test} 
			 \\
			\hline
95\% CI && [-0.010,0.013]&[0.007,0.034]&& [-0.009,0.020]&[0.000,0.029]&[0.005,0.050]&[0.001,0.030]&[-0.005,0.041]&[-0.002,0.058]&[0.002,0.049]\\
Length of CI && 0.023&0.027&& 0.029&0.028&0.042&0.029&0.046&0.060&0.047\\

		\hline \multicolumn{11}{c}{Panel B: DML using Lasso }
		\\
		\hline
		LATE && 0.004&0.030&& 0.014&0.024&0.025&0.013&0.038&0.032&0.049\\
95\% CI && [-0.003,0.010]&[0.023,0.036]&& [0.005,0.024]&[0.013,0.035]&[0.014,0.035]&[0.007,0.018]&[0.028,0.049]&[0.020,0.043]&[0.039,0.059]\\
Length of CI && 0.014&0.013&& 0.019&0.011&0.021&0.012&0.021&0.023&0.020\\

					\hline \multicolumn{11}{c}{Panel C: OAR using Lasso }
		\\
		\hline
95\% CI && [-0.020,0.025]&[0.004,0.079]&& [-0.017,0.055]&[0.004,0.042]&[-0.006,0.071]&[-0.005,0.065]&[0.007,0.217]&[-0.021,0.146]&[0.010,0.447]\\
Length of CI && 0.045&0.075&& 0.072&0.038&0.077&0.070&0.211&0.168&0.437\\

						\hline \multicolumn{11}{c}{Panel D: OAR using Ridge }
		\\
		\hline
95\% CI && [-0.016,0.021]&[0.014,0.075]&& [-0.003,0.047]&[0.004,0.046]&[-0.007,0.051]&[-0.001,0.049]&[0.012,0.083]&[-0.005,0.100]&[0.018,0.117]\\
Length of CI && 0.037&0.061&& 0.050&0.043&0.058&0.050&0.071&0.105&0.098\\

						\hline \multicolumn{11}{c}{Panel E: OAR using Elastic Net}
		\\
		\hline
95\% CI && [-0.018,0.022]&[0.010,0.075]&& [-0.008,0.052]&[0.011,0.049]&[0.001,0.064]&[-0.004,0.057]&[0.006,0.126]&[-0.008,0.123]&[0.012,0.169]\\
Length of CI && 0.041&0.065&& 0.060&0.038&0.063&0.060&0.120&0.131&0.157\\

			\hline
			$N$ && 898&906&& 929&924&914&926&924&919&919\\
$p$ && 209&246&& 246&246&246&246&246&246&246\\

			\hline\hline
		 \multicolumn{11}{c}{Panel F: Table 5 in \cite{hornung2015railroads}} 
			 \\
			\hline
95\%			CI && [-0.012,0.012] &[0.009,0.033] &&[0.003,0.027] & [0.005,0.029] &[0.002,0.038] &[0.001,0.021] &[0.007,0.035] &[0.001,0.041] &[0.008,0.036]\\
			Length of CI && 0.024& 0.024 && 0.024&0.024& 0.036&0.020&0.028&0.040&0.028\\
			\hline
				$N$          && 898 & 906&& 929 & 924 &914 &926 &924& 919& 919\\
			 $p$ &&11 & 12&&12&12&12&12&12&12&12\\ 
			\hline\hline
		\end{tabular}
		}
\caption{Reported are LATE point estimates, 95\% confidence intervals (CI), and CI lengths for the effect of railroad access. Panel A reports results from the conventional AR test; Panels B--E report results from DML with Lasso (B) and the proposed OAR with Lasso (C), Ridge (D), and Elastic Net (E). Panel F reports the low-dimensional results from Table 5 of \citet{hornung2015railroads}. Estimates in Panels B--E are computed using 100 splits of 5-fold cross-fitting. Panels A--E use an expanded, high-dimensional set of covariates. Panel F uses the original low-dimensional specification.}
	\label{tab:empirical_results}
\end{table}

\begin{table}
	\centering
\adjustbox{max width=\textwidth, max height=\textheight}{
		\begin{tabular}{lccccccccccc}
			\hline\hline
			&&(1)&(2)&&(3)&(4)&(5)&(6)&(7)&(8)&(9)\\
			\hline
			$Y_{it}$: population 			&& \multicolumn{2}{c}{Main periods} & \multicolumn{7}{c}{Subperiods}\\
			\cline{3-4}\cline{6-12} growth rate 	
			 &&1831-37 &49-71&& 49-52& 52-55& 55-58&58-61&61-64&64-67&67-71\\\hline
			 \multicolumn{11}{c}{Panel A: OAR using Lasso }
\\
\hline
95\% CI        && [-0.016,0.023] &[0.009,0.097] &&[-0.021,0.044] &[0.013,0.060] &[-0.006,0.109] &[-0.006,0.074] &[0.004,0.190] &[-0.015,0.185] &[0.007,0.261]\\
Length of CI && 0.039&0.088&& 0.066&0.047&0.115&0.080&0.187&0.200&0.254\\\hline
			\multicolumn{11}{c}{Panel B: OAR using Ridge (check first regression) }
\\
\hline
95\% CI        && [-0.016,0.019] &[0.016,0.076] &&[-0.001,0.049] &[0.011,0.057] &[0.006,0.068] &[-0.001,0.049] &[0.011,0.093] &[-0.003,0.101] &[0.018,0.126]\\
Length of CI && 0.036&0.060&& 0.049&0.046&0.062&0.050&0.082&0.104&0.108\\
						\hline
						\multicolumn{11}{c}{Panel C: OAR using Elastic Net }
\\
\hline
95\% CI        && [-0.016,0.022]&[0.013,0.085] && [-0.003,0.054]&[0.013,0.057]&[0.002,0.086]&[-0.003,0.066]&[0.009,0.124]&[-0.005,0.127]&[0.014,0.212]\\
Length of CI && 0.038&0.072&& 0.057&0.044&0.084&0.069&0.116&0.131&0.199\\

			\hline
			$N$         && 898 & 906&& 929 & 924 &914 &926 &924& 919& 919\\
			$p$ &&66 & 78&&78&78&78&78&78&78&78\\ 
\hline\hline \multicolumn{11}{c}{Panel D: OAR using Lasso }
\\
\hline
95\% CI        && [-0.017,0.026]&[0.009,0.100]&&[-0.020,0.043]&[0.013,0.058]&[-0.006,0.107]&[-0.006,0.067]&[0.004,0.182]&[-0.015,0.166]&[0.006,0.254]\\
Length of CI && 0.044&0.091&&0.062&0.045&0.113&0.074&0.179&0.180&0.247\\
\hline \multicolumn{11}{c}{Panel E: OAR using Ridge }
\\
\hline
95\% CI        && [-0.020,0.025]&[0.015,0.076] && [-0.002,0.047]&[0.011,0.056]&[0.006,0.067]&[-0.001,0.048]&[0.011,0.088]&[-0.004,0.100]&[0.018,0.124]\\
Length of CI && 0.044&0.061&& 0.049&0.045&0.061&0.049&0.077&0.105&0.106\\
\hline \multicolumn{11}{c}{Panel F: OAR using Elastic Net}
\\
\hline
95\% CI        && [-0.018,0.024]&[0.012,0.082] && [-0.003,0.052]&[0.012,0.055]&[0.002,0.083]&[-0.004,0.063]&[0.008,0.116]&[-0.006,0.124]&[0.012,0.207]\\
Length of CI && 0.042&0.070&& 0.055&0.042&0.081&0.067&0.108&0.130&0.195\\
\hline
$N$         && 898&906&& 929&924&914&926&924&919&919\\
$p$ && 77&90&& 90&90&90&90&90&90&90\\ 
\hline\hline

		\end{tabular}
		}
\caption{LATE point estimates, 95\% CIs, and CI lengths for the effect of railroad access. Panels A--C report OAR results using Lasso (A), Ridge (B), and Elastic Net (C) with $p \in \{66,78\}$ and interaction terms. Panels D--F report OAR results using Lasso (D), Ridge (E), and Elastic Net (F) with $p \in \{77,90\}$ and interaction plus cubic terms. All estimates are based on 100 random splits with 5-fold cross-fitting. Point estimates are the median across splits.}
\label{railroad_different_set}
\end{table}

Table \ref{tab:empirical_results} summarizes the results. To emphasize the robustness of the proposed method within the high-dimensional framework, we report the confidence intervals for the LATE  estimates and the lengths of the confidence intervals.  Panel A reports results from the conventional AR test without a regularization step. Panels B–E report results from DML with Lasso and from the proposed OAR with Lasso, Ridge, and Elastic Net, computed using 5-fold cross-fitting. Panel F reports the low-dimensional results from Table 5 of \citet{hornung2015railroads}.  Panels A--E use an expanded, high-dimensional set of covariates.  Different columns report the results for several dependent variables across diverse time periods. Specifically,  Columns (1) and (2) capture outcomes spanning two main periods: 1831-37, and 1849-1871. Columns (3)-(9) depict findings across seven subperiods.  To mitigate the uncertainty induced by sample splitting, we compute confidence intervals based on the median from 100 repetitions of resampled cross-fitting following \cite{CCDDHNR18}. 

Table \ref{tab:empirical_results} presents a comparison of the three inference methods and illustrates the distinct properties of each approach. The DML estimator relies on a delta-method linearization of the LATE ratio, whose validity requires that the denominator be bounded away from zero. When this condition fails under weak identification, the ratio becomes highly nonlinear, and the Wald approximation produces confidence intervals that are systematically too narrow. Hence, the short DML intervals in Table \ref{tab:empirical_results} do not imply higher precision but rather reflect the lack of robustness of Wald-type inference when the first stage is weak. In contrast, our proposed  orthogonalized AR test  directly tests the moment condition without relying on a linear approximation. It therefore achieves uniform size control irrespective of identification strength. The conventional AR test also has identification-robust properties in low dimensions, but when applied with a large number of controls and no regularization, it suffers from overfitting and instability in covariance estimation, leading to distorted inference. Our orthogonalized AR approach combines the identification-robustness of the AR framework with regularized nuisance estimation and cross-fitting, ensuring valid inference even when the covariate dimension is large relative to the sample size.

Empirically, the results align with these expectations. With the full high-dimensional control set, the OAR confidence intervals are wider than those obtained in DML,  consistent with proper uncertainty quantification when the first stage may be weak, yet still informative. OAR continues to yield statistically significant positive effects in several economically meaningful subperiods (e.g., 1852–55, 1861–64, 1867–71, and for the aggregate 1849–71 period), where the lower bound of the 95\% interval lies above zero. By contrast, a number of subperiod effects that appear significant under DML or under the unregularized AR approach become insignificant under OAR. This indicates that those earlier findings were sensitive to the linear approximation underlying Wald inference or to overfitting in the high-dimensional setting. 
Economically, these results suggest that the strong and precisely estimated effects in the original 2SLS specification in \cite{hornung2015railroads} likely reflected a limited control set that overstated
instrument strength and precision. When model uncertainty and weak identification are incorporated,  the statistical evidence for
a consistently positive effect across subperiods becomes weaker. 
Importantly, OAR produces the same qualitative conclusions across Lasso, Ridge, and Elastic Net regularization, illustrating that our identification-robust results do not hinge on any particular form of penalization.

Table \ref{railroad_different_set} reports OAR results using a reduced set of controls, with the number of covariates $p$ ranging from 66 to 90. The 95\% confidence intervals are very similar to those in Table \ref{tab:empirical_results}. Trimming the control set does not alter the substantive conclusions, which suggests that our findings are not driven by the inclusion of specific covariates, provided the specification remains high dimensional.

\subsection{The Boundary Effects on Rental Prices}
In this subsection, we reexamine the  instrumental variable estimation  by \cite{ambrus2020loss} concerning the long-term consequences of the 1854 cholera outbreak on housing prices. The authors investigate the impacts of  a cholera epidemic in a neighbourhood of 19th-century London on  property values in 1864, a decade following the outbreak.  Page 479 of \cite{ambrus2020loss} presents the background information,
\begin{adjustwidth}{1.25cm}{1.25cm}
\textit{In August 1854, St. James experienced a sudden outbreak of cholera when one of the 13 shallow wells that serviced the parish, the Broad Street pump, became contaminated with cholera bacteria... [So] residents were unaware they should stop using the local water source in order to avoid infection.... [Within] the Broad Street pump (BSP) catchment area, an estimated 16 percent of residents had contracted the disease and approximately 8 percent died.}
\end{adjustwidth}

 In this context, $Y_i$  represents the log rental price of house $i$ in 1864. The variable $D_i$ is an indicator, set to 1 if house $i$  has  at least one cholera death. \cite{ambrus2020loss} use two instruments $Z_i$; we focus on their preferred instrument, which is an indicator for whether the property $i$ falls inside the Broad Street pump (BSP) catchment areas, which were  the primary contaminated area during the  outbreak.   The controls $X_i$ comprise all house characteristic variables listed in Table 1 of \cite{ambrus2020loss}, such as distance to the closest pump, distance to the fire station, distance to the urinal, sewer access, among a total of 26 variables.

%
 In this study,  the ``compliers'' refer to (1)  houses situated within the cholera-affected contaminated areas  AND witnessed at least one cholera-related death, and (2) houses  outside these contaminated zone AND did not experience any cholera fatalities. However, the latter category does not actually exist in the study since the authors limit the sample to properties within a certain distance of the BSP boundary.

\begin{table}
\centering
\adjustbox{max width=\textwidth, max height=\textheight}{
\begin{tabular}{lcccccc}
\hline\hline
 &AR test &DML(Lasso)&OAR(Lasso) &OAR(Ridge)&OAR(Elastic Net)\\
\hline
 \multicolumn{5}{c}{Panel A: $p=52, N=467$}\\\hline
95\% 			CI &[-0.662,-0.383]&[-0.647,0.315] &[-0.740,0.415]&[-0.964,0.243]&[-0.685,0.391]\\
			length of CI &0.279 &0.962 &1.154&1.207&1.077\\
\hline
\multicolumn{5}{c}{Panel B: $p=26, N=467$}\\\hline
95\%			CI &[-0.676,-0.239]&[-0.718,0.344] &[-0.839,0.423]&[-1.038,0.263]&[-0.756,0.388]\\
			length of CI &0.437 &1.061 &1.262&1.301&1.144\\

\hline\hline
\end{tabular}}
\caption{Reported are 95\% CIs and CI lengths for the estimated coefficient on cholera-related deaths obtained from the conventional AR test, DML with Lasso, and the proposed OAR with Lasso, Ridge, and Elastic Net. Inference for the DML and OAR estimators is based on 100 repetitions of resampled cross-fitting with $K=5$ folds. Panel A includes $p=52$ covariates, and Panel B includes $p=26$ covariates.}
\label{table_cholera}
\end{table}

Table B2 in \citet{ambrus2020loss} reports an IV estimate of \(-0.694\) with a 95\% interval \([-1.333,-0.055]\), which excludes zero. The reported first-stage \(F\) statistics are close to 10, which signals possible weak identification under the \(tF\) critical-value function of \citet{lee2022valid}. Table \ref{table_cholera} revisits this application with two control sets, \(p=52\) in Panel A and \(p=26\) in Panel B. In both panels the conventional AR intervals are negative and exclude zero. By contrast, the DML intervals straddle zero, and the OAR intervals also straddle zero and are wider than DML, which is the pattern expected from identification-robust inversion under potentially weak first-stage signal. The OAR conclusions are stable across regularization choices (Lasso, Ridge, and Elastic Net), and they are unchanged when moving from \(p=26\) to \(p=52\). Taken together, the reanalysis indicates that once inference is made identification-robust, the effect of cholera-related deaths on 1864 rents is not statistically distinguishable from zero in our specification, and this conclusion is not driven by the choice of controls or penalty.

Given these insights, we advocate for our proposed orthogonalized AR method in high-dimensional models, especially when weak identification may be a concern.

\section{Conclusion}
\label{section_conclusion}
In this paper, we address the challenge of weak identification within the LATE framework, especially in the presence of high-dimensional covariates. Our primary contribution lies in the introduction of an identification-robust inference method for the high-dimensional LATE framework. This is paired with a user-friendly algorithm for inference and confidence interval construction for the LATE estimate. We validate the uniformly correct asymptotic size of our proposed method.  

There are several potential directions for future research. First, while we rely on the typical microeconometric assumption of i.i.d. sampling, future work could address theoretical developments for data structures exhibiting complex dependence, such as separately and jointly exchangeable arrays. Second, whereas we focused on a single instrument scenario, it would be valuable to develop methods and theories for settings with many weak instruments, as discussed by \cite{mikusheva2022inference, mikusheva2024weak} and \cite{matsushita2024jackknife}. 
We leave these and other extensions for future research.


\newpage
\appendix
\section*{Appendix}
\section{Useful Lemmas}
\label{section_useful_lemmas}
For the convenience of readers, we provide the estimation of penalty loadings in Algorithm \ref{algorithm_lasso} in the following algorithm and the convergence rate for the nuisance parameter in the Lasso logistic and Lasso OLS models in Lemma \ref{lemma_convergence_rate_logistic}. 
\begin{algorithm} \label{algorithm_penalty_loading}(Estimation of Penalty Loadings): Choose a constant $B\geq 1$ as an upper bound on the number of iterations. 
\\
Step 1. Set $b=0$, and for each $j=1,\cdots, p$, initialize $\hat{l}_{1j,0}^k=\hat{l}_{2j,0}^k=\frac{1}{2}\mathbb{E}_{I_k^c}[X_i^2]^{1/2}$ and $\hat{l}_{3j,0}^k=\mathbb{E}_{I_k^c}[X_i^2(Y_i-\mathbb{E}_{I_k^c}[Y_i])^2]^{1/2}$. 
\\
Step 2. Compute the Lasso and partially Lasso estimator, $\hat{\beta}_{11,k}, \hat{\beta}_{12,k}, \hat{\gamma}_k, \hat{\beta}_{21,k}, \hat{\beta}_{22,k},$  based on $\hat{\Psi}^k_1=\text{diag}(\{\hat{l}^k_{1j,0},j=1,\cdots,p\}), \hat{\Psi}^k_2=\text{diag}(\{\hat{l}^k_{2j,0},j=1,\cdots,p\})$, and $\hat{\Psi}^k_3=\text{diag}(\{\hat{l}^k_{3j,0},j=1,\cdots,p\})$. 
\\
Step 3. Set $\hat{l}^k_{1j,b+1}=\mathbb{E}_{I_k^c}[X_i^2(D_i-\Lambda(Z_i\hat\beta_{11,k}+X_i'\hat{\beta}_{12,k}))]^{1/2}$, $\hat{l}^k_{2j,b+1}=\mathbb{E}_{I_k^c}[X_i^2(Z_i-\Lambda(X_i'\hat{\gamma}_k))]^{1/2}$, and $\hat{l}^k_{3j,b+1}=\mathbb{E}_{I_k^c}[X_i^2(Y_i-Z_i\hat{\beta}_{21,k}-X_i'\hat{\beta}_{22,k})^2]^{1/2}$. 
\\
Step 4. If $b>B$, stop; otherwise set $b \leftarrow b+1$ and go to Step 1.
\end{algorithm}
\begin{lemma}(Convergence rate for  Lasso estimators)
\label{lemma_convergence_rate_logistic}
Suppose that Assumption \ref{a:regularity_LATE} holds. In addition, suppose that the penalty choice $\lambda_1^k=\lambda_2^k=\lambda_3^k=1.1\sqrt{|I_k^c|}\Phi^{-1}(1-0.025/p)$.                            Then  with probability $1-o(1)$, 
\begin{align*}
&\|(\hat{\beta}_{11,k},\hat{\beta}_{12,k})-(\beta_{11}^0,\beta_{12}^0)\|_1\vee\|\hat{\gamma}_k-\gamma^0\|_1\vee \|(\hat{\beta}_{21,k},\hat{\beta}_{22,k})-(\beta_{21}^0,\beta_{22}^0)\|_1\lesssim \sqrt{\frac{s_N^2\log(p\vee N)}{N}}, \qquad \text{and}
\\
&\|(\hat{\beta}_{11,k},\hat{\beta}_{12,k})-(\beta_{11}^0,\beta_{12}^0)\|_{2,N}\vee\|\hat{\gamma}_k-\gamma^0\|_{2,N}\vee \|(\hat{\beta}_{21,k},\hat{\beta}_{22,k})-(\beta_{21}^0,\beta_{22}^0)\|_{2,N}\lesssim \sqrt{\frac{s_N\log(p\vee N)}{N}}.
\end{align*}
\end{lemma}

\section{Beyond the LATE Case: General Instrumental Variables Model}
\label{appendix_general}
In this section we extend the main results to a general setting, removing the LATE-style restriction to a single binary endogenous variable and a single binary instrument. In addition, we propose an adjusted method that remains valid in over-identified models. The notation introduced in this section is independent of that used in the paper and is chosen solely to improve readability.
\begin{example}
\label{example_PIIV}(Partially linear IV model with multiple instruments)
Let $W=(Y,D,Z',X')'$, where $Y$ is the  outcome variable, $D$ is a (possibly endogenous) treatment variable, $X\in\mathbb{R}^p$ be a vector of covariates (high- or low-dimensional),  and $Z\in\mathbb{R}^q$ be a vector of instruments. The structural equations are
\begin{align*}
&Y=D\theta_0+g_0(X)+U, \quad \Ep[U|X,Z]=0,
\\
& Z = m_0(X)+V, \quad \Ep[V|X]=0,
\end{align*}
and the orthogonal score is given by $\psi(W;\theta,\eta)=(Y-\ell(X)-\theta(D-r(X)))(Z-m(X))$, where $\theta$ is the target parameter and $\eta=(\ell,m,r)$ collects the nuisance functions. With $\ell_0=\Ep[Y|X]$ and $r_0(X)=\Ep[D|X]$, the score  satisfies the moment condition $\Ep[\psi(W;\theta_0,\eta_0)]=0$ and Neyman orthogonality condition $\partial_{\eta}\Ep[\psi(W;\theta_0,\eta_0)][\eta-\eta_0]=0$.
\end{example}

Since the AR test can suffer power loss in over-identified models as mentioned in \cite{moreira2003conditional}, we develop a procedure that applies to a general instrumental variables model.  Let $(\theta_0,\eta_0)\in \Theta\times \mathcal{T}$ denote the true parameter vector, and restrict attention to scores in the linear Neyman orthogonal form 
\begin{align}
\label{equation_linear_def}
\psi(w;\theta,\eta)=\psi^a(w;\eta)\theta+\psi^b(w;\eta), \text{ for all } w\in\text{supp}(W).
\end{align}


Define the sample moment $q_N(\theta)=N^{-1}\sum_{i=1}^N\psi(W_i;\theta,\eta_0)$ and let $S_N(\cdot)$ denote its expected value, given as $S_N(\cdot)=\Ep[q_N(\cdot)]$. We test $H_0(\theta_0): \Ep[\psi(W;\theta_0,\eta_0)]=0$ against $H_1(\theta_0):\Ep[\psi(W;\theta_0,\eta_0)]\neq 0$.
With this notation,  define the empirical process:
\begin{align}
\mathbb{G}_N(\cdot)=\sqrt{N}q_N(\cdot)-\sqrt{N}S_N(\cdot)=
\frac{1}{\sqrt{N}}\sum_{i=1}^N\left\{ \psi(W_i;\cdot,\eta_0)-\Ep[\psi(W;\cdot,\eta_0)]\right\}.
\end{align}
Later we establish that $\mathbb{G}_N(\cdot)$ weakly converges to a mean-zero Gaussian process $\mathbb{G}(\cdot)$  with a covariance function  $\Omega(\theta_1,\theta_2)=\Ep[\mathbb{G}(\theta_1)\mathbb{G}(\theta_2)']$. To obtain identification-robust inference for high-dimensional instrumental variables models, we adapt the conditional quasi-likelihood-ratio (QLR) test of \cite{andrews2016conditional} to the high-dimensional setting.

For a fixed integer $K>1$, we randomly split the data into $K$ folds, denoted $I_k$ for $k\in[K]$. Using the observations with indices $i\in\{1,\cdots,N\}\setminus I_k$, we  estimate the nuisance parameters $\hat{\eta}_k$ using an appropriate machine learning or nonparametric method. Then we compute certain transformations of  the score  using observations indexed by  $i\in I_k$. Below we  detail the resulting estimators and  the confidence sets  for  high-dimensional  IV framework.

We estimate the empirical process $\mathbb{G}_N(\theta)$ by
\begin{align}
\hat{\mathbb{G}}_N(\theta)=\sqrt{N}\left\{\frac{1}{N}\sum_{k=1}^K\sum_{i\in I_k}\psi(W_i;\theta,\hat\eta_k)-\Ep\left[\psi(W_i;\theta,\hat\eta_k)\right]\right\}.
\end{align}
Here $\hat{\mathbb{G}}_N(\theta)$ is computed on the subsample $i\in I_k$. This computation is repeated $K$ times. An estimator of the sample moment is  $\hat{q}_N(\theta)=N^{-1}\sum_{k=1}^K\sum_{i\in I_k}\psi(W_i;\theta,\hat\eta_k)$.  We propose an estimator of $\Omega(\theta_1,\theta_2)$  as 
 \begin{small}
 \begin{align}
\hat{\Omega}(\theta_1,\theta_2)=\frac{1}{N}\sum_{k=1}^K\sum_{i\in I_k}\psi(W_i;\theta_1,\hat{\eta}_k)\psi(W_i;\theta_2,\hat{\eta}_k)'-\frac{1}{N^2}\sum_{k=1}^K\sum_{k'=1}^K\sum_{i\in I_k, i'\in I_{k'}}\psi(W_i;\theta_1,\hat{\eta}_k)\psi(W_{i'};\theta_2,\hat{\eta}_{k'})'. 
 \end{align}
  \end{small}
Fix a candidate null value $\theta_0\in\Theta$. Following \cite{andrews2016conditional}, define the projected moment process $h_N(\theta)=\sqrt{N}\hat{q}_N(\theta)-\hat{\Omega}(\theta,\theta_0)\hat{\Omega}(\theta_0,\theta_0)^{-1}\sqrt{N}\hat{q}_N(\theta_0)$. Next introduce the bias-corrected distance vector 
\begin{align*}
d_N(\theta)=h_N(\theta)+\hat\Omega(\theta,\theta_0)\hat\Omega(\theta_0,\theta_0)^{-1}\xi_{\theta_0}, \quad \text{with } \quad \xi_{\theta_0}\sim N(0,\hat{\Omega}(\theta_0,\theta_0)),
\end{align*}
 and the associated quadratic form 
 \begin{align*}
 Q_N(\theta)=d_N(\theta)'\hat{\Omega}(\theta,\theta)^{-1}d_N(\theta), \qquad Q_N(\theta_0)=\xi'_{\theta_0}\hat\Omega(\theta_0,\theta_0)^{-1}\xi_{\theta_0}
 \end{align*}
The high-dimensional conditional QLR statistic for testing $H_0(\theta_0): \Ep[\psi(W;\theta_0,\eta_0)]=0$
 is
 \begin{align}
 \label{equation_R_statistics}
R_N(\theta_0,\hat{\Omega})=Q_N(\theta_0)-\inf_{\theta\in\Theta}Q_N(\theta)
 \end{align}
 The statistic in \eqref{equation_R_statistics} is identical to that of \cite{andrews2016conditional}. Our modifications lie in how the projected process $h_N(\theta)$ and the covariance estimator $\hat{\Omega}$ are computed. Specifically, we employ a DML procedure that (i) uses Neyman orthogonal scores to remove first-order regularisation bias and (ii) applies $K$-fold cross-fitting to eliminate own-observation overfitting. 
 
 We compute the level-$\alpha$ critical value $c_\alpha(\theta_0,\hat{\Omega})$ from the conditional null distribution of $R_N(\theta_0,\hat{\Omega})$ using draws of $\xi_{\theta_0}$. The $(1-\alpha)$ confidence set is then $
CI_\alpha=\{\theta\in\Theta:R(\theta,\hat\Omega)\leq c_\alpha(\theta,\hat{\Omega})\},
$ that is, we evaluate $R_N(\cdot)$ and $c_\alpha(\cdot)$ at each candidate value of $\theta$ and retain those not rejected.

  \begin{remark}
  As mentioned in \cite{moreira2003conditional}, this statistic simplifies to the pivotal AR statistic when the dimensionality of the instrument is 1, consistent with our LATE framework in the main text. Nevertheless, we find it pertinent to introduce this conditional test statistic here. This is in light of our aspiration to delve into general TSLS estimation in upcoming research, particularly overidentified cases, which are commonly observed in empirical studies.
  \end{remark}

\subsection{General Asymptotic Behavior of the Orthogonalized QLR Test}
\label{Appendix_subsection_general_theory}
To streamline our discussion, we first standardize the notation. For any finite-dimensional vector $\delta$, we define the $l_1$-norm by $\|\delta\|_1$, $l_2$-norm by $\|\delta\|$, $l_\infty$-norm by $\|\delta\|_\infty$, and $l_0$-seminorm by $\|\delta\|_0$, which represents the number of non-zero components of $\delta$. We define the sample expectation operator  as $\mathbb{E}_N[\cdot]=\frac{1}{N}\sum_{i=1}^N[\cdot]$. The prediction norm of $\delta$ is given by
 $\|x_{ij}'\delta\|_{2,N}=\sqrt{\mathbb{E}_N[(x_{ij}'\delta)^2]}$.   For any matrix $A$, $\|A\|$ denotes the $\ell_2$-norm of the matrix. 
 The sequence $\{\delta_N\}_{N\geq 1}$ consists of positive constants approaching $0$, with the condition that $\delta_N\geq N^{-1/2}$. The sequences $\{a_N\}_{N\geq 1}$, $\{v_N\}_{N\geq 1}$, and $\{K_N\}_{N\geq 1}$ are defined as sets of positive constants, possibly growing to infinity, with $v_N\geq 1$ for all $N\geq 1$. We use $a\lesssim b$ to denote $a\leq cb$ for some $c>0$ that does not depend on $N$. Define the complement of set $B$ by $B^c=\{1,\cdots,N\}\setminus B$.


Let target parameter $\theta\in\Theta$ where $\Theta$ is   a compact subset of $\mathbb{R}^{d_\theta}$, and $\eta\in\mathcal{T}$ for a convex set $\mathcal{T}$. For each sample size $N$, the probability law $P=P_N\in\mathcal{P}_N$ is associated with $W_i$.
The null hypothesis corresponds to  the probability family $\mathcal{P}_0$ of distribution. Throughout we work with bounded-Lipschitz convergence; see Section 1.12 of \cite{van1996weak} for its  equivalence  weak convergence of stochastic processes. Define $J_N = \Ep[\partial_\theta\psi(W;\theta_0,\eta_0)]$ the Jacobian of the score at the true parameters,  $\sigma_{min}(J_N)$ and $\sigma_{max}(J_N)$, the smallest and largest  singular values, and   $\tilde\kappa^2_N=\sqrt{N}\sigma_{min}(J_N)$, the concentration  parameter.
With this notation, we now present the following two assumptions.
\begin{assumption}
\label{assumption_DML1}
For $N\geq 3$ and $P\in\mathcal{P}_N$, the following conditions hold.
\begin{enumerate}[(i)]
\item The score satisfies the moment restriction and Neyman orthogonality condition at the true parameters $(\theta_0,\eta_0)$.
\item The map $\eta\mapsto \Ep[\psi(W;\theta,\eta)]$ is twice continuously G\^ateaux-differentiable on the realization set $\mathcal{T}_N\subset \mathcal{T}$.
\item The score $\psi$ is linear as characterized by (\ref{equation_linear_def}).
\item $\Theta$ is a compact set.
\item $\psi(W;\theta,\eta)$ is continuous with respect to $\theta$.
\item $\{W_i\}_{i=1}^N$ is independent and identically distributed (i.i.d).
\item There exists a finite constant $C$ such that  $\sigma_{max}(J_N)\leq C$. Moreover, $\tilde\kappa^2_N=O(1)$ or $\tilde\kappa^2_N\rightarrow\infty$.
\end{enumerate}
\end{assumption}
\begin{assumption}
\label{assumption_DML2}
For  $N\geq 3$ and $P\in\mathcal{P}_0$, the following conditions hold.
\begin{enumerate}[(i)]
\item Given a random subset $I$ of $\{1,\cdots,N\}$ with size $n=N/K$, the nuisance parameter estimator $\hat{\eta}=\hat{\eta}\left((W_i)_{i\in I^c}\right)$ belongs to the realization set $\mathcal{T}_N$ with probability at least $\Delta_N$, where $\mathcal{T}_N$ contains $\eta_0$ and satisfies the following conditions.
\item Let $ c_1>0, q\geq 4$ be  some finite constants. The following conditions on the rates $m_N, m_N',  r_N'$ hold over $P\in\mathcal{P}_0$:
\begin{align*}
(a) \quad &m_N:=\sup_{\eta\in\mathcal{T}_N}(\Ep[\|\psi(W;\theta,\eta)\|^q])^{1/q}\leq c_1,
\\
(b) \quad & m_N':=\sup_{\eta\in\mathcal{T}_N}(\Ep[\|\psi^a(W;\eta)\|^q])^{1/q}\leq c_1,
\\
(c) \quad
&
r_N':=\sup_{\eta\in\mathcal{T}_N}(\Ep[\|\psi(W;\theta,\eta)-\psi(W;\theta,\eta_0)\|^2])^{1/2}\leq \delta_N.
\end{align*}
\item Let $c_0>0$ be some finite constants. All eigenvalues of the matrix $\Ep[\psi(W;\theta,\eta_0)\psi(W;\theta,\eta_0)']$ are bounded from below by $c_0$.
\end{enumerate}
\end{assumption}
Assumptions \ref{assumption_DML1},  \ref{assumption_DML2} are  related to Assumptions 3.1, 3.2 in \cite{CCDDHNR18}. It is crucial to highlight that 
  Assumption 3.1 (e) in \cite{CCDDHNR18} serves as the strong identification condition in their paper.  We relax this restriction to allow the smallest singular value of the Jacobian to shrink at the local-to-zero rate $\tilde\kappa^2_N=\sqrt{N}\sigma_{min}(J_N)=O(1)$, thereby accommodating weak identification. In the partially linear IV example, it translates into  $\sigma_{min}(\Ep[DV'])=c/\sqrt{N}$ with a fixed constant $c>0$. When $\tilde\kappa^2_N\rightarrow \infty$, the framework also encompasses the strongly identified case, thereby eliminating the need for pre-testing instrument strength.


 Assumption \ref{assumption_DML1} stipulates that the score satisfies the moment condition, Neyman orthogonality condition, and a mild smoothness condition. Assumption \ref{assumption_DML1}(iv)(v) guarantees the compactness of the parameter space $\Theta$ and continuity in $\theta$.
 Assumption \ref{assumption_DML2} introduces some mild regularity conditions. Assumption \ref{assumption_DML2}(i) and (ii) assert that the estimator of the nuisance parameter $\hat{\eta}$ belongs to a shrinking neighbourhood of the true nuisance parameter $\eta_0$ and contracts around $\eta_0$ at a rate of $r_N'$  over $P\in\mathcal{P}_0$. Assumption \ref{assumption_DML2}(iii) ensures a non-degenerate limit distribution.To derive the uniform convergence of the Gaussian process $\hat{\mathbb{G}}_N$, we impose  restrictions over $P\in\mathcal{P}_0$ in Assumption \ref{assumption_DML2}(ii)-(iii).
 
Assumption \ref{assumption_DML2} concerns rates and entropy, and any first-stage learner that satisfies it is admissible. This includes sparsity-regularised or machine learning methods such as lasso, post-lasso, elastic net, boosted trees, random forests, and neural networks when 
$p\gg N$, as well as classical kernel, local-polynomial, or series estimators in low-dimensional settings. Hence, the framework covers both modern high-dimensional and traditional semiparametric cases.

\begin{theorem}
\label{thm_Gaussian_process}
Suppose Assumptions \ref{assumption_DML1} and \ref{assumption_DML2} hold. We have
\begin{align}
\label{equation_gaussian_iid}
\hat{\mathbb{G}}_N(\theta)=\mathbb{G}_N(\theta)+O_P(N^{-1/2}r_N'),
\end{align}
where  recall that
\begin{align*}
&\mathbb{G}_N(\theta)=
\frac{1}{\sqrt{N}}\sum_{i=1}^N\left\{ \psi(W_i;\theta,\eta_0)-\Ep[\psi(W;\theta,\eta_0)]\right\}, \quad \text{ and }
\\
&
\hat{\mathbb{G}}_N(\theta)=\frac{1}{\sqrt{N}}\sum_{k=1}^K\sum_{i\in I_k}\psi(W_i;\theta,\hat\eta_k)-\sqrt{N}\Ep\left[\psi(W_i;\theta,\hat\eta_k)\right].
\end{align*}
The process $\hat{\mathbb{G}}_N(\cdot)$ weakly converges to a centered Gaussian process $\mathbb{G}(\cdot)$ for all $P\in\mathcal{P}_0$ with covariance function $\Omega(\theta_1,\theta_2)=\Ep[\left(\psi(W;\theta_1,\eta_0)-\Ep[\psi(W;\theta_1,\eta_0)]\right)(\psi(W;\theta_2,\eta_0)-\Ep[\psi(W;\theta_2,\eta_0)])]$ as $N$ goes to infinity. Moreover, there is a uniformly consistent variance estimator $\hat{\Omega}(\cdot,\cdot)$ for all $P\in\mathcal{P}_0$ in the form of 
 \begin{align*}
\hat{\Omega}(\theta_1,\theta_2)=\frac{1}{N}\sum_{k=1}^K\sum_{i\in I_k}\psi(W_i;\theta_1,\hat{\eta}_k)\psi(W_i;\theta_2,\hat{\eta}_k)'-\frac{1}{N^2}\sum_{k=1}^K\sum_{k'=1}^K\sum_{i\in I_k, i'\in I_{k'}}\psi(W_i;\theta_1,\hat{\eta}_k)\psi(W_{i'};\theta_2,\hat{\eta}_{k'})',
 \end{align*}
and  for any $\varepsilon>0,$ 
\begin{align*}
\lim_{N\rightarrow\infty}\sup_{P\in\mathcal{P}_0}P\Big\{\sup_{\theta_1,\theta_2}\|\hat{\Omega}(\theta_1,\theta_2)-\Omega(\theta_1,\theta_2)\|>\varepsilon\Big\}=0.
\end{align*}
Under the null, we have 
\begin{align*}
\lim_{N\rightarrow \infty}\sup_{P\in\mathcal{P}_0}P(R_N(\theta_0,\hat{\Omega})>c_\alpha(\theta_0,\hat{\Omega}))\leq \alpha.
\end{align*}
\end{theorem}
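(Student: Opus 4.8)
The plan is to establish the three claims in sequence, using the expansion \eqref{equation_gaussian_iid} as the engine that reduces weak convergence and variance estimation to an oracle process built from the true nuisance $\eta_0$. The organizing device throughout is cross-fitting: since $\hat\eta_k$ depends only on $(W_i)_{i\in I_k^c}$, it is independent of $\{W_i\}_{i\in I_k}$, so I can condition on $\hat\eta_k$ and treat the fold-$k$ summands as i.i.d.\ draws with a fixed (if random) integrand, replacing any Donsker/entropy control of the nuisance class by elementary conditional-moment bounds. For the expansion I would write, fold by fold,
\begin{align*}
\hat{\mathbb{G}}_N(\theta)-\mathbb{G}_N(\theta)=\sum_{k=1}^K\Big(T_k(\theta)+B_k(\theta)\Big),
\end{align*}
where $T_k(\theta)=\frac{1}{\sqrt N}\sum_{i\in I_k}\big[\big(\psi(W_i;\theta,\hat\eta_k)-\psi(W_i;\theta,\eta_0)\big)-\Ep[\psi(W;\theta,\hat\eta_k)-\psi(W;\theta,\eta_0)\mid\hat\eta_k]\big]$ is a centered empirical-process term and $B_k(\theta)=\sqrt N\,\tfrac{|I_k|}{N}\,\big(\Ep[\psi(W;\theta,\hat\eta_k)\mid\hat\eta_k]-\Ep[\psi(W;\theta,\eta_0)]\big)$ is a bias term. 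Conditional on the event $\hat\eta_k\in\mathcal{T}_N$ (probability at least $\Delta_N$), the term $T_k$ has conditional variance at most $\tfrac{|I_k|}{N}\,\Ep[(\psi(W;\theta,\hat\eta_k)-\psi(W;\theta,\eta_0))^2\mid\hat\eta_k]\le\tfrac1K (r_N')^2$ by Assumption \ref{assumption_DML2}(ii)(c), so Chebyshev controls it at the rate carried by $r_N'$. For $B_k$ I would Taylor-expand $r\mapsto\Ep[\psi(W;\theta,\eta_0+r(\hat\eta_k-\eta_0))]$: its first derivative at $r=0$ vanishes by Neyman orthogonality \eqref{equation_ortho_condition}, and twice-continuous Gateaux differentiability (Assumption \ref{assumption_DML1}(ii)) bounds the remainder by a second-order quantity in the nuisance error. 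Combining the two bounds delivers the negligible remainder recorded in \eqref{equation_gaussian_iid}.

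Given the expansion, weak convergence of $\hat{\mathbb{G}}_N$ follows from that of the oracle process $\mathbb{G}_N$, and here linearity (Assumption \ref{assumption_DML1}(iv)) does the real work. Writing $\psi(W_i;\theta,\eta_0)=\psi^a(W_i;\eta_0)\theta+\psi^b(W_i;\eta_0)$ gives
\begin{align*}
\mathbb{G}_N(\theta)=\theta\,A_N+B_N,\qquad A_N=\frac{1}{\sqrt N}\sum_{i=1}^N\big(\psi^a(W_i;\eta_0)-\Ep[\psi^a]\big),\quad B_N=\frac{1}{\sqrt N}\sum_{i=1}^N\big(\psi^b(W_i;\eta_0)-\Ep[\psi^b]\big).
\end{align*}
Thus $\theta\mapsto\mathbb{G}_N(\theta)$ is affine with random slope and intercept $(A_N,B_N)$, so the process lives in a two-dimensional family and tightness in $\ell^\infty(\Theta)$ over the compact $\Theta$ is automatic. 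It then suffices that $(A_N,B_N)$ obey a bivariate CLT uniformly over $P\in\mathcal{P}_0$; I would verify a Lyapunov/Lindeberg condition from the moment bounds $m_N,m_N'\le c_1$ with $q\ge4$ and the non-degeneracy $\Ep[\psi(W;\theta,\eta_0)^2]\ge c_0$ of Assumption \ref{assumption_DML2}(iii), and upgrade pointwise-in-$P$ to uniform-in-$P$ convergence by an almost-sure-representation/subsequence argument in the bounded-Lipschitz metric, as in \cite{CCDDHNR18} and Section~1.12 of \cite{van1996weak}. The limit is $\mathbb{G}(\theta)=\theta A+B$, a centered Gaussian process whose covariance is exactly the stated $\Omega(\theta_1,\theta_2)$.

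For uniform consistency of $\hat\Omega$ I would again exploit linearity to collapse the double-index supremum: $\hat\Omega(\theta_1,\theta_2)$ is a bilinear form in $(\theta_1,\theta_2)$ whose coefficients are cross-fitted sample second moments of $\psi^a(\cdot;\hat\eta_k)$ and $\psi^b(\cdot;\hat\eta_k)$, so on the compact $\Theta$ the quantity $\sup_{\theta_1,\theta_2}\|\hat\Omega(\theta_1,\theta_2)-\Omega(\theta_1,\theta_2)\|$ is bounded by a fixed multiple of the deviations of these finitely many coefficient estimates from their population values. Each deviation I would handle in two steps: first replace $\hat\eta_k$ by $\eta_0$, with the error governed by $r_N'$ via Cauchy--Schwarz and the moment bounds $m_N,m_N'\le c_1$; then apply a weak law to the $\eta_0$-version, with uniformity over $\mathcal{P}_0$ from Markov's inequality and the $q\ge4$ bounds (the relevant products then have at least two bounded moments). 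This yields the displayed uniform consistency.

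The main obstacle is the bias term $B_k$, and more broadly making every bound hold \emph{uniformly} over the family $\mathcal{P}_0$, whose dimension may grow with $N$. The delicate point is that Neyman orthogonality only annihilates the first-order effect of the nuisance error, so the surviving second-order term must be shown negligible after multiplication by $\sqrt N$ using only the high-level rate $r_N'$ and twice Gateaux-differentiability, with no explicit handle on $\hat\eta_k-\eta_0$; this is exactly where the interplay between Assumptions \ref{assumption_DML1}(ii)--(iii) and \ref{assumption_DML2}(i)--(ii) must be exploited carefully, and where the low-level lasso rates of Section \ref{subsection_low_conditions} are ultimately invoked. By comparison, the CLT and the law of large numbers are routine once linearity has reduced the process to the two coordinates $(A_N,B_N)$ and the covariance to three scalar moments.
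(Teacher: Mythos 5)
Your overall architecture matches the paper's: you exploit cross-fitting to condition on $\hat\eta_k$ and bound the fold-$k$ centered empirical-process term by its conditional variance, which Assumption \ref{assumption_DML2}(ii)(c) caps at $(r_N')^2$; you use linearity of the score to reduce tightness of the process to a bivariate CLT for the slope and intercept (the paper proves asymptotic equicontinuity directly from $\psi(W;\theta_1,\eta)-\psi(W;\theta_2,\eta)=(\theta_1-\theta_2)\psi^a(W;\eta)$, which is the same idea); and your two-step treatment of $\hat\Omega$ (swap $\hat\eta_k$ for $\eta_0$ at rate $r_N'$ via Cauchy--Schwarz, then a weak law under the $q\ge 4$ moment bounds) is exactly the paper's Step 3. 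One bookkeeping note: the conditional variance bound gives $T_k=O_P(r_N')$ at the level of the $\sqrt{N}$-scaled process, so the expansion you obtain is $\hat{\mathbb{G}}_N(\theta)=\mathbb{G}_N(\theta)+O_P(r_N')$ rather than $O_P(N^{-1/2}r_N')$; this is also what the paper's own argument delivers (its $O_P(N^{-1/2}r_N')$ display is stated at the level of the unscaled sample average $\Enk$), and since $r_N'\le\delta_N=o(1)$ it still suffices for every downstream conclusion.

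The one substantive divergence is your bias term $B_k$. Under the paper's reading of the centering in $\hat{\mathbb{G}}_N$ --- namely $\Ep[\psi(W_i;\theta,\hat\eta_k)]$ is the expectation over $W_i$ holding $\hat\eta_k$ fixed, which is how the proof manipulates it when it writes $\hat{\mathbb{G}}_N-\mathbb{G}_N$ as $n^{-1/2}\big(\mathbb{G}_{n,k}[\psi(W;\theta,\hat\eta_k)]-\mathbb{G}_{n,k}[\psi(W;\theta,\eta_0)]\big)$ --- the difference $\hat{\mathbb{G}}_N(\theta)-\mathbb{G}_N(\theta)$ is \emph{exactly} $\sum_k T_k(\theta)$ and $B_k$ does not appear; Neyman orthogonality is in fact never invoked in the paper's proof of this theorem. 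If instead you center $\hat{\mathbb{G}}_N$ at the oracle mean $\Ep[\psi(W;\theta,\eta_0)]$, then your identity $\hat{\mathbb{G}}_N-\mathbb{G}_N=\sum_k(T_k+B_k)$ is the right one and orthogonality kills the first-order part of $B_k$ --- but the surviving second-order Gateaux remainder is of order $\sqrt{N}\,\|\hat\eta_k-\eta_0\|_{P,2}^2\asymp\sqrt{N}\,\delta_N^2$, and Assumption \ref{assumption_DML2} contains no condition forcing this to vanish (it supplies only $m_N$, $m_N'$ and $r_N'\le\delta_N$ with $\delta_N\ge N^{-1/2}$; the analogue of the $\sqrt{N}\lambda_N'\le\delta_N$ requirement in Assumption 3.2 of \cite{CCDDHNR18} has been deliberately dropped). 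So as written, your control of $B_k$ asserts negligibility that the stated assumptions do not deliver; the gap disappears once you adopt the conditional centering, under which the theorem needs neither orthogonality nor a second-order rate.
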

\begin{proof}
See Appendix \ref{subsection_proof_thm1}.
\end{proof}

Theorem \ref{thm_Gaussian_process} serves as an extension of   \cite{CCDDHNR18}. In their work,
they  introduce the  pointwise convergence of  the target parameter estimator $\hat\theta$, and discuss the variance estimator associated with the DML estimator $\hat{\theta}$. Our contribution  broadens their results. Specifically,  we demonstrate that our  proposed empirical process exhibits uniform convergence towards a Gaussian process across a broad class of models. Notably,  these models do not  impose restriction on the identification strength, encompassing a wide range of identification scenarios. Moreover,  
our variance estimator $\hat{\Omega}(\theta_1,\theta_2)$ stands for a uniformly consistent estimator for $\Omega(\theta_1,\theta_2)$ under the null. Crucially, the weak convergence result enables us to handle the weak identification challenges effectively.

\begin{lemma} 
\label{lemma_low_condition}
Suppose Assumptions \ref{a:identification_LATE} and \ref{a:regularity_LATE} hold. Then Assumptions \ref{assumption_DML1} and \ref{assumption_DML2} hold for the Neyman orthogonal
score function $\psi(W;\theta,\eta)$ in equation (\ref{equation_score_LATE}) in the LATE framework.
\end{lemma}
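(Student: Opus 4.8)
The plan is to verify the seven conditions of Assumption \ref{assumption_DML1} and the three conditions of Assumption \ref{assumption_DML2} one at a time, exploiting the explicit structure of the LATE score in \eqref{equation_score_LATE}. Several items in Assumption \ref{assumption_DML1} are immediate from the construction of $\psi$: linearity (iv) holds because $\psi$ is manifestly affine in $\theta$, with $\psi^a$ equal to the (negative) compliance expression and $\psi^b$ the intent-to-treat expression in \eqref{equation_linear_def}; continuity in $\theta$ (vi) and compactness of $\Theta$ (v) follow from affinity and Assumption \ref{assumption_LATE_DML}(vi); and the i.i.d.\ requirement (vii) is part of the maintained sampling framework. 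The moment condition (i), $\Ep[\psi(W;\theta_0,\eta_0)]=0$, is exactly the defining identity of the doubly robust LATE estimand \eqref{equation_LATE_formula}, since $\theta_0$ equals the ratio of the intent-to-treat effect to the compliance probability. For Neyman orthogonality (iii) and twice Gateaux-differentiability (ii), I would compute the Gateaux derivative $D_r[\eta-\eta_0]$ term by term; the overlap condition (Assumption \ref{assumption_LATE_DML}(iii)) keeps the reciprocals $1/p_0(X)$ and $1/(1-p_0(X))$ bounded and the logistic link $\Lambda$ smooth with bounded derivatives, so differentiation under the expectation is justified and the first derivative at $\eta_0$ vanishes because each correction term in $\psi$ is an unbiased influence-function adjustment (the AR-type/doubly robust property already asserted after \eqref{equation_score_LATE}).

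The substance of the lemma lies in Assumption \ref{assumption_DML2}. For (i) I would first \emph{define} the realization set $\mathcal{T}_N$ as a shrinking neighborhood of $\eta_0$ cut out by the estimation rates---$\ell_2$ and prediction-norm bounds together with a sparsity control of order $s_N$---and then invoke Lemmas \ref{lemma_convergence_rate_nuisance} and \ref{lemma_convergence_rate_OLS} to conclude that the lasso-logistic estimators $(\hat\beta_{11,k},\hat\beta_{12,k})$, $\hat\gamma_k$ and the lasso-OLS estimator $(\hat\beta_{21,k},\hat\beta_{22,k})$ each lie in $\mathcal{T}_N$ with probability at least $\Delta_N$; this is precisely where Assumptions \ref{a:eigenvalue_spar}--\ref{a: covariates} (sparse eigenvalues, sparsity, bounded parameters, and the rate restriction $N^{-1/2+2/q}M_N^2 s_N\log^2 a_N\le\Delta_N$) enter. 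For the moment bounds (ii)(a) and (ii)(b) I would apply Minkowski and H\"older inequalities to $\psi$ and $\psi^a$: the binary $Z,D$ contribute factors in $[0,1]$, the reciprocal propensities are bounded by $1/\varepsilon$ via overlap, $\Lambda\in[0,1]$, $\|Y\|_{P,q}\le c_1$ controls the outcome, and the linear indices $X_i'\beta$ are controlled on $\mathcal{T}_N$ by the bounded-parameter Assumption \ref{a:parameter} together with the covariate moment bound $M_N$, yielding uniform $L_q$ bounds. Condition (iii), non-degeneracy $\Ep[\psi(W;\theta,\eta_0)^2]\ge c_0$, I would obtain from the reduced-form error lower bound $\|u\|_{P,2}\ge c_0$ in Assumption \ref{assumption_LATE_DML}(iv), since at $\eta_0$ the score contains the term $Z u/p_0(X)-\ldots$ whose variance is bounded below once the propensity is bounded.

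The main obstacle is the contraction rate (ii)(c), namely $r_N'=\sup_{\eta\in\mathcal{T}_N}\big(\Ep[(\psi(W;\theta,\eta)-\psi(W;\theta,\eta_0))^2]\big)^{1/2}\le\delta_N$. Here I would expand the score difference by a mean-value argument in $\eta$, separating the contributions of the outcome regression, the first-stage conditional mean, and the propensity score. Each piece factors into a bounded derivative---controlled by the smoothness of $\Lambda$ and by overlap keeping denominators away from zero---times an estimation error of the form $X_i'(\hat\beta-\beta^0)$ or $X_i'(\hat\gamma-\gamma^0)$. Converting these into $L_2(P)$ magnitudes requires the population second-moment bounds of Assumption \ref{a: covariates}(i)--(ii), which relate $\Ep[((Z_i,X_i')\delta)^2]$ to $\|\delta\|^2$, after which the lasso prediction-norm and $\ell_2$ rates from Lemmas \ref{lemma_convergence_rate_nuisance} and \ref{lemma_convergence_rate_OLS} deliver the bound $\delta_N$. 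The delicate points are handling the products of estimation errors with the reciprocal-propensity weights (requiring that $\hat p$ itself stay bounded away from $\{0,1\}$ on $\mathcal{T}_N$) and ensuring the cross terms between first-stage and propensity errors are of the right (second) order; the sparse eigenvalue conditions in Assumption \ref{a:eigenvalue_spar} are what guarantee these rate translations hold uniformly over the realization set.
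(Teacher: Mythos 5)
Your proposal is correct and follows essentially the same route as the paper: verifying Assumption \ref{assumption_DML1} by direct computation of the Gateaux derivative (using the law of iterated expectations and overlap), establishing Assumption \ref{assumption_DML2}(i) via the lasso rate Lemmas \ref{lemma_convergence_rate_nuisance}--\ref{lemma_convergence_rate_OLS}, deriving the moment bounds (ii)(a)--(b) from overlap and $\|Y\|_{P,q}\le c_1$, obtaining non-degeneracy (iii) from $\|u\|_{P,2}\ge c_0$, and bounding the contraction rate (ii)(c) by splitting the score difference into outcome-regression, first-stage, and propensity pieces. The only cosmetic difference is that the paper bounds (ii)(c) by a direct triangle-inequality decomposition on the $\|\cdot\|_{P,2}$-neighborhood defining $\mathcal{T}_N$ rather than a mean-value expansion, which is slightly simpler since the score is affine in $g$ and $m$ and the propensity enters only through reciprocals kept away from zero by overlap.
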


\begin{proof}
See Appendix \ref{subsection_proof_thm2}.
\end{proof}

\section{Proofs of the Main Results}
\label{section_proofs}
\subsection{Proof of Theorem \ref{thm_Gaussian_process}}
\label{subsection_proof_thm1}
\begin{proof}
Without loss of generality, we define the size of each fold $I_k$ as $n=N/K$. For notational simplicity, we introduce  the notation $[r]=\{1,\cdots,r\}$ for any $r\in\mathbb{N}$.
Let us  break down the proof into four steps. In Step 1, we demonstrate  equation (\ref{equation_gaussian_iid}) and establish the asymptotic normality of $\hat{\mathbb{G}}_N(\theta)$ over $P\in\mathcal{P}_0$, that is, the asymptotic normality 
of $(\hat{\mathbb{G}}_N(\theta_1),\cdots,\hat{\mathbb{G}}_N(\theta_L))$ for any  $(\theta_1,\cdots,\theta_L)\in\Theta\times\cdots\times\Theta$. In step 2, we  establish the stochastic equicontinuity of $\hat{\mathbb{G}}_N$ over $P\in\mathcal{P}_0$.   Given that $\Theta$ is a compact set, the proof of the weak convergence result is done.
In Step 3, we prove that $\hat{\Omega}(\theta_1,\theta_2)$ serves as  a uniformly consistent estimator for the covariance function $\Omega(\theta_1,\theta_2)$ over $P\in\mathcal{P}_0$. 
In Step 4, we prove the uniformly asymptotically correct size control result. 
\bigskip\\
\noindent \textbf{Step 1.}
In this step, we first establish equation (\ref{equation_gaussian_iid}). Because $K$ is a fixed integer and independent of $N$, it suffices to show that over $P\in\mathcal{P}_0$, for any $k\in[K]$,
\begin{align}
\label{proof_bound_I3}
\Enk[\psi(W;\theta,\hat{\eta}_k)]-\Ep[\psi(W;\theta,\hat{\eta}_k)]-\left(\Enk[\psi(W_i,\theta,\eta_0)]-\Ep[\psi(W;\theta,\eta_0)]\right)=O_p(N^{-1/2}r_N').
\end{align}
For notational simplicity, we define $\Enk[f(W)]=n^{-1}\sum_{i\in I_k}f(W_i)$.
In order to show this, let us fix $k\in[K]$ and introduce an empirical process notation over $P\in\mathcal{P}_0$,
\begin{align*}
\mathbb{G}_{n,k}[\phi(W)]=\frac{1}{\sqrt{n}}\sum_{i\in I_k}(\phi(W_i)-\Ep[\phi(W)]),
\end{align*}
where $\phi$ is any $P_N$-integrable function of $W$. Then by triangle inequality, we have
\begin{align}
&\left\|\Enk[\psi(W;\theta,\hat{\eta}_k)]-\Ep[\psi(W;\theta,\hat{\eta}_k)]-\left(\Enk[\psi(W_i,\theta,\eta_0)]-\Ep[\psi(W;\theta,\eta_0)]\right)\right\|
\\
&
=n^{-1/2} \|\mathbb{G}_{n,k}[\psi(W;\theta,\hat{\eta}_k)]-\mathbb{G}_{n,k}[\psi(W;\theta,\eta_0)]\|:=n^{-1/2}\mathcal{I}_{k3}.
\end{align}
Notice that, conditional on $(W_i)_{i\in I_k^c}$, the estimator $\hat{\eta}_k$ is non-stochastic. Then we have,
\begin{align*}
\Ep[\mathcal{I}_{k3}^2|(W_i)_{i\in I_k^c}]
&=\Ep\left[\|\psi(W;\theta,\hat\eta_k)-\psi(W;\theta_0,\eta_0)\|^2|(W_i)_{i\in I_k^c}\right]
\\
&\leq \sup_{\eta\in\mathcal{T}_N}\Ep\left[\|\psi(W;\theta,\eta)-\psi(W;\theta_0,\eta_0)\|^2|(W_i)_{i\in I_k^c}\right]
\\
&
\leq\sup_{\eta\in\mathcal{T}_N}\Ep[\|\psi(W;\theta,\eta)-\psi(W;\theta_0,\eta_0)\|^2]\leq (r_N')^2.
\end{align*}
This completes the proof of equation (\ref{equation_gaussian_iid}). Combining (\ref{equation_gaussian_iid}) with the Lindeberg-Feller central limit theorem  and the Cramer-Wold device yields the asymptotic normality of $\hat{\mathbb{G}}_{N}(\theta)$ for any $P\in\mathcal{P}_0$.
\bigskip\\
\noindent \textbf{Step 2.}
In this step, we prove the stochastic equicontinuity of $\hat{\mathbb{G}}_N$ over $P\in\mathcal{P}_0$. The stochastic equicontinuity of $\hat{\mathbb{G}}_N$ can be stated as, for any $\epsilon_1>0$, and any $\theta_1,\theta_2\in\Theta$ such that $\|\theta_1-\theta_2\|\leq \delta$,
\begin{align}
\lim _{\delta\rightarrow 0} 
\underset{ N\rightarrow \infty}  {\lim\sup} P\left(\|\hat{\mathbb{G}}_N(\theta_1)-\hat{\mathbb{G}}_N(\theta_2)\|>\epsilon_1\right)=0.
\end{align}
By Markov's  inequality, for any $\epsilon_1>0$,
\begin{align*}
P\left(\left\|\hat{\mathbb{G}}_N(\theta_1)-\hat{\mathbb{G}}_N(\theta_2)\right\|>\epsilon_1\right)\leq \frac{1}{\epsilon_1}\Ep\left[\left\|\hat{\mathbb{G}}_N(\theta_1)-\hat{\mathbb{G}}_N(\theta_2)\right\|\right].
\end{align*}
Thus, it suffices to show that  for each $k\in [K]$.
\begin{align}
\label{equation_asymp_equicont}
\lim _{\delta\rightarrow 0} 
\underset{ N\rightarrow \infty}  {\lim }\sup_{P\in\mathcal{P}_0}\sqrt{N}\Ep\left[\|\Enk[(\theta_1-\theta_2)\psi^a(W;\hat{\eta}_k)]-\Ep[(\theta_1-\theta_2)\psi^a(W;\hat{\eta}_k)]\|\right]=0.\end{align}
Note that 
\begin{align*}
\E_P\left[\big\|\Enk[(\theta_1-\theta_2)\psi^a(W;\hat{\eta}_k)]-\Ep[(\theta_1-\theta_2)\psi^a(W;\hat{\eta}_k)]\big\|^2\right]\leq n^{-1}\delta^2\Ep\left[\|\psi^a(W;\hat{\eta}_k)\|^2\right]\leq n^{-1}\delta^2c_1^2,
\end{align*}
which implies the equation (\ref{equation_asymp_equicont}). Thus, we complete the proof of the asymptotic equicontinuity of $\hat{\mathbb{G}}_N$ over $P\in\mathcal{P}_0$.
\bigskip\\
\noindent \textbf{Step 3.}
In this step, we first show $\hat{\Omega}(\theta_1,\theta_2)=\Omega(\theta_1,\theta_2)+O_P(\rho_N)$, and then we show $\hat{\Omega}$ is a uniformly consistent estimator for $\Omega$ over $P\in\mathcal{P}_0$. To prove the first part,
it suffices to show that over $P\in\mathcal{P}_0$ and each $k\in[K]$,
\begin{align*}
&\mathcal{I}_k=\left\|\Enk[\psi(W;\theta_1,\hat{\eta}_k)\psi(W;\theta_2,\hat{\eta}_k)']-\Ep[\psi(W;\theta_1,\eta_0)\psi(W;\theta_2,\eta_0)']\right\|=O_p(\rho_N), \quad \text{and} 
\\
&
\mathcal{I}_{k}'=\left\|\Enk[\psi(W;\theta,\hat{\eta}_k)]-\Ep[\psi(W;\theta,\eta_0)]\right\|=O_p(\rho_N).
\end{align*}
Note that by triangle inequality, we have 
$
\mathcal{I}_k\leq \mathcal{I}_{k1}+\mathcal{I}_{k2},
$ and 
$
\mathcal{I}_k'\leq \mathcal{I}_{k4}+\mathcal{I}_{k5},
$
where
\begin{align*}
&\mathcal{I}_{k1}=\left\|\Enk[\psi(W;\theta_1,\hat{\eta}_k)\psi(W;\theta_2,\hat{\eta}_k)']-\Enk[\psi(W;\theta_1,\eta_0)\psi(W;\theta_2,\eta_0)']\right\|,
\\
&
\mathcal{I}_{k2}=\left\|\Enk[\psi(W;\theta_1,\eta_0)\psi(W;\theta_2,\eta_0)']-\Ep[\psi(W;\theta_1,\eta_0)\psi(W;\theta_2,\eta_0)']\right\|,
\\
&\mathcal{I}_{k4}=\left\|\Enk[\psi(W;\theta,\hat{\eta}_k)]-\Enk[\psi(W;\theta,\eta_0)]\right\|,
\quad
\mathcal{I}_{k5}=\left\|\Enk[\psi(W;\theta,\eta_0)]-\Ep[\psi(W;\theta,\eta_0)]\right\|.
\end{align*}
First, we  bound $\mathcal{I}_{k2}$ and $\mathcal{I}_{k5}$. Note that for $q\geq 4$, we have 
\begin{align*}
&\Ep[\mathcal{I}^2_{k2}]\leq \sup_{P\in\mathcal{P}_0}
n^{-1}\Ep[\|\psi(W;\theta,\eta_0)\|^4]\leq n^{-1}c_1^4,
\\
&
\Ep[\mathcal{I}_{k5}^2]\leq \sup_{P\in\mathcal{P}_0}
n^{-1}\Ep[\|\psi(W;\theta,\eta_0)\|^2]\leq n^{-1}c_1^2,
\end{align*}
where the last inequality follows from Assumption \ref{assumption_DML2}(ii) and Jensen's inequality. 
Next, we try to bound $\mathcal{I}_{k1}$.
\begin{align*}
\mathcal{I}_{k1}
&=\big\|\frac{1}{n}\sum_{i\in I_k}\left[\psi(W_i;\theta_1,\hat{\eta}_k)\psi(W_i;\theta_2,\hat{\eta}_k)'-\psi(W_i;\theta_1,\eta_0)\psi(W_i;\theta_2,\eta_0)'\right]\big\|
\\
&\leq \frac{1}{n}\sum_{i\in I_k}\big\|\psi(W_i;\theta_1,\hat{\eta}_k)\psi(W_i;\theta_2,\hat{\eta}_k)'-\psi(W_i;\theta_1,\eta_0)\psi(W_i;\theta_2,\eta_0)'\big\|
\\
&\leq \frac{2}{n}\sum_{i\in I_k}\sup_{\theta\in\Theta}\sup_{\eta\in\mathcal{T}_N}\Big(\left\|\psi(W_i;\theta,\hat{\eta}_k)-\psi(W_i;\theta,\eta_0)\right\|\times\|\psi(W_i;\theta,\eta)\|\Big)
\\
&
\leq 
\frac{2}{n}\sum_{i\in I_k}\Big(\sup_{\theta\in\Theta}\left\|\psi(W_i;\theta,\hat{\eta}_k)-\psi(W_i;\theta,\eta_0)\right\|^2\Big)^{1/2}\times\Big(\sup_{\theta\in\Theta}\sup_{\eta\in\mathcal{T}_N}\frac{2}{n}\sum_{i\in I_k}\|\psi(W_i;\theta,\eta)\|^2\Big)^{1/2}
\end{align*}
and the conditional expectation of the first term given $(W_i)_{i\in I_k^c}$ on the event that $\hat{\eta}_k\in\mathcal{T}_N$ is equal to 
\begin{align*}
\sup_{P\in\mathcal{P}_0}\Ep\left[\|\psi(W;\theta,\hat\eta_k)-\psi(W;\theta,\eta_0)\|^2|(W_i)_{i\in I_k^c}\right]\leq \sup_{\eta\in\mathcal{T}_N,P\in\mathcal{P}_0}\Ep\left[\|\psi(W;\theta,\eta)-\psi(W;\theta,\eta_0)\|^2|(W_i)_{i\in I_k^c}\right]=r_N'^2,
\end{align*}
Because the event that $\hat{\eta}_k\in\mathcal{T}_N$ holds with probability $1-\Delta_N=1-o(1)$, it follows that $\mathcal{I}_{k1}=O_P(r_N')=O_P(\delta_N)$. Since $\mathcal{I}_{k2}=O_P(N^{-1/2})$ and $\delta_N\geq N^{-1/2}$, we have $\mathcal{I}_k=O_p(\rho_N)$ with $\rho_N\lesssim \delta_N$. Then we try to bound $\mathcal{I}_{k4}$.
\begin{align*}
\mathcal{I}_{k4}=
&\Big\|\frac{1}{n}\sum_{i\in I_k}\left[\psi(W_i;\theta,\hat{\eta}_k)-\psi(W_i;\theta,\eta_0)\right]\Big\|
\leq \frac{1}{n}\sum_{i\in I_k}\left\|\psi(W_i;\theta,\hat{\eta}_k)-\psi(W_i;\theta,\eta_0)\right\|
\\
&
\leq \sup_{\theta\in\Theta}\left(\frac{1}{n}\sum_{i\in I_k}\left\|\psi(W_i;\theta,\hat{\eta}_k)-\psi(W_i;\theta,\eta_0)\right\|^2\right)^{1/2}.
\end{align*}
By using a similar argument to the one we use to bound $\mathcal{I}_{k1}$, we obtain $\mathcal{I}_{k4}=O_P(r_N')$.  Therefore, we have $\mathcal{I}_{k}'=O_P(\rho_N)$ with $\rho_N\lesssim \delta_N$. This completes the proof of $\hat{\Omega}(\theta_1,\theta_2)=\Omega(\theta_1,\theta_2)+O_P(\rho_N)$. To prove $\hat{\Omega}$ is a uniformly consistent estimator of $\Omega$ over $P\in\mathcal{P}_0$,   we need to show that for any $\varepsilon_2>0$, and any $\theta_1,\theta_2,\theta_1',\theta_2'\in\Theta$ such that $\|\theta_1-\theta_1'\|\leq \delta_1$ and $\|\theta_2-\theta_2'\|\leq \delta_2$, we have
\begin{align*}
\lim _{\delta_1,\delta_2\rightarrow 0} 
\underset{ N\rightarrow \infty}  {\lim } \sup_{P\in\mathcal{P}_0}P\left(\|\hat{\Omega}(\theta_1,\theta_2)-\hat{\Omega}(\theta_1',\theta_2')\|>\epsilon_2\right)=0.
\end{align*} 
By Markov's inequality, for any $\varepsilon_2>0$, 
\begin{align*}
P\left(\|\hat{\Omega}(\theta_1,\theta_2)-\hat{\Omega}(\theta_1',\theta_2')\|>\epsilon_2\right)\leq \frac{1}{\varepsilon_2}\Ep\left[\big\|\hat{\Omega}(\theta_1,\theta_2)-\hat{\Omega}(\theta_1',\theta_2')\big\|\right].
\end{align*}
Thus, it suffices to show that over $P\in\mathcal{P}_0$, for each $k\in[K]$,
\begin{align*}
\mathcal{I}_{k6}:=\Ep\left[\left\|\Enk[\psi(W;\theta_1,\hat{\eta}_k)\psi(W;\theta_2,\hat{\eta}_k)']-\Enk[\psi(W;\theta_1',\hat{\eta}_k)\psi(W;\theta_2',\hat{\eta}_k)']\right\|\right]=0, 
\end{align*}
as $n\rightarrow\infty, \delta_1,\delta_2\rightarrow 0$.
Note that 
\begin{small}
\begin{align*}
\mathcal{I}_{k6}
&\leq\sup_{P\in\mathcal{P}_0}\Ep\left[\En\left[\left\|\psi(W;\theta_1,\hat{\eta}_k)-\psi(W;\theta_1',\hat{\eta}_k)\right\|\cdot\left\|\psi(W;\theta_2,\hat{\eta}_k)\right\|\right]+\En\left[\left\|\psi(W;\theta_2,\hat{\eta}_k)-\psi(W;\theta_2',\hat{\eta}_k)\right\|\cdot\left\|\psi(W;\theta_1',\hat{\eta}_k)\right\|\right]\right]
\\
&
=\sup_{P\in\mathcal{P}_0}\Ep\left[\En\left[\left\|\psi^a(W;\hat{\eta}_k)\cdot(\theta_1-\theta_1')\right\|\cdot\left\|\psi(W;\theta_2,\hat{\eta}_k)\right\|\right]\right]+\Ep\left[\En\left[\left\|\psi^a(W;\hat{\eta}_k)\cdot(\theta_2-\theta_2')\right\|\cdot\left\|\psi(W;\theta_1',\hat{\eta}_k)\right\|\right]\right]
\\
&\leq \sup_{P\in\mathcal{P}_0}(\Ep\left[\|\psi^a(W;\hat{\eta}_k)\|^2\cdot\|\theta_1-\theta_1'\|^2\right])^{1/2}\cdot\left(\Ep\left[\|\psi(W;\theta_2,\hat{\eta}_k)\|^2\right]\right)^{1/2}
\\&
+
\sup_{P\in\mathcal{P}_0}(\Ep\left[\|\psi^a(W;\hat{\eta}_k)\|^2\cdot\|\theta_2-\theta_2'\|^2\right])^{1/2}\cdot\left(\Ep\left[\|\psi(W;\theta_1',\hat{\eta}_k)\|^2\right]\right)^{1/2}
\\
&
\leq \delta_1\sup_{P\in\mathcal{P}_0}(\|\Ep\left[\psi^a(W;\hat{\eta}_k)\|^4\right])^{1/4}\cdot\left(\Ep\left[\|\psi(W;\theta_2,\hat{\eta}_k)\|^4\right]\right)^{1/4}+\delta_2\sup_{P\in\mathcal{P}_0}
(\Ep\left[\|\psi^a(W;\hat{\eta}_k)\|^4\right])^{1/4}\cdot\left(\Ep\left[\|\psi(W;\theta_1',\hat{\eta}_k)\|^4\right]\right)^{1/4}
\\
&
\leq (\delta_1+\delta_2)c_1^2,
\end{align*}
\end{small}
where the second inequality follows from Cauchy-Schwarz inequality, the third inequality follows from Jensen's inequality,  and the last one is from Assumption \ref{assumption_DML2}(ii).
It is obvious that $\lim_{\delta_1,\delta_2\rightarrow 0}\mathcal{I}_{k6}=0$. Therefore, $\hat{\Omega}$ is a uniformly consistent estimator of $\Omega$ over $P\in\mathcal{P}_0$. 
\bigskip\\
\noindent \textbf{Step 4.}
The uniform asymptotic size control relies on Theorem 1 in \cite{andrews2016conditional}. As long as we show that Assumptions 1–4 in \cite{andrews2016conditional} hold, the proof of Theorem \ref{thm_Gaussian_process} is complete.
First, note that Steps 1–3 confirm that Assumptions 1 and 3 in \cite{andrews2016conditional} are satisfied. Next, our Assumption \ref{assumption_DML2} guarantees that Assumption 2 in \cite{andrews2016conditional} is satisfied. Furthermore, Assumption \ref{assumption_DML2}(ii) ensures uniform boundedness of $\Omega(\theta,\theta)$. Lastly, Assumption 4 is directly validated via Lemma 1 in \cite{andrews2016conditional}. Consequently, we complete the proof of Theorem \ref{thm_correct_size}, and thereby the full proof of Theorem \ref{thm_Gaussian_process}.

\end{proof}

\subsection{Proof of Theorem \ref{theorem_weak_convergence}}
\label{subsection_proof_thm2}
\begin{proof}
 As long as we show  Lemma \ref{lemma_low_condition} holds, the proof of Theorem \ref{theorem_weak_convergence}  is done. Let us define $\mathcal{T}_N$ as the set of all $\eta=(g,m,p)$ consisting of $P$-square-integrable function $g, m$ and $p$ such that 
\begin{align*}
&
\|\eta-\eta_0\|_{P,q}\leq c_1,\quad
\|\eta-\eta_0\|_{P,2}\leq \delta_N.
\end{align*}

We proceed in four steps.
\bigskip\\
\noindent \textbf{Step 1.} We first verify the Assumption \ref{assumption_DML1} that the AR-type Neyman orthogonal LATE score  in  (\ref{equation_score_LATE}) satisfies the moment condition  and the Neyman orthogonality condition. It can be easily verified that the moment condition is satisfied. The G\^ateaux derivative in the direction $\eta-\eta_0=(g-g_0,m-m_0,p-p_0)$ is given by 
\begin{align*}
&\partial_{\eta}\Ep\left[\psi(W;\theta_0,\eta)\right]\Big|_{\eta=\eta_0}(\eta-\eta_0)
\\
&=\Ep\left[\left(1-\frac{Z}{p_0(X)}\right)(g(1,X)-g_0(1,X))\right]
-\Ep\left[\left(1-\frac{1-Z}{1-p_0(X)}\right)(g(0,X)-g_0(0,X))\right]
\\
&
-\theta_0\Ep\left[\left(1-\frac{Z}{p_0(X)}\right)(m(1,X)-m_0(1,X))\right]+\theta_0\Ep\left[\left(1-\frac{1-Z}{1-p_0(X)}\right)(m(0,X)-m_0(0,X))\right]
\\
&
+\Ep\Big[\left(\frac{\theta_0Z(D-m_0(1,X))-Z(Y-g_0(1,X))}{p_0(X)^2}+\frac{\theta_0(1-Z)(D-m_0(0,X))-(1-Z)(Y-g_0(0,X))}{(1-p_0(X))^2}\right)
\\
&\times(p(X)-p_0(X))\Big]
\\
&
=0,
\end{align*} 
where the last equality follows from the law of iterated expectations and 
\begin{align}
\label{equation_LIE}
&\Ep[Z|X]=p_0(X),\quad \Ep[Z(Y-g_0(1,X))|X,Z]=0,\quad \Ep[Z(D-m_0(1,X))|X,Z]=0,
\\ \nonumber
&\Ep[1-Z|X]=1-p_0(X),\quad \Ep[(1-Z)(Y-g_0(0,X))|X,Z]=0,\quad \Ep[(1-Z)(D-m_0(0,X))|X,Z]=0.
\end{align}
Referring to the  definitions of the AR-type score for the LATE in  (\ref{equation_score_LATE}) and linear orthogonal score in equation \eqref{equation_linear_def}, we have
\begin{align*}
&\psi^b(W;\eta)=g(1,X)-g(0,X)+\frac{Z(Y-g(1,X))}{p(X)}-\frac{(1-Z)(Y-g(0,X))}{1-p(X)},
\\
&\psi^a(W;\eta)=
-m(1,X)+m(0,X)-\frac{Z(D-m(1,X))}{p(X)}+\frac{(1-Z)(D-m(0,X))}{1-p(X)}.
\end{align*}
Then we have $\psi(W;\theta,\eta)=\psi^b(W;\eta)+\theta\times\psi^a(W;\eta)$.
 Therefore, all the conditions in  Assumption \ref{assumption_DML1} hold.
\bigskip\\
\noindent \textbf{Step 2.}
Next, let us verify Assumption \ref{assumption_DML2}(iii). 
Note that 
\begin{align*}
&\Ep\left[\psi(W;\theta,\eta_0)^2\right]
=\Ep\big[\left(g_0(1,X)-g_0(0,X)-\theta(m_0(1,X)-m_0(0,X))\right)^2\big]
\\
&+\Ep\Big[\Big(\frac{Z(Y-g_0(1,X))}{p_0(X)}-\frac{(1-Z)(Y-g_0(0,X))}{1-p_0(X)}-\theta\big(\frac{Z(D-m_0(1,X))}{p_0(X)}-\frac{(1-Z)(D-m_0(0,X))}{1-p_0(X)}\big)\Big)^2\Big]
\\
&
\geq 
\Ep\Big[\Big(\frac{Z(Y-g_0(1,X))}{p_0(X)}-\frac{(1-Z)(Y-g_0(0,X))}{1-p_0(X)}\Big)^2\Big]-\theta^2\Ep\Big[\Big(\frac{Z(D-m_0(1,X))}{p_0(X)}-\frac{(1-Z)(D-m_0(0,X))}{1-p_0(X)}\Big)^2\Big]
\\
&
\geq
\Ep\left[\frac{Z^2(Y-g_0(1,X))^2}{p_0(X)^2}\right]+\Ep\left[\frac{(1-Z)^2(Y-g_0(0,X))^2}{(1-p_0(X))^2}\right]
\\
&
\geq 
\frac{\Ep\left[Z(Y-g_0(1,X))^2+(1-Z)(Y-g_0(0,X))^2\right]}{(1-\varepsilon)^2}
\\
&
=\frac{\Ep[u^2]}{(1-\varepsilon)^2}
\geq 
\frac{c_0^2}{(1-\varepsilon)^2},
\end{align*}
where the first equality holds since the interaction term equals zero by the equations in \eqref{equation_LIE}, the third inequality follows from the facts that $\varepsilon\leq p_0(X)\leq 1-\varepsilon$, and the last equality follows from Assumption \ref{a:regularity_LATE}(iii). Thus Assumption \ref{assumption_DML2}(iii) is satisfied.
\bigskip\\
\noindent \textbf{Step 3.}
Next, we verify Assumption \ref{assumption_DML2}(i). By Lemmas \ref{lemma_convergence_rate_logistic}  invoked  by Assumption \ref{a:regularity_LATE}, with probability $1-o(1)$,
\begin{align*}
\|\hat{\eta}-\eta_0\|_1\lesssim \sqrt{\frac{s_N^2\log(p\vee N)}{N}}, \quad \text{and} \quad \|\hat{\eta}-\eta_0\|_{2,N}\lesssim \sqrt{\frac{s_N\log(p\vee N)}{N}}.
\end{align*}
Thus Assumption \ref{assumption_DML2}(i) is satisfied.

\noindent \textbf{Step 4.}
\label{step_3}
Next, let us verify the condition in Assumption \ref{assumption_DML2}(ii). 
Note that 
\begin{align*}
\|g_0(D,X)\|_{P,q}
&=(\Ep[|g_0(D,X)|^q])^{1/q}
\\
&
\geq  (\Ep[|g_0(1,X)|^q P(D=1|X)+|g_0(0,X)|^q P(D=0|X)])^{1/q}
\\ 
&
\geq \varepsilon^{1/q}(\Ep[|g_0(1,X)|^q]+\Ep[|g_0(0,X)|^q])^{1/q}
\\
&
\geq \varepsilon^{1/q}(\Ep[|g_0(1,X)|^q]\vee\Ep[|g_0(0,X)|^q])^{1/q}
\\
&
\geq \varepsilon^{1/q}(\|g_0(1,X)\|_{P,q}\vee\|g_0(0,X)\|_{P,q}).
\end{align*}
Since $\|g_0(D,X)\|_{P,q}\leq \|Y\|_{P,q}\leq c_1$ by Assumption \ref{a:regularity_LATE}(vi), we have 
\begin{align*}
\|g_0(1,X)\|_{P,q}\leq c_1/\varepsilon^{1/q}, \text{and} \|g_0(0,X)\|_{P,q}\leq c_1/\varepsilon^{1/q}.
\end{align*}
By using similar arguments,  we obtain 
\begin{align}
\label{basic_inequality}
&\|g(1,X)-g_0(1,X)\|_{P,q}
\leq c_1/\varepsilon^{1/q}, \quad \|g(0,X)-g_0(0,X)\|_{P,q}\leq c_1/\varepsilon^{1/q},
\\ \nonumber
&\|m_0(1,X)\|_{P,q}
\leq 1/\varepsilon^{1/q}, \quad \|m_0(0,X)\|_{P,q}\leq 1/\varepsilon^{1/q},
\\ \nonumber
&\|m(1,X)-m_0(1,X)\|_{P,q}
\leq c_1/\varepsilon^{1/q}, \quad \|m(0,X)-m_0(0,X)\|_{P,q}\leq c_1/\varepsilon^{1/q},
\end{align}
since $\|m_0(D,X)\|_{P,q}\leq 1$, $\|g(D,X)-g_0(D,X)\|_{P,q}\leq c_1$, and $\|m(Z,X)-m_0(Z,X)\|_{P,q}\leq c_1$.
By calculation, we obtain 
\begin{align*}
&\|\psi^a(W;\eta)\|_{P,q}
\leq (1+\varepsilon^{-1})(\|m(1,X)\|_{P,q}+\|m(0,X)\|_{P,q})+2/\varepsilon
\\
& \leq (1+\varepsilon^{-1})(\|m(1,X)-m_0(1,X)\|_{P,q}+\|m_0(1,X)\|_{P,q}+\|m(0,X)-m_0(0,X)\|_{P,q}+\|m_0(0,X)\|_{P,q})+2/\varepsilon
\\
&
\leq (1+\varepsilon^{-1})(2c_1\varepsilon^{-1/q}+2\varepsilon^{-1/q})+2\varepsilon^{-1}:=c_{\varepsilon1},
\\
&
\|\psi^b(W;\eta)\|_{P,q}
\leq (1+\varepsilon^{-1})(\|g(1,X)\|_{P,q}+\|g(0,X)\|_{P,q})+2\|Y\|_{P,q}/\varepsilon
\\
&\leq (1+\varepsilon^{-1})(2c_1\varepsilon^{-1/q}+2\varepsilon^{-1/q})+2c_1\varepsilon^{-1}:= c_{\varepsilon2},
\end{align*}
where $c_{\varepsilon1}$ and $c_{\varepsilon2}$ are constants related to $\varepsilon$ instead of $N$. Note that this completes the verification of Assumption \ref{assumption_DML2} (b)
Therefore, under the null, we have
\begin{align*}
&(\Ep[\|\psi(W;\theta,\eta)\|^q])^{1/q}
=\|\psi(W;\theta,\eta)\|_{P,q}\leq \|\psi(W;\theta,\eta)-\psi(W;\theta_0,\eta)\|_{P,q}+\|\psi(W;\theta_0,\eta)\|_{P,q}
\\
&
\leq |\theta-\theta_0|\times \|\psi^a(W;\eta)\|_{P,q}+\|\psi^b(W;\eta)\|_{P,q}
+|\theta_0|\times\|\psi^a(W;\eta)\|_{P,q}
\\
&
\leq |\theta-\theta_0|c_{\varepsilon1}+c_{\varepsilon2}+|\theta_0|c_{\varepsilon1}
\lesssim 1,
\end{align*}
where the last inequality needs the assumption that $\Theta$ is a compact set by Assumption \ref{a:regularity_LATE}(v). This completes the verification of  Assumption \ref{assumption_DML2}(ii)(a).
 
Next, let us verify the condition in  Assumption \ref{assumption_DML2}(ii)(c). For any $\eta=(g,m,p)$, by the triangle inequality, 
\begin{align*}
(\Ep[\|\psi(W;\theta,\eta)-\psi(W;\theta,\eta_0)\|^2])^{1/2}=\|\psi(W;\theta,\eta)-\psi(W;\theta,\eta_0)\|_{P,2}\leq \mathcal{I}_1+\mathcal{I}_2+\mathcal{I}_3+\mathcal{I}_4,
\end{align*}
where 
\begin{align*}
&\mathcal{I}_1:=\|g(1,X)-g_0(1,X)\|_{P,2}+\|g(0,X)-g_0(0,X)\|_{P,2},
\\
&\mathcal{I}_2:=|\theta|\times\left(\|m(1,X)-m_0(1,X)\|_{P,2}+\|m(0,X)-m_0(0,X)\|_{P,2}\right),
\\
&
\mathcal{I}_3:=\left\|\frac{Z(Y-g(1,X))}{p(X)}-\frac{Z(Y-g_0(1,X))}{p_0(X)}\right\|_{P,2}+\left\|\frac{(1-Z)(Y-g(0,X))}{1-p(X)}-\frac{(1-Z)(Y-g_0(0,X))}{1-p_0(X)}\right\|_{P,2},
\\
&
\mathcal{I}_4:=|\theta|\times\left(\left\|\frac{Z(D-m(1,X))}{p(X)}-\frac{Z(D-m_0(1,X))}{p_0(X)}\right\|_{P,2}+\left\|\frac{(1-Z)(D-m(0,X))}{1-p(X)}-\frac{(1-Z)(D-m_0(0,X))}{1-p_0(X)}\right\|_{P,2}\right).
\end{align*}
By using a similar argument to the one in obtaining equation \eqref{basic_inequality}, we have
\begin{align*}
&\|g(1,X)-g_0(1,X)\|_{P,2}\leq \delta_N/\varepsilon^{1/q}, \quad \|g(0,X)-g_0(0,X)\|_{P,2}\leq \delta_N/\varepsilon^{1/q},
\\
&
\|m(1,X)-m_0(1,X)\|_{P,2}\leq \delta_N/\varepsilon^{1/q}, \quad \|m(0,X)-m_0(0,X)\|_{P,2}\leq \delta_N/\varepsilon^{1/q}.
\end{align*}
so $\mathcal{I}_1\leq 2\delta_N/\varepsilon^{1/q} $ and $\mathcal{I}_2\lesssim 2\delta_N/\varepsilon^{1/q}$. To bound $\mathcal{I}_3$, we have 
\begin{align*}
\mathcal{I}_3
&\leq \varepsilon^{-2}\times\Big(\|Zp_0(X)(Y-g(1,X))-Zp(X)(Y-g_0(1,X))\|_{P,2}
\\
&
+\|(1-Z)(1-p_0(X))(Y-g(0,X))-(1-Z)(1-p(X))(Y-g_0(0,X))\|_{P,2}\Big)
\\
&
\leq
\varepsilon^{-2}\times
\Big(\|p_0(X)(u+g_0(1,X)-g(1,X))-p(X)u\|_{P,2}
\\
&
+\|(1-p_0(X))(u+g_0(0,X)-g(0,X))-(1-p(X))u\|_{P,2}\Big)
\\
&
\leq
\varepsilon^{-2}\times
\Big(\|p_0(X)(g_0(1,X)-g(1,X))\|_{P,2}+\|(p(X)-p_0(X))u\|_{P,2}
\\
&
+\|(1-p_0(X))(g_0(0,X)-g(0,X))\|+\|(p(X)-p_0(X))u\|_{P,2}\Big)
\\
&
\leq 
\varepsilon^{-2}\times
\Big(\|(g_0(1,X)-g(1,X))\|_{P,2}+\sqrt{c_1}\|p(X)-p_0(X)\|_{P,2}
\\
&
+\|(g_0(0,X)-g(0,X))\|+\sqrt{c_1}\|p(X)-p_0(X)\|_{P,2}\Big)
\\
&
\leq 
\varepsilon^{-2}\times (2/\varepsilon^{1/q}+2\sqrt{c_1})\delta_N \leq c_{\varepsilon_3} \delta_N, 
\end{align*}
where $c_{\varepsilon_3}\geq \varepsilon^{-2}\times (2/\varepsilon^{1/q}+2\sqrt{c_1})$, 
 the first inequality follows from $\varepsilon\leq p(X)\leq 1-\varepsilon$ and $\varepsilon\leq 1-p(X)\leq 1-\varepsilon$, and the fourth one follows from Assumption \ref{a:regularity_LATE}(vi). We use a similar argument to bound $\mathcal{I}_4$ and obtain that $\mathcal{I}_4\lesssim \delta_N$. Therefore, we have $\|\psi(W;\theta,\eta)-\psi(W;\theta,\eta_0)\|_{P,2}\lesssim \delta_N$, which completes the verification of  Assumption \ref{assumption_DML2}(ii).
\end{proof}
\subsection{Proof of Theorem \ref{thm_correct_size}}
\label{subsection_proof_thm3}

\begin{proof}
By Theorem \ref{theorem_weak_convergence},
\[
\lim_{N\to\infty}\ \sup_{P\in\mathcal P_0}
P\!\left(\, \bigl|\,\hat\Omega(\theta_0,\theta_0)-\Omega(\theta_0,\theta_0)\,\bigr|>\varepsilon \right)=0,
\quad \forall\,\varepsilon>0.
\]
The same theorem gives the pointwise CLT at $\theta_0$ uniformly in $P$:
\[
\lim_{N\to\infty}\ \sup_{P\in\mathcal P_0}
\Bigl|\, P\!\Bigl( \sqrt{N}\,\hat q_N(\theta_0) \le x \Bigr)
- \Phi\!\Bigl( x/\sqrt{\Omega(\theta_0,\theta_0)} \Bigr) \Bigr| \ =\ 0,\quad  x\in\mathbb R.
\]
By Slutsky's lemma,
\[
\lim_{N\to\infty}\ \sup_{P\in\mathcal P_0}
\Bigl|\, P\!\Bigl( \tfrac{\sqrt{N}\,\hat q_N(\theta_0)}{\sqrt{\hat\Omega(\theta_0,\theta_0)}} \le z \Bigr)
- \Phi(z) \Bigr| \ =\ 0,\quad \forall z\in\mathbb R,
\]
which implies
\[
\lim_{N\to\infty}\ \sup_{P\in\mathcal P_0}
\Bigl|\, P\!\Bigl( AR(\theta_0) \le t \Bigr)
- F_{\chi^2_1}(t) \Bigr| \ =\ 0,\quad \forall t \text{ a continuity point of } F_{\chi^2_1}.
\]
Choosing $t=\chi^2_{1,1-\alpha}$ yields the stated uniform size. 
\end{proof}

\subsection{Proof of Proposition \ref{prop_power}}
\label{proof_proposition_power}
\begin{proof}
Recall $S(\theta)=\Ep[\psi(W;\theta,\eta_0)]$ and $\partial_\theta S(\theta_0)=-B_N$. 
First, consider the case of weak identification with a fixed alternative.
Assume $\kappa_N^2=O(1)$, i.e. $\sqrt{N}B_N\to \kappa\in(0,\infty)$, and the data are generated under the fixed alternative $\theta=\theta_0+\Delta$ with fixed $\Delta\neq 0$. A first–order expansion of $S$ at $\theta_0$ yields
\[
\Ep[\psi(W;\theta_0,\eta_0)]
= S(\theta_0+\Delta)-\partial_{\theta}S(\theta_0)\,\Delta + o(1)
= -B_N\Delta + o(1).
\]
Hence $\sqrt{N}\,\Ep[\psi(W;\theta_0,\eta_0)]\to \nu=-\kappa\Delta$. Orthogonality and cross-fitting imply that nuisance-estimation error is $o_p(1)$ in $\sqrt{N}\hat{q}_N(\theta_0)$, so
$
\sqrt{N}\hat{q}_N(\theta_0)\xrightarrow{d}\mathcal{N}\!\big(\nu,\Omega(\theta_0,\theta_0)\big).
$
By Slutsky’s theorem and the uniform consistency of $\hat{\Omega}(\theta_0,\theta_0)$, 
\[
AR(\theta_0)\xrightarrow{d}\chi_1^2(\Xi),\qquad \text{with}\quad 
\Xi=\frac{\nu^2}{\Omega(\theta_0,\theta_0)}=\frac{\kappa^2\Delta^2}{\Omega(\theta_0,\theta_0)}.
\]
This shows that the orthogonalized AR test retains nontrivial power under bounded concentration.

Next, consider the case of strong identification with a local alternative. 
Assume $B_N\to B>0$ and the data are generated under the local alternative $\theta=\theta_0+h/\sqrt{N}$ with fixed $h$. Then
\[
\Ep[\psi(W;\theta_0,\eta_0)]= -B_N\frac{h}{\sqrt{N}} + o(N^{-1/2}),
\]
so
$
\sqrt{N}\hat{q}_N(\theta_0)\xrightarrow{d}\mathcal{N}\!\big(-Bh,\Omega(\theta_0,\theta_0)\big)
\quad\text{and}\quad
AR(\theta_0)\xrightarrow{d}\chi_1^2(\Upsilon),
$
with
$
\Upsilon=\frac{B^2h^2}{\Omega(\theta_0,\theta_0)}.
$
Because $\psi$ is the efficient orthogonal score, the semiparametric information is 
$I_{\text{eff}}=B^2/\Omega(\theta_0,\theta_0)$, hence 
$\Upsilon = I_{\text{eff}}\,h^2$. 
This verifies that the orthogonalized AR statistic achieves the semiparametric local power \emph{envelope} under strong identification.
\end{proof}

\subsection{Proof of Lemma \ref{lemma_convergence_rate_logistic} }
\label{subsection_proof_lemma2}

\begin{proof}
The proof follows  from Theorems 6.1 and 6.2 in \cite{belloni2017program}. We verify that the conditions required by these theorems hold in our setting. First, Assumption \ref{a:regularity_LATE}(ii)  implies the restricted eigenvalue condition required in \cite{belloni2017program} and Assumption \ref{a:regularity_LATE}(i) implies the sparsity condition. In addition, Condition 6.1(iii)(iv) and  6.2(iii)(iv) in \cite{belloni2017program} are implied by Assumption \ref{a:regularity_LATE}(iii)--(iv). Therefore, setting the penalty parameters $\lambda_1^k = \lambda_2^k = \lambda_3^k = 1.1\sqrt{|I_k^c|} \Phi^{-1}(1 - 0.025/p)$ ensures that the lemma holds.
\end{proof}

\section{Empirical 95\% Coverage of Confidence Sets}
This appendix summarizes the empirical 95\% coverage performance of all inference procedures examined in Section \ref{section_simulation}. For each design, we simulate 3{,}000 replications and record the proportion of confidence sets that contain the true parameter value.
Table  6 reports coverage results for the proposed OAR test, the conventional AR statistic, and the  DML approach. All results are generated under the baseline linear--homoskedastic design $(\rho_Y,\rho_\sigma,\rho_Z)=(0,0,0)$. The left panel presents results for $N=50$, and the right panel presents results for $N=100$. Across all configurations of identification strength and dimensionality, the OAR procedure maintains empirical coverage close to the nominal 95 percent level, demonstrating robustness to weak first-stage settings. In contrast, the conventional AR test tends to under-cover as dimensionality increases, while the DML approach is overly conservative and over-covers when instruments are weak. These findings reinforce the theoretical size-control properties of the proposed inference method.

\begin{table}[htbp]
\centering
\renewcommand{\arraystretch}{0.75}
\setlength{\tabcolsep}{3pt}
\resizebox{\textwidth}{!}{%
\begin{tabular}{cccc|ccc|cccc|ccc}
\hline\hline
\multicolumn{4}{c|}{\textbf{DGP }} & \multicolumn{3}{c|}{\textbf{95\% Coverage }} &
\multicolumn{4}{c|}{\textbf{DGP}} & \multicolumn{3}{c}{\textbf{95\% Coverage }} \\
\hline
$\kappa_N^2$ & $P_C$ & $N$ & $p$ & OAR & AR & DML &
$\kappa_N^2$ & $P_C$ & $N$ & $p$ & OAR & AR & DML \\
\hline
1.88 & 0.21 & 50  & 5   & 0.961 & 0.947 & 0.991 &   1.88 & 0.15 & 100 & 5   & 0.954 & 0.961 & 0.995 \\
 &   &   & 10  & 0.963 & 0.921 & 0.993 &    &   &   & 10  & 0.956 & 0.958 & 0.995 \\
 &  &   & 25  & 0.961 & 0.430 & 0.992 &    &  & & 25  & 0.954 & 0.929 & 0.995 \\
 &   &   & 35  & 0.961 & 0.217 & 0.991 &    &   &   & 50  & 0.951 & 0.466 & 0.995 \\
 &   &   & 50  & 0.962 & 0.018 & 0.994 &    &   &   & 75  & 0.942 & 0.187 & 0.995 \\
 &   &   & 100 & 0.959 & $\times$ & 0.993 &    &   &   & 100 & 0.941 & 0.013 & 0.996 \\
\hline
8.87 & 0.42  & 50 & 5   & 0.942 & 0.911 & 0.979 &     8.87 & 0.30 & 100 & 5   & 0.932 & 0.932 & 0.983 \\
 &   &   & 10  & 0.938 & 0.872 & 0.974 &    &   &   & 10  & 0.932 & 0.923 & 0.981 \\
 &  &   & 25  & 0.941 & 0.432 & 0.974 &    &  & & 25  & 0.925 & 0.893 & 0.981 \\
 &   &   & 35  & 0.936 & 0.218 & 0.979 &    &   &   & 50  & 0.927 & 0.475 & 0.979 \\
 &   &   & 50  & 0.937 & 0.023 & 0.976 &    &   &   & 75  & 0.929 & 0.187 & 0.977 \\
 &   &   & 100 & 0.935 & $\times$ & 0.979 &    &   &   & 100 & 0.931 & 0.014 & 0.979 \\
\hline
27.23 & 0.63 & 50 & 5     & 0.937 & 0.907 & 0.948 & 27.23 & 0.45 & 100 & 5   & 0.925 & 0.927 & 0.954 \\
 &   &   & 10  & 0.932 & 0.860 & 0.949 &    &   &   & 10  & 0.931 & 0.916 & 0.951 \\
 &  &   & 25  & 0.942 & 0.436 & 0.949 &    &  &  & 25  & 0.935 & 0.877 & 0.955 \\
 &   &   & 35  & 0.943 & 0.224 & 0.955 &    &   &   & 50  & 0.929 & 0.479 & 0.955 \\
 &   &   & 50  & 0.939 & 0.018 & 0.943 &    &   &   & 75  & 0.931 & 0.203 & 0.952 \\
 &   &   & 100 & 0.935 & $\times$ & 0.953 &    &   &   & 100 & 0.932 & 0.013 & 0.948 \\
\hline
102.86 & 0.84 & 50 & 5   & 0.939 & 0.910 & 0.938 & 102.86 & 0.60 & 100 & 5   & 0.925 & 0.927 & 0.942 \\
 &   &   & 10  & 0.927 & 0.868 & 0.936 &    &   &   & 10  & 0.927 & 0.917 & 0.939 \\
 &  &   & 25  & 0.941 & 0.431 & 0.938 &    &  &  & 25  & 0.921 & 0.873 & 0.942 \\
 &   &   & 35  & 0.934 & 0.242 & 0.936 &    &   &   & 50  & 0.931 & 0.481 & 0.936 \\
 &   &   & 50  & 0.929 & 0.022 & 0.933 &    &   &   & 75  & 0.937 & 0.194 & 0.941 \\
 &   &   & 100 & 0.929 & $\times$ & 0.932 &    &   &   & 100 & 0.921 & 0.011 & 0.937 \\
\hline\hline
\end{tabular}\label{table_95coverage}
}
\caption{Empirical 95\% coverage rates for $N=50$ (left panel) and $N=100$ (right panel) across identification strength and dimensionality under the baseline linear–homoskedastic design with $(\rho_Y,\rho_{\sigma},\rho_Z)=(0,0,0)$. }
\end{table}

\newpage
\bibliography{biblio}

@article{abadie2003semiparametric,
  title     = {Semiparametric instrumental variable estimation of treatment response models},
  author    = {Abadie, Alberto},
  journal   = {Journal of Econometrics},
  volume    = {113},
  number    = {2},
  pages     = {231--263},
  year      = {2003},
  publisher = {Elsevier}
}

@article{angrist1990lifetime,
  title     = {Lifetime earnings and the Vietnam era draft lottery: evidence from social security administrative records},
  author    = {Angrist, Joshua D},
  journal   = {The American Economic Review},
  pages     = {313--336},
  year      = {1990},
  publisher = {JSTOR}
}

@article{anderson1949estimation,
  title     = {Estimation of the parameters of a single equation in a complete system of stochastic equations},
  author    = {Anderson, Theodore W and Rubin, Herman},
  journal   = {The Annals of Mathematical Statistics},
  volume    = {20},
  number    = {1},
  pages     = {46--63},
  year      = {1949},
  publisher = {Institute of Mathematical Statistics}
}

@article{angrist2000interpretation,
  title     = {The interpretation of instrumental variables estimators in simultaneous equations models with an application to the demand for fish},
  author    = {Angrist, Joshua D and Graddy, Kathryn and Imbens, Guido W},
  journal   = {The Review of Economic Studies},
  volume    = {67},
  number    = {3},
  pages     = {499--527},
  year      = {2000},
  publisher = {Wiley-Blackwell}
}

@article{angrist2022empirical,
  title     = {Empirical strategies in economics: Illuminating the path from cause to effect},
  author    = {Angrist, Joshua D},
  journal   = {Econometrica},
  volume    = {90},
  number    = {6},
  pages     = {2509--2539},
  year      = {2022},
  publisher = {Wiley Online Library}
}

@article{ambrus2020loss,
  title     = {Loss in the time of cholera: Long-run impact of a disease epidemic on the urban landscape},
  author    = {Ambrus, Attila and Field, Erica and Gonzalez, Robert},
  journal   = {American Economic Review},
  volume    = {110},
  number    = {2},
  pages     = {475--525},
  year      = {2020},
  publisher = {American Economic Association 2014 Broadway, Suite 305, Nashville, TN 37203}
}

@article{angrist1995two,
  title     = {Two-stage least squares estimation of average causal effects in models with variable treatment intensity},
  author    = {Angrist, Joshua D and Imbens, Guido W},
  journal   = {Journal of the American Statistical Association},
  volume    = {90},
  number    = {430},
  pages     = {431--442},
  year      = {1995},
  publisher = {Taylor \& Francis}
}

@article{andrews2019identification,
  title     = {Identification-and singularity-robust inference for moment condition models},
  author    = {Andrews, Donald WK and Guggenberger, Patrik},
  journal   = {Quantitative Economics},
  volume    = {10},
  number    = {4},
  pages     = {1703--1746},
  year      = {2019},
  publisher = {Wiley Online Library}
}

@article{andrews2019weak,
  title     = {Weak instruments in instrumental variables regression: Theory and practice},
  author    = {Andrews, Isaiah and Stock, James H and Sun, Liyang},
  journal   = {Annual Review of Economics},
  volume    = {11},
  pages     = {727--753},
  year      = {2019},
  publisher = {Annual Reviews}
}

@article{andrews2005inference,
  title     = {Inference with weak instruments},
  author    = {Andrews, Donald WK and Stock, James H},
  journal   = {National Bureau of Economic Research Technical Working Papers},
  number    = {1530},
  year      = {2005},
  publisher = {National Bureau of Economic Research, Inc}
}

@article{andrews2006optimal,
  title     = {Optimal two-sided invariant similar tests for instrumental variables regression},
  author    = {Andrews, Donald WK and Moreira, Marcelo J and Stock, James H},
  journal   = {Econometrica},
  volume    = {74},
  number    = {3},
  pages     = {715--752},
  year      = {2006},
  publisher = {Wiley Online Library}
}

@article{angrist1991does,
  title     = {Does compulsory school attendance affect schooling and earnings?},
  author    = {Angrist, Joshua D and Krueger, Alan B},
  journal   = {The Quarterly Journal of Economics},
  volume    = {106},
  number    = {4},
  pages     = {979--1014},
  year      = {1991},
  publisher = {MIT Press}
}

@article{imbens1994identification,
  title     = {Identification and estimation of Local Average Treatment Effects},
  author    = {Imbens, Guido W and Angrist, Joshua D},
  journal   = {Econometrica},
  volume    = {62},
  number    = {2},
  pages     = {467--475},
  year      = {1994},
  publisher = {Econometric Society}
}

@book{bickel1993efficient,
  title     = {{``Efficient and adaptive estimation for semiparametric models''}},
  author    = {Bickel, Peter J and Klaassen, Chris AJ and Bickel, Peter J and Ritov, Ya’acov and Klaassen, J and Wellner, Jon A and Ritov, YA'Acov},
  volume    = {4},
  year      = {1993},
  publisher = {Johns Hopkins University Press Baltimore}
}

@article{belloni2014high,
  title   = {High-dimensional methods and inference on structural and treatment effects},
  author  = {Belloni, Alexandre and Chernozhukov, Victor and Hansen, Christian},
  journal = {Journal of Economic Perspectives},
  volume  = {28},
  number  = {2},
  pages   = {29--50},
  year    = {2014}
}

@article{belloni2015uniform,
  title     = {Uniform post-selection inference for least absolute deviation regression and other Z-estimation problems},
  author    = {Belloni, Alexandre and Chernozhukov, Victor and Kato, Kengo},
  journal   = {Biometrika},
  volume    = {102},
  number    = {1},
  pages     = {77--94},
  year      = {2015},
  publisher = {Oxford University Press}
}

@article{belloni2017program,
  title     = {Program evaluation and causal inference with high-dimensional data},
  author    = {Belloni, Alexandre and Chernozhukov, Victor and Fernandez-Val, Ivan and Hansen, Christian},
  journal   = {Econometrica},
  volume    = {85},
  number    = {1},
  pages     = {233--298},
  year      = {2017},
  publisher = {Wiley Online Library}
}

@article{belloni2018uniformly,
  title     = {Uniformly valid post-regularization confidence regions for many functional parameters in z-estimation framework},
  author    = {Belloni, Alexandre and Chernozhukov, Victor and Chetverikov, Denis and Wei, Ying},
  journal   = {Annals of Statistics},
  volume    = {46},
  number    = {6B},
  pages     = {3643},
  year      = {2018},
  publisher = {NIH Public Access}
}

@article{bickel2009simultaneous,
  title     = {Simultaneous analysis of Lasso and dantzig selector},
  author    = {Bickel, Peter J and Ritov, Ya'acov and Tsybakov, Alexandre B},
  journal   = {The Annals of Statistics},
  volume    = {37},
  number    = {4},
  pages     = {1705--1732},
  year      = {2009},
  publisher = {Citeseer}
}

@article{blandhol2022tsls,
  title   = {When is TSLS actually LATE?},
  author  = {Blandhol, Christine and Bonney, John and Mogstad, Magne and Torgovitsky, Alexander},
  journal = {National Bureau of Economic Research},
  number  = {2022-16},
  year    = {2022}
}

@article{bound1995problems,
  title     = {Problems with instrumental variables estimation when the correlation between the instruments and the endogenous explanatory variable is weak},
  author    = {Bound, John and Jaeger, David A and Baker, Regina M},
  journal   = {Journal of the American Statistical Association},
  volume    = {90},
  number    = {430},
  pages     = {443--450},
  year      = {1995},
  publisher = {Taylor \& Francis}
}

@article{belloni2014inference,
  title     = {Inference on treatment effects after selection among high-dimensional controls},
  author    = {Belloni, Alexandre and Chernozhukov, Victor and Hansen, Christian},
  journal   = {Review of Economic Studies},
  volume    = {81},
  number    = {2},
  pages     = {608--650},
  year      = {2014},
  publisher = {Oxford University Press}
}

@article{boot2024inference,
  title   = {Inference on LATEs with covariates},
  author  = {Boot, Tom and Nibbering, Didier},
  journal = {arXiv preprint arXiv:2402.12607},
  year    = {2024}
}

@article{CCDDHNR18,
  title     = {Double/debiased machine learning for treatment and structural parameters},
  author    = {Chernozhukov, Victor and Chetverikov, Denis and Demirer, Mert and Duflo, Esther and Hansen, Christian and Newey, Whitney and Robins, James},
  year      = {2018},
  journal   = {Econometrics Journal},
  volume    = {21},
  number    = {1},
  pages     = {C1--C68},
  publisher = {Oxford University Press}
}

@article{chernozhukov2013gaussian,
  title     = {Gaussian approximations and multiplier bootstrap for maxima of sums of high-dimensional random vectors},
  author    = {Chernozhukov, Victor and Chetverikov, Denis and Kato, Kengo},
  journal   = {The Annals of Statistics},
  volume    = {41},
  number    = {6},
  pages     = {2786--2819},
  year      = {2013},
  publisher = {Institute of Mathematical Statistics}
}

@article{chernozhukov2016empirical,
  title     = {Empirical and multiplier bootstraps for suprema of empirical processes of increasing complexity, and related Gaussian couplings},
  author    = {Chernozhukov, Victor and Chetverikov, Denis and Kato, Kengo},
  journal   = {Stochastic Processes and their Applications},
  volume    = {126},
  number    = {12},
  pages     = {3632--3651},
  year      = {2016},
  publisher = {Elsevier}
}

@article{chernozhukov2017central,
  title     = {Central limit theorems and bootstrap in high dimensions},
  author    = {Chernozhukov, Victor and Chetverikov, Denis and Kato, Kengo},
  journal   = {The Annals of Probability},
  volume    = {45},
  number    = {4},
  pages     = {2309--2352},
  year      = {2017},
  publisher = {Institute of Mathematical Statistics}
}

@article{chernozhukov2022locally,
  title     = {Locally robust semiparametric estimation},
  author    = {Chernozhukov, Victor and Escanciano, Juan Carlos and Ichimura, Hidehiko and Newey, Whitney K and Robins, James M},
  journal   = {Econometrica},
  volume    = {90},
  number    = {4},
  pages     = {1501--1535},
  year      = {2022},
  publisher = {Wiley Online Library}
}

@article{andrews2016conditional,
  title     = {Conditional inference with a functional nuisance parameter},
  author    = {Andrews, Isaiah and Mikusheva, Anna},
  journal   = {Econometrica},
  volume    = {84},
  number    = {4},
  pages     = {1571--1612},
  year      = {2016},
  publisher = {Wiley Online Library}
}

@article{dufour2003identification,
  title     = {Identification, weak instruments, and statistical inference in econometrics},
  author    = {Dufour, Jean-Marie},
  journal   = {Canadian Journal of Economics/Revue canadienne d'{\'e}conomique},
  volume    = {36},
  number    = {4},
  pages     = {767--808},
  year      = {2003},
  publisher = {Wiley Online Library}
}

@article{frolich2007nonparametric,
  title     = {Nonparametric IV estimation of local average treatment effects with covariates},
  author    = {Fr{\"o}lich, Markus},
  journal   = {Journal of Econometrics},
  volume    = {139},
  number    = {1},
  pages     = {35--75},
  year      = {2007},
  publisher = {Elsevier}
}

@article{hirano2000assessing,
  title     = {Assessing the effect of an influenza vaccine in an encouragement design},
  author    = {Hirano, Keisuke and Imbens, Guido W and Rubin, Donald B and Zhou, Xiao-Hua},
  journal   = {Biostatistics},
  volume    = {1},
  number    = {1},
  pages     = {69--88},
  year      = {2000},
  publisher = {Oxford University Press}
}

@article{hong2010semiparametric,
  title     = {Semiparametric efficiency in nonlinear LATE models},
  author    = {Hong, Han and Nekipelov, Denis},
  journal   = {Quantitative Economics},
  volume    = {1},
  number    = {2},
  pages     = {279--304},
  year      = {2010},
  publisher = {Wiley Online Library}
}

@article{hornung2015railroads,
  title     = {Railroads and growth in Prussia},
  author    = {Hornung, Erik},
  journal   = {Journal of the European Economic Association},
  volume    = {13},
  number    = {4},
  pages     = {699--736},
  year      = {2015},
  publisher = {Oxford University Press}
}

@article{imbens1997bayesian,
  title     = {Bayesian inference for causal effects in randomized experiments with noncompliance},
  author    = {Imbens, Guido W and Rubin, Donald B},
  journal   = {The Annals of Statistics},
  pages     = {305--327},
  year      = {1997},
  publisher = {JSTOR}
}

@article{kleibergen2002pivotal,
  title     = {Pivotal statistics for testing structural parameters in instrumental variables regression},
  author    = {Kleibergen, Frank},
  journal   = {Econometrica},
  volume    = {70},
  number    = {5},
  pages     = {1781--1803},
  year      = {2002},
  publisher = {Wiley Online Library}
}

@article{kleibergen2005testing,
  title     = {Testing parameters in GMM without assuming that they are identified},
  author    = {Kleibergen, Frank},
  journal   = {Econometrica},
  volume    = {73},
  number    = {4},
  pages     = {1103--1123},
  year      = {2005},
  publisher = {Wiley Online Library}
}

@article{kennedy2020sharp,
  title     = {Sharp instruments for classifying compliers and generalizing causal effects},
  author    = {Kennedy, Edward H and Balakrishnan, Sivaraman and G’Sell, Max},
  journal   = {The Annals of Statistics},
  volume    = {48},
  number    = {4},
  pages     = {2008--2030},
  year      = {2020},
  publisher = {JSTOR}
}

@article{lee2022valid,
  title   = {Valid t-ratio inference for IV},
  author  = {Lee, David S and McCrary, Justin and Moreira, Marcelo J and Porter, Jack},
  journal = {American Economic Review},
  volume  = {112},
  number  = {10},
  pages   = {3260--90},
  year    = {2022}
}

@article{matsushita2024jackknife,
  title     = {A jackknife Lagrange multiplier test with many weak instruments},
  author    = {Matsushita, Yukitoshi and Otsu, Taisuke},
  journal   = {Econometric Theory},
  volume    = {40},
  number    = {2},
  pages     = {447--470},
  year      = {2024},
  publisher = {Cambridge University Press}
}

@article{moreira2003conditional,
  title     = {A conditional likelihood ratio test for structural models},
  author    = {Moreira, Marcelo J},
  journal   = {Econometrica},
  volume    = {71},
  number    = {4},
  pages     = {1027--1048},
  year      = {2003},
  publisher = {Wiley Online Library}
}

@article{moreira2009tests,
  title     = {Tests with correct size when instruments can be arbitrarily weak},
  author    = {Moreira, Marcelo J},
  journal   = {Journal of Econometrics},
  volume    = {152},
  number    = {2},
  pages     = {131--140},
  year      = {2009},
  publisher = {Elsevier}
}

@article{moreira2019optimal,
  title     = {Optimal two-sided tests for instrumental variables regression with heteroskedastic and autocorrelated errors},
  author    = {Moreira, Humberto and Moreira, Marcelo J},
  journal   = {Journal of Econometrics},
  volume    = {213},
  number    = {2},
  pages     = {398--433},
  year      = {2019},
  publisher = {Elsevier}
}

@article{mikusheva2022inference,
  title     = {Inference with many weak instruments},
  author    = {Mikusheva, Anna and Sun, Liyang},
  journal   = {The Review of Economic Studies},
  volume    = {89},
  number    = {5},
  pages     = {2663--2686},
  year      = {2022},
  publisher = {Oxford University Press}
}

@article{mikusheva2024weak,
  title     = {Weak identification with many instruments},
  author    = {Mikusheva, Anna and Sun, Liyang},
  journal   = {The Econometrics Journal},
  volume    = {27},
  number    = {2},
  pages     = {C1--C28},
  year      = {2024},
  publisher = {Oxford University Press}
}

@article{newey1994asymptotic,
  title     = {The asymptotic variance of semiparametric estimators},
  author    = {Newey, Whitney K},
  journal   = {Econometrica: Journal of the Econometric Society},
  pages     = {1349--1382},
  year      = {1994},
  publisher = {JSTOR}
}

@article{ogburn2015doubly,
  title     = {Doubly robust estimation of the local average treatment effect curve},
  author    = {Ogburn, Elizabeth L and Rotnitzky, Andrea and Robins, James M},
  journal   = {Journal of the Royal Statistical Society. Series B, Statistical methodology},
  volume    = {77},
  number    = {2},
  pages     = {373},
  year      = {2015},
  publisher = {NIH Public Access}
}

@article{oprescu2019orthogonal,
  title        = {Orthogonal random forest for causal inference},
  author       = {Oprescu, Miruna and Syrgkanis, Vasilis and Wu, Zhiwei Steven},
  journal      = {International Conference on Machine Learning},
  pages        = {4932--4941},
  year         = {2019},
  organization = {PMLR}
}

@article{pfanzagl1985contributions,
  title     = {Contributions to a general asymptotic statistical theory},
  author    = {Pfanzagl, Johann and Wefelmeyer, Wolfgang},
  journal   = {Statistics \& Risk Modeling},
  volume    = {3},
  number    = {3-4},
  pages     = {379--388},
  year      = {1985},
  publisher = {De Gruyter (A)}
}

@article{staiger1997instrumental,
  title     = {Instrumental variables regression with weak instruments},
  author    = {Staiger, Douglas and Stock, James H},
  journal   = {Econometrica},
  volume    = {65},
  number    = {3},
  pages     = {557--586},
  year      = {1997},
  publisher = {Econometric Society}
}

@article{stock2000gmm,
  title     = {GMM with weak identification},
  author    = {Stock, James H and Wright, Jonathan H},
  journal   = {Econometrica},
  volume    = {68},
  number    = {5},
  pages     = {1055--1096},
  year      = {2000},
  publisher = {Wiley Online Library}
}

@article{stock2002survey,
  title     = {A survey of weak instruments and weak identification in generalized method of moments},
  author    = {Stock, James H and Wright, Jonathan H and Yogo, Motohiro},
  journal   = {Journal of Business \& Economic Statistics},
  volume    = {20},
  number    = {4},
  pages     = {518--529},
  year      = {2002},
  publisher = {Taylor \& Francis}
}

@article{qiu2021inference,
  title     = {Inference of heterogeneous treatment effects using observational data with high-dimensional covariates},
  author    = {Qiu, Yumou and Tao, Jing and Zhou, Xiao-Hua},
  journal   = {Journal of the Royal Statistical Society Series B: Statistical Methodology},
  volume    = {83},
  number    = {5},
  pages     = {1016--1043},
  year      = {2021},
  publisher = {Oxford University Press}
}

@article{sun2022high,
  title     = {High-dimensional model-assisted inference for local average treatment effects with instrumental variables},
  author    = {Sun, Baoluo and Tan, Zhiqiang},
  journal   = {Journal of Business \& Economic Statistics},
  volume    = {40},
  number    = {4},
  pages     = {1732--1744},
  year      = {2022},
  publisher = {Taylor \& Francis}
}

@article{tan2006regression,
  title     = {Regression and weighting methods for causal inference using instrumental variables},
  author    = {Tan, Zhiqiang},
  journal   = {Journal of the American Statistical Association},
  volume    = {101},
  number    = {476},
  pages     = {1607--1618},
  year      = {2006},
  publisher = {Taylor \& Francis}
}

@article{van1996weak,
  title   = {Weak convergence and empirical processes: With applications to statistics},
  author  = {van der Vaart, Aad and Wellner, Jon A},
  pages   = {16--28},
  year    = {1996},
  journal = {Springer Science \& Business Media}
}

@article{yau2001inference,
  title     = {Inference for the complier-average causal effect from longitudinal data subject to noncompliance and missing data, with application to a job training assessment for the unemployed},
  author    = {Yau, Linda HY and Little, Roderick J},
  journal   = {Journal of the American Statistical Association},
  volume    = {96},
  number    = {456},
  pages     = {1232--1244},
  year      = {2001},
  publisher = {Taylor \& Francis}
}
\end{document}